\theoremstyle{definition}
\newtheorem{definition}{Definition}[section]
\theoremstyle{plain}
\newtheorem{theorem}{Theorem}[section]
\newtheorem{corollary}[theorem]{Corollary}
\newtheorem{lemma}[theorem]{Lemma}
\newtheorem{proposition}[theorem]{Proposition}
\theoremstyle{remark}
\newtheorem{remark}{Remark}[section]
\newcommand{\bbE}{\mathbb{E}}
\newcommand{\bbP}{\mathbb{P}}
\newcommand{\bbR}{\mathbb{R}}
\newcommand{\bbZ}{\mathbb{Z}}
\newcommand{\calE}{\mathcal{E}}
\newcommand{\calN}{\mathcal{N}}
\newcommand{\calW}{\mathcal{W}}
\newcommand{\calZ}{\mathcal{Z}}
\newcommand{\fp}{\mathfrak{p}}
\newcommand{\zv}{\boldsymbol{0}}
\newcommand{\HS}{\mathcal{HS}}
\newcommand{\Z}{\mathbb{Z}}
\newcommand{\corr}{\operatorname{corr}}
\newcommand{\Cov}{\operatorname{Cov}}
\newcommand{\Pois}{\operatorname{Pois}}
\newcommand{\Var}{\operatorname{Var}}
\newcommand{\wt}{\operatorname{wt}}
\newcommand{\defn}[1]{{\bf \textcolor{blue}{#1}}}
\begin{document}

\title{Gaussian Broadcast on Grids}
\author{Pakawut Jiradilok and Elchanan Mossel}
\date{\today}

\address{Department of Mathematics, Massachusetts Institute of Technology, Cambridge, MA 02139}
\email[P.~Jiradilok]{pakawut@mit.edu}

\address{Department of Mathematics, Massachusetts Institute of Technology, Cambridge, MA 02139}
\email[E.~Mossel]{elmos@mit.edu}

\begin{abstract}
Motivated by the classical work 
on finite noisy automata (Gray 1982, G\'{a}cs 2001, Gray 2001) and by the recent work on broadcasting on grids (Makur, Mossel, and Polyanskiy 2022), we 
introduce Gaussian variants of these models. 
These models are defined on graded (layered) posets. 
At time $0$, all the nodes at layer $0$ begin with the same random variable $X_0$. At time $k\geq 1$, each node in the $k^{\text{th}}$ layer computes a linear combination 
of its inputs at layer $k-1$ with independent Gaussian noise added. 
The main question is: when is it possible to recover the original signal $X_0$ with non-vanishing correlation? We consider different notions of recovery including recovery from a single node, recovery from a bounded window (with convex or general) coefficients, and recovery from an unbounded window (with convex or general) coefficients. 

Our main interest is in two models defined on grids: 
\begin{itemize}
\item 
The infinite model. Here 
layer $k$ is given by the vertices of $\Z^{d+1}$ whose sum of coordinates is $k$ and for a vertex $v$ at layer $k \geq 1$, we let 
$X_v = \alpha \sum_{u \in \fp(v)} (X_u + W_{u, v})$, where $\fp(v)$ is the set of vertices $u$ in 
layer $k-1$ that differ from $v$ exactly in one coordinate, and $W_{u,v}$ are independent standard normal. 

We show that when $\alpha < \frac{1}{d+1}$, the correlation between $X_v$ and $X_0$ decays exponentially to $0$ as $|v| \to \infty$, and when $\alpha > \frac{1}{d+1}$, the correlation is bounded away from $0$. The critical case when $\alpha = \frac{1}{d+1}$ exhibits a phase transition in dimension, where $X_v$ has non-vanishing correlation with $X_0$ if and only if $d \geq 3$. The same results hold for any bounded window. 

\item 
The finite model. Here layer $k$ is given by the vertices of $\Z^{d+1}$ with nonnegative coordinates whose sum is $k$. In this case, we identify the sub-critical and the super-critical regimes. In the sub-critical regime, the correlation decays to zero for unbounded windows. In the super-critical regime, there exists for every $t$ a convex combination of $X_u$ (across the vertices $u$ on layer $t$) whose correlation is bounded away from $0$. Interestingly, we find that for the critical parameters, the correlation is vanishing in all dimensions and for unbounded window sizes.  
\end{itemize}

Our results may elucidate the existing results on phase transitions in noisy finite automata and on broadcast models on grids. 

In particular, the classical finite automata models and our analogous critical infinite models exhibit the same phase transition in dimension. Moreover, our critical finite models may shed light on the broadcast model on grids, as our results establish that the correlation vanishes at all dimensions.
\end{abstract}

\maketitle

\newpage

\section{Introduction}\label{s:intro}
In this paper we introduce and study Gaussian versions of the classical noisy finite automaton model (see e.g.~\cite{Gra01}) and of the more recent broadcasting on grids model~\cite{MaMoPo:22}.  

The main question on cellular automaton is: when is it possible for noisy local computation to remember information indefinitely. 
There are many examples from statistical physics in dimension $2$ or more when local noisy computation remembers information indefinitely. Perhaps the simplest example is Glauber dynamics in two dimensions, where at low temperatures there are two different stationary measures, i.e., the plus and minus measures (see e.g.~\cite{Martinelli:99}). 

Gray made a very strong conjecture (see e.g.~\cite{Gra01}) that such memorization cannot hold in one dimension. Groundbreaking work by Gacs~\cite{Gac01} showed that the strong conjecture by Gray was in fact incorrect and that memorization is possible in one dimension. However, crucially the construction in~\cite{Gac01} required the update of each node to depend on a large but bounded neighborhood and more importantly in a non-monotone fashion. Earlier results of Gray~\cite{Gra82} showed that if the interaction is both monotone and nearest neighbor then memorization is not possible in one dimension (in continuous time).  

A new variant of the problem was introduced in~\cite{MaMoPo:22} where it is called {\em broadcasting on grids}. In the new variant, information is propagated
from $(0,0)$ on the positive quadrant in $\Z^2$ where each node $v$ computes a noisy function of its parent set, i.e., those vectors that have exactly one coordinate smaller than that of $v$. The authors of~\cite{MaMoPo:22} showed that in the new variant, there is no memorization. 
They conjectured that memorization is possible in $d \geq 3$ dimensions.

The models above all consider broadcast of discrete signals with discrete noise channels (i.e., the binary symmetric channel). It is often easier to study Gaussian models and our goal in this paper is to introduce and analyze such models. We are able to analyze the Gaussian models precisely. We hope this will allow to shed light on difficult open problems remaining in~\cite{Gac01,Gra01,MaMoPo:22}.

In our Gaussian models, nodes compute averages of the values of their parents to which independent Gaussian noise is added. This is perhaps the simplest and most straightforward analogue of the models above. 
We ask: when does memorization hold for such models?
\begin{itemize}
\item 
{\em The infinite model}. This model is the analogue of cellular automata on $\Z^d$~\cite{Gra82,Gra01}. In this model, layer $k$ is given by the vertices of $\Z^{d+1}$ whose sum of coordinates is $k$ and for a vertex $v$ at layer $k \geq 1$, we let 
$X_v = \frac{1}{d+1} \sum_{u \in \fp(v)} (X_u + W_{u,v})$, where $\fp(v)$ is the set of vertices $u$ in 
layer $k-1$ that differ from $v$ exactly in one coordinate, and $W_{u,v}$ are independent standard normal. 
Here we prove analogous results to what is known for monotone automata on $\Z^d$, i.e., memorization holds 
 if and only if $d \geq 3$.  

\item 
{\em The finite model}. This is the analogue of the broadcast model of~\cite{MaMoPo:22}. 
Here layer $k$ is given by the vertices of $\Z^{d+1}$ with nonnegative coordinates whose sum is $k$. Again, each node averages its parents with an addition of independent Gaussian noise. 
Here we show that for all dimensions there is no memorization. For $d=2$ this is analogous to the results of~\cite{MaMoPo:22}. However it is not analogous to the conjecture of~\cite{MaMoPo:22} that in their broadcast model memorization is possible if $d \geq 3$. 
\end{itemize}
The Gaussian model actually allows for update rules of the form 
\[
X_v = \alpha_{|\fp(v)|} \sum_{u \in \fp(v)} \left(X_u + W_{u,v}\right),
\]
where $\fp(v)$ is the set of vertices $u$ in 
layer $k-1$ that differ from $v$ exactly in one coordinate and we analyze the behavior of all such rules. The most interesting case is the critical case where $\alpha_{|\fp(v)|}$ is $1/|\fp(v)|$ for every $v$. 
However, our results cover all possible functions of $\alpha$.

\subsection{Other related models and results}
We may consider similar processes defined on rooted trees. This is done in the Gaussian case in~\cite{MoRoSl:13} which establishes the Kesten--Stigum (KS) bound as the critical threshold for the process. 
This is in contrast to discrete channels where the KS bound is only the threshold in some specific cases (see
~\cite{BlRuZa:95,Ioffe:96a,Ioffe:96b,Mossel:98,Mossel:01,PemantlePeres:10,Sly:09,Sly:09a,JansonMossel:04,MoSlSo:23,Mossel:23}).

The problems considered here are closely related to the problem of noisy computation~\cite{vonNeumann:56,EvansSchulman:99}. Indeed it can be thought of in the following way: suppose we want to remember a bit in a noisy circuit of depth $k$. How big should the circuit be? Von Neumann~\cite{vonNeumann:56} asked this question assuming we take multiple clones of the original bit and recursively apply gates in order to reduce the noise. The broadcasting model 
on trees perspective is to start from a single bit and repeatedly clone it so that one can recover it well from the nodes at depth $k$. The model we consider here can be viewed as a Gaussian version of this model. 

\bigskip

\section{Formal Definitions and Main Results}\label{s:defns-main-results}
\subsection{Preliminaries}
\subsubsection{Poset Terminologies}\label{subsubsec:poset-terms}
We begin by recalling a few useful poset terminologies. For a standard reference on this subject, we refer to \cite[Ch.~3]{Sta12}.

Let $(P,\le)$ be a poset. An element $u \in P$ is said to be \defn{minimal} if no element $v \in P$ satisfies $u > v$. For two poset elements $u, v \in P$, we say that $u$ \defn{is covered by} $v$ (or $v$ \defn{covers} $u$) if $u < v$ and no element $w \in P$ satisfies $u < w < v$. Following \cite{Sta12}, we denote this covering relation by $u \lessdot v$. A \defn{chain} $C$ in $P$ is an induced subposet of $P$ such that for any $u, v \in C$, we have either $u \le v$ or $u \ge v$ (i.e., any two elements in $C$ are comparable).

A \defn{maximal chain} is a chain which is not properly contained in another chain. A \defn{saturated chain} is a chain $C$ in the poset $P$ for which there do not exist $u,v \in C$ and $w \in P \setminus C$ such that $u < w < v$ and such that $C \cup \{w\}$ is a chain. For example, in the poset $P = \{00,01,10,11\}$ with four covering relations (i) $00 \lessdot 01$, (ii) $00 \lessdot 10$, (iii) $01 \lessdot 11$, and (iv) $10 \lessdot 11$, the chain $C_1 = \{10 \lessdot 11\}$ is saturated but not maximal, while the chain $C_2 = \{00 \lessdot 10 \lessdot 11\}$ is maximal (and hence saturated).

The poset $P$ is said to be \defn{graded} if $P$ can be decomposed into a disjoint union
\[
P = P_0 \uplus P_1 \uplus P_2 \uplus \cdots
\]
such that every maximal chain in $P$ is of the form
\[
u_0 \lessdot u_1 \lessdot u_2 \lessdot \cdots,
\]
where for each $i \ge 0$, $u_i \in P_i$. When $P$ is graded, we define the \defn{rank function} $\rho: P \to \mathbb{Z}_{\ge 0}$ so that $\rho(u) = i$ if $u \in P_i$, and in this case we say that the \defn{rank} of $u$ is $i$.

\subsubsection{Model and Definition of Reconstruction}

Let $d \in \mathbb{Z}_{\ge 0}$ be a nonnegative integer. Let $P$ be an infinite graded poset in which every element covers at most $d+1$ elements and is covered by at least one element.  Let $\alpha_1, \ldots, \alpha_{d+1}, \varepsilon, \mu_0, \sigma_0^2$ be real numbers such that $\alpha_1, \ldots, \alpha_{d+1}, \varepsilon, \sigma_0^2 > 0$. We let $X_0 \sim \calN(\mu_0, \sigma_0^2)$, and introduce a fresh (independent from $X_0$) set of i.i.d. standard Gaussian $W_{u,v} \sim \calN(0,1)$, indexed by pairs of elements $u, v \in P$ such that $u \lessdot v$. Throughout this paper, we may also use the notation $W_{u \to v}$ for the random variable $W_{u,v}$.

We associate to each element $v \in P$ a random variable $X_v$, defined inductively on the rank of $v$ as follows. For the elements $v \in P$ of rank $0$ (namely, the minimal elements of $P$), we let $X_v = X_0$.

For each element $v \in P$, let $\fp(v)$ denote the set of elements of $P$ covered by $v$. Note that since $P$ is graded, the set $\fp(v)$ is empty if and only if the rank of $v$ is $0$. For every element $v \in P$ with rank at least $1$, define $X_v$ by the \defn{model recurrence}
\begin{equation}\label{eq:model-rec}
X_v := \alpha_{|\fp(v)|} \sum_{u \in \fp(v)} \left( X_u + \varepsilon \cdot W_{u \to v} \right).
\end{equation}

By gradedness, the poset $P$ can be decomposed into \defn{layers} according to rank as
\[
P = L_0 \uplus L_1 \uplus L_2 \uplus \cdots,
\]
where the $t^{\text{th}}$ layer $L_t$ contains all the elements of $P$ of rank $t$.

\begin{definition}
A real-valued function $c: P \to \mathbb{R}$ is said to \defn{have finite support on each layer} if for each $t \ge 0$, the restriction
\[
c|_{L_t}: L_t \to \mathbb{R}
\]
has finite support. In other words, for each $t \ge 0$, we have that $c(u) = 0$ for all but finitely many $u \in L_t$.
\end{definition}

\begin{definition}\label{def:rec}
We say that \defn{reconstruction is possible} if there exists a function
\begin{align*}
& c: P \to \mathbb{R}, \\
& c: u \mapsto c_u,
\end{align*}
which has finite support on each layer such that the sequence $\{\zeta_t\}_{t=0}^{\infty}$ of random variables given by
\[
\zeta_t := \sum_{u \in L_t} c_u X_u,
\]
satisfies
\begin{itemize}
    \item $\forall t \in \mathbb{Z}_{\ge 0}, \Var(\zeta_t) > 0$, and
    \item the correlation
    \[
    \corr(\zeta_t, X_0) := \frac{\Cov(\zeta_t, X_0)}{\sqrt{\Var(\zeta_t) \Var(X_0)}}
    \]
    does not converge to $0$ as $t \to \infty$.
\end{itemize}

Furthermore, if the function $c: P \to \mathbb{R}$ can be chosen to have only nonnegative values, then we say \defn{convex reconstruction is possible}.
\end{definition}

Whether reconstruction is possible or not is a function of the poset and the parameters. Thus we also say that reconstruction is possible \defn{for the poset} $P$ \defn{and the parameter tuple} $(\alpha_1, \ldots, \alpha_{d+1}, \varepsilon, \mu_0, \sigma_0^2)$.

In this paper, we are particularly interested in posets whose elements are on the grid; namely, as a set, $P \subseteq \mathbb{Z}^{d+1}$ for some $d \ge 0$. For these posets, we give the following definitions.

\begin{definition}
A \defn{window} $\calW$ of width $N$ in $P$ is a subset of $P$ of the form
\[
\calW := \big( [x_1, x_1 + N) \times [x_2, x_2 + N) \times \cdots \times [x_{d+1}, x_{d+1} + N) \big) \cap P,
\]
for some integers $x_1, \ldots, x_{d+1}$.
\end{definition}

\begin{definition}
We say that \defn{local reconstruction is possible} if there exist a finite number $N$, a sequence $\{\calW_t\}_{t=0}^{\infty}$ of windows of width $N$ in $P$, and a function
\begin{align*}
& c: P \to \mathbb{R}, \\
& c: u \mapsto c_u,
\end{align*}
such that for each $t \ge 0$, the restriction
\[
c|_{L_t}: L_t \to \mathbb{R}
\]
is supported on $\calW_t$, and such that the sequence $\{\zeta_t\}_{t=0}^{\infty}$ of random variables given by
\[
\zeta_t := \sum_{u \in L_t} c_u X_u,
\]
satisfies
\begin{itemize}
    \item $\forall t \in \mathbb{Z}_{\ge 0}, \Var(\zeta_t) > 0$, and
    \item the correlation
    \[
    \corr(\zeta_t, X_0) := \frac{\Cov(\zeta_t, X_0)}{\sqrt{\Var(\zeta_t) \Var(X_0)}}
    \]
    does not converge to $0$ as $t \to \infty$.
\end{itemize}

In addition,
\begin{itemize}
    \item If the function $c: P \to \mathbb{R}$ can be chosen to have only nonnegative values, then we say \defn{local convex reconstruction is possible}.
    \item If $N = 1$, then we say \defn{single-vertex reconstruction is possible}.
\end{itemize}
\end{definition}

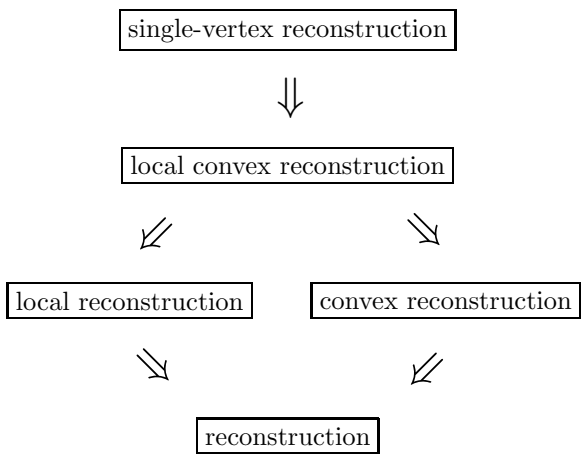
\begin{figure}
\begin{center}
\begin{tikzpicture}
\usetikzlibrary{arrows}
\begin{scope}[scale = 1.0, shift={(0,5.4)}]
\node at (0,0) {\boxed{\text{single-vertex reconstruction}}};
\end{scope}

\begin{scope}[scale = 1.0, shift={(0,4.5)}]
\node at (0,0) {\rotatebox{270}{\huge $\Rightarrow$}};
\end{scope}

\begin{scope}[scale = 1.0, shift={(0,3.6)}]
\node at (0,0) {\boxed{\text{local convex reconstruction}}};
\end{scope}

\begin{scope}[scale = 1.0, shift={(-1.8,2.7)}]
\node at (0,0) {\rotatebox{225}{\huge $\Rightarrow$}};
\end{scope}

\begin{scope}[scale = 1.0, shift={(1.8,2.7)}]
\node at (0,0) {\rotatebox{315}{\huge $\Rightarrow$}};
\end{scope}

\begin{scope}[scale = 1.0, shift={(-2.1,1.8)}]
\node at (0,0) {\boxed{\text{local reconstruction}}};
\end{scope}

\begin{scope}[scale = 1.0, shift={(2.1,1.8)}]
\node at (0,0) {\boxed{\text{convex reconstruction}}};
\end{scope}

\begin{scope}[scale = 1.0, shift={(-1.8,0.9)}]
\node at (0,0) {\rotatebox{315}{\huge $\Rightarrow$}};
\end{scope}

\begin{scope}[scale = 1.0, shift={(1.8,0.9)}]
\node at (0,0) {\rotatebox{225}{\huge $\Rightarrow$}};
\end{scope}

\begin{scope}[scale = 1.0, shift={(0,0)}]
\node at (0,0) {\boxed{\text{reconstruction}}};
\end{scope}
\end{tikzpicture}
\end{center}
\caption{Poset of relations.}\label{fig:poset-of-relations}
\end{figure}

In Figure~\ref{fig:poset-of-relations}, we show the poset of relations from the five definitions of reconstruction. In the diagram, $\boxed{A} \Rightarrow \boxed{B}$ means that if $A$ is possible, then $B$ is possible.

\subsection{Main Results}
In this subsection, we describe our main results. In Section~\ref{s:para-reduct} below, we show that whether each type of reconstruction is possible does not depend on the parameters $\varepsilon, \mu_0, \sigma_0^2$. Thus, let us assume for the rest of this section that $\varepsilon = 1$, $\mu_0 = 0$, and $\sigma_0^2 = 1$.

\subsubsection{The Finite Model}

We begin with the \defn{finite model} in which the poset is $P = \mathbb{Z}_{\ge 0}^{d+1}$ with the partial order given by $u \ge v$ if and only if $u - v \in \mathbb{Z}_{\ge 0}^{d+1}$, where $u-v$ denotes the entrywise subtraction. The elements in the $t^{\text{th}}$ layer $L_t$ are exactly those whose sums of entries are exactly $t$. In particular, there exists a unique minimal element $\zv \in L_0$.

\begin{theorem}\label{thm:rec-orthant}
Let $d \ge 0$. When $P = \mathbb{Z}_{\ge 0}^{d+1}$, we have the following.
\begin{itemize}
    \item[(a)] Convex reconstruction is possible if there exists $i \in [d+1]$ such that $\alpha_i > 1/i$.
    \item[(b)] Reconstruction is not possible if for all $i \in [d+1]$, $\alpha_i \le 1/i$.
\end{itemize}
\end{theorem}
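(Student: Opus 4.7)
The plan handles parts (a) and (b) separately, after reducing to $\varepsilon=1$, $\mu_0=0$, $\sigma_0^2=1$ via the paper's parameter reduction from Section~\ref{s:para-reduct}.

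\textbf{Part (a).} If $\alpha_1>1$, single-vertex reconstruction along the axis $v_t:=(t,0,\ldots,0)$ already works: each $v_s$ has the unique parent $v_{s-1}$, so the recurrence collapses to an AR(1) $X_{v_t}=\alpha_1(X_{v_{t-1}}+W_{v_{t-1},v_t})$. A direct computation gives $\Cov(X_{v_t},X_0)=\alpha_1^t$ and $\Var(X_{v_t})=\alpha_1^{2t}+\alpha_1^2(\alpha_1^{2t}-1)/(\alpha_1^2-1)$, so $\corr^2\to(\alpha_1^2-1)/(2\alpha_1^2-1)>0$.

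If instead $\alpha_i>1/i$ for some $i\ge 2$, I restrict attention to the sub-poset $P_i:=\{v\in\mathbb{Z}_{\ge 0}^{d+1}:v_{i+1}=\cdots=v_{d+1}=0\}\cong\mathbb{Z}_{\ge 0}^i$, which is closed under taking parents, so the restricted process is self-contained. Take the uniform convex combination $\zeta_t:=\sum_{v\in L_t\cap P_i}X_v$. Expanding $\zeta_{t+1}$ through the model recurrence gives
\[
\zeta_{t+1}=\sum_{u\in L_t\cap P_i}\bigl[k_u\alpha_{k_u}+(i-k_u)\alpha_{k_u+1}\bigr]X_u+N_{t+1},
\]
where $k_u$ is the number of nonzero coordinates of $u$ and $N_{t+1}$ is noise independent of $(X_u)_{u\le t}$. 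On the interior stratum ($k_u=i$) the coefficient is $i\alpha_i>1$, while boundary strata have cardinality $O(t^{i-2})$ compared with $O(t^{i-1})$ for the interior and therefore contribute only subleading terms. Solving the resulting linear recurrences for $B_t:=\Cov(\zeta_t,X_0)$ and $V_t:=\Var(\zeta_t)$ by Perron-type dominant-eigenvalue analysis shows $B_t\sim C_1(i\alpha_i)^t$ and $V_t\sim C_2(i\alpha_i)^{2t}$, so $\corr(\zeta_t,X_0)\to C_1/\sqrt{C_2}>0$.

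\textbf{Part (b).} Fix any $c$ with finite support on each layer and decompose $\zeta_t=B X_0+N_t$ with $B:=\sum_u c_u A_u$ (where $A_u$ is the $X_0$-coefficient in $X_u$) and $N_t$ independent of $X_0$, so $\corr^2(\zeta_t,X_0)=B^2/(B^2+\Var(N_t))$ and it suffices to show $\Var(N_t)/B^2\to\infty$. Define the backward coefficients $\gamma$ by $\gamma_u:=c_u$ for $u\in L_t$ and $\gamma_w:=\sum_{w':w\lessdot w'}\alpha_{|\fp(w')|}\gamma_{w'}$ at lower layers; then $B=\gamma_0$ and $\Var(N_t)=\sum_{w\ne 0}|\fp(w)|\alpha_{|\fp(w)|}^2\gamma_w^2$. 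The hypothesis $\alpha_k\le 1/k$ gives $|\fp(w)|\alpha_{|\fp(w)|}\le 1$, so $\sum_w|\gamma_w|$ is non-increasing as one descends layers (and exactly conserved in the critical case $\alpha_k=1/k$, where moreover $A_v\equiv 1$).

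To produce the divergence $\Var(N_t)/B^2\to\infty$ the plan is to exploit the axis strata: each axis vertex $(0,\ldots,s,\ldots,0)$ has $|\fp|=1$ and so contributes $\gamma_w^2$ undamped (no $1/|\fp|$ suppression) to $\Var(N_t)$. The backward $\gamma$-recurrence forces $\gamma$-mass to pile up on successively lower-dimensional boundary strata as one descends layers: iterating the recurrence outward along an axis expresses axis $\gamma$'s as telescoping sums involving $2$-nonzero stratum $\gamma$'s, which themselves telescope against the $3$-nonzero stratum, and so on down through the stratification of $\mathbb{Z}_{\ge 0}^{d+1}$. A stratum-by-stratum Cauchy-Schwarz, combined with the layer-by-layer $\ell^1$ non-increase of $\sum_w|\gamma_w|$, yields a lower bound of the form $\Var(N_t)\ge c_d B^2 t^{\beta(d)}$ with $\beta(d)>0$. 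The main obstacle is making this axis accumulation rigorous uniformly over all $c$: the naive Cauchy-Schwarz bound $\sum_{w\in L_s}\gamma_w^2/|\fp(w)|\ge B^2/((d+1)|L_s|)$ only gives a divergent series $\sum_s 1/|L_s|\sim\sum_s s^{-d}$ for $d\le 1$, so the axis-based refinement is what carries the argument through to arbitrary dimension.
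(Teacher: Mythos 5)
Your part~(a) begins the same way the paper does---reduce to a single active dimension (your restriction to the sub-poset $\mathbb{Z}_{\ge 0}^i$ is the mirror image of Lemma~\ref{l:d-d'})---but your choice of the \emph{uniform} combination creates a gap that the paper's construction is specifically designed to avoid. Pushing $\zeta_{t+1}$ down one layer produces the coefficients $k_u\alpha_{k_u}+(i-k_u)\alpha_{k_u+1}$, which are constant on strata but not across them, so neither $B_t$ nor $V_t$ satisfies a closed linear recurrence: you must track how covariance and variance mass is distributed over the strata. Moreover the boundary coefficient can \emph{exceed} the interior value $i\alpha_i$ (take $i=2$ and $\alpha_1$ close to $2\alpha_2$: the axis coefficient $\alpha_1+\alpha_2$ is close to $3\alpha_2>2\alpha_2$), so comparing cardinalities $O(t^{i-2})$ versus $O(t^{i-1})$ does not by itself make the boundary subleading; one needs to show the $p(\zv\to\cdot)$ mass does not concentrate there. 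The paper sidesteps all of this by taking $\calZ_t=\sum_{u\in L_t}\kappa_{|\fp(u)|}X_u$ with $\kappa$ as in \eqref{eq:kappa-i}, chosen so that $\sum_{u\gtrdot v}\kappa_{|\fp(u)|}\alpha_{|\fp(u)|}=(d+1)\alpha_{d+1}\,\kappa_{|\fp(v)|}$ holds \emph{exactly}; then $\calZ_{t+1}=(d+1)\alpha_{d+1}\calZ_t+\xi_t$ with $\Var(\xi_t)\ll t^d$, and the normalized variance converges. Your route could likely be completed by a triangular transfer-matrix analysis (after choosing $i$ to maximize $i\alpha_i$), but as written the decisive step is asserted rather than proved.

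For part~(b) your setup is identical to the paper's (backward coefficients, the variance identity $\Var(\zeta)=\gamma_{\zv}^2+\sum_w|\fp(w)|\alpha_{|\fp(w)|}^2\gamma_w^2$, the layerwise $\ell^1$ monotonicity, and the correct observation that whole-layer Cauchy--Schwarz only works for $d\le 1$, so one must show the $\ell^1$ mass accumulates on the axes near the bottom). But the entire difficulty of the proof lives inside the step you summarize as ``telescoping\ldots\ stratum-by-stratum Cauchy--Schwarz.'' To bound the mass on the $\ge 2$-positive-entry strata at a low layer, the paper compares a stratum at layer $\lambda_j$ against the same stratum at a much higher layer $\lambda_{j+1}$; the Cauchy--Schwarz against squared multinomial coefficients (Proposition~\ref{prop:i-j-1-p-u-v} together with Lemma~\ref{lem:i-2-l-j-i}) incurs a loss of order $(j+1)^{\lambda_j}$, which is only absorbed by $\lambda_{j+1}^{j/4}$ when the scales are iterated-exponentially separated. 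This is what forces $\lambda_j=\lfloor\log^{(2d+2-2j)}(t)\rfloor$ and yields only $\Var/B^2\gtrsim\log^{(2d+1)}(t)$. Your claimed conclusion $\Var(N_t)\ge c_dB^2t^{\beta(d)}$ is therefore not delivered by the method you describe: the paper obtains a polynomial rate only for \emph{convex} coefficients (via a separate Poisson-clock argument) and explicitly leaves the general rate open. Since any divergence of $\Var/B^2$ suffices for the theorem, your outline points in the right direction, but the quantitative core---which scales to compare and why the exponential loss is affordable---is missing. (A small additional slip: $\sum_w|\gamma_w|$ is not exactly conserved in the critical case; only $\sum_w\gamma_w$ is, since cancellation can strictly decrease the absolute sum.)
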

We prove Theorem~\ref{thm:rec-orthant} in Section~\ref{s:orthant}.

To motivate our next main result, let us consider the quantity
\[
S_t := \sup \frac{\Cov(\zeta, X_0)}{\sqrt{\Var(\zeta)}},
\]
where the supremum is over all choices of
\[
\zeta = \sum_{u \in L_t} c_u X_u
\]
such that (i) the coefficients $c_u$ are not simultaneously zero, and (ii) all but finitely many coefficients $c_u$ are zero.

By definition, reconstruction is not possible if and only if converges to $0$ as $t \to \infty$. When reconstruction is not possible, it is natural to ask about the {\em rate} at which $S_t$ goes to $0$.

Let us refer to the case where $\alpha_i = 1/i$ for every $i \in [d+1]$ as the \defn{critical case}. We find that computations in the critical case when $d = 1$ are particularly nice. Our following result shows the exact formula for $S_t$ in this case.

\begin{theorem}\label{thm:exact-formula}
When $P = \mathbb{Z}_{\ge 0}^2$ and $(\alpha_1, \alpha_2) = (1,1/2)$, the random variable
\[
\widehat{\zeta} := \binom{t}{0} X_{(0,t)} + \binom{t}{1} X_{(1,t-1)} + \cdots + \binom{t}{t} X_{(t,0)}
\]
is a maximizer of the function
\[
\zeta \mapsto \frac{\Cov(\zeta, X_0)}{\sqrt{\Var(\zeta)}}
\]
among all nonzero linear combinations of $X_u$ with $u \in L_t$. Moreover, we have the exact formula
\[
\frac{\Cov(\widehat{\zeta}, X_0)}{\sqrt{\Var(\widehat{\zeta})}} = \frac{1}{\sqrt{\frac{t}{4^t}\binom{2t}{t} + 1}}.
\]
\end{theorem}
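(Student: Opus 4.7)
The plan is to realize the problem as a standard Gaussian projection. A short induction on rank shows that in the decomposition $X_v = a_v X_0 + \xi_v$ (with $\xi_v$ depending only on the noise $W_{u,u'}$), the coefficient satisfies $a_v \equiv 1$: we have $a_\zv = 1$, and the averaging rule $a_v = \frac{1}{|\fp(v)|} \sum_{u \in \fp(v)} a_u$ induced by $(\alpha_1, \alpha_2) = (1, 1/2)$ preserves the value $1$. Hence $\Cov(X_v, X_0) = 1$ for every $v \in L_t$, so for any $\zeta = \sum_{u \in L_t} c_u X_u$ we have $\Cov(\zeta, X_0) = \mathbf{1}^T c$ and $\Var(\zeta) = c^T K^{(t)} c$ with $K^{(t)}_{ij} := \Cov(X_{(i,t-i)}, X_{(j,t-j)})$. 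The generalized Rayleigh quotient $\mathbf{1}^T c / \sqrt{c^T K^{(t)} c}$ is maximized at $c^* \propto (K^{(t)})^{-1}\mathbf{1}$. The theorem therefore reduces to two claims: \emph{(A)} $K^{(t)} b = \lambda \mathbf{1}$ with $b_i := \binom{t}{i}$ (which forces $c^* \propto b$, so $\widehat{\zeta}$ is a maximizer), and \emph{(B)} $\lambda = \Var(\widehat{\zeta})/2^t = (t\binom{2t}{t}+4^t)/2^t$.

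For \emph{(A)}, the relevant probabilistic picture is $\binom{t}{i}/2^t = \bbP[F_t = (i, t-i)]$, where $F$ is the simple random walk from $\zv$ in $\bbZ_{\ge 0}^2$ with equal $(+1,0)/(0,+1)$ steps, so $\widehat{\zeta} = 2^t\, \bbE_F[X_{F_t}]$. Using the edge expansion $\xi_v = \sum_{u \lessdot u'} (\beta_v(u')/|\fp(u')|)\, W_{u,u'}$, where $\beta_v(u')$ is the passage probability through $u'$ of the backward Markov chain from $v$ (uniform over $\fp$ at interior vertices, deterministic along the axes), one computes
\[
\Cov(X_v, \widehat{\zeta}) = 2^t + \sum_{u' \ne \zv} \frac{\beta_v(u')\, \gamma_t(u')}{|\fp(u')|}, \quad \gamma_t(u') := \sum_{i=0}^{t} \binom{t}{i}\, \beta_{(i,t-i)}(u').
\]
Interpreting this sum as $2^t$ times the expected cumulative reward $r(u) := \gamma_t(u)/(2^t|\fp(u)|)$ along the backward walk from $v$, I would introduce a potential $h$ with $h(\zv)=0$ satisfying $h(u) = r(u) + \frac{1}{|\fp(u)|}\sum_{u' \in \fp(u)} h(u')$; telescoping then gives $\Cov(X_v, \widehat{\zeta}) = 2^t(1 + h(v))$, so \emph{(A)} is equivalent to the constancy of $h$ on $L_t$. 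I expect to establish this by induction on $t$. The main obstacle is the boundary behavior: the coefficient of $X_{(0,t-1)}$ in the layer-$(t-1)$ expansion of $\widehat{\zeta}_t$ is $\tfrac{1}{2}\binom{t+1}{1}+\tfrac{1}{2}$ rather than the clean Pascal value $\tfrac{1}{2}\binom{t+1}{1}$, and the boundary surplus of $\tfrac{1}{2}$ arising from $\binom{t}{0}X_{(0,t)}$ and $\binom{t}{t}X_{(t,0)}$ must be tracked carefully in the inductive step.

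Given \emph{(A)}, $\lambda = b^T K^{(t)} b / \mathbf{1}^T b = \Var(\widehat{\zeta})/2^t$. Decomposing $\widehat{\zeta} = 2^t X_0 + \widehat{\xi}$ with $\widehat{\xi} \perp X_0$ yields $\Var(\widehat{\zeta}) = 4^t + \Var(\widehat{\xi})$, so \emph{(B)} reduces to $\Var(\widehat{\xi}) = t\binom{2t}{t}$. From the same edge expansion, $\Var(\widehat{\xi}) = \sum_{u' \ne \zv} \gamma_t(u')^2/|\fp(u')|$, which equals $4^t$ times the expected number of common edges between two i.i.d.\ backward walks from independent SRW endpoints $F_t$, $F'_t$. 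I would evaluate this inductively, using the layer recurrence from \emph{(A)} to express $\Var(\widehat{\xi}_t)$ in terms of $\Var(\widehat{\xi}_{t-1})$ plus an explicit noise contribution; the resulting combinatorial sum collapses via a Vandermonde-type identity modulo careful bookkeeping of the boundary corrections. Identifying the closed form with $t\binom{2t}{t}$ is the main concrete combinatorial obstacle; once done, substituting $\Cov(\widehat{\zeta}, X_0) = 2^t$ and $\Var(\widehat{\zeta}) = 4^t + t\binom{2t}{t}$ into $2^t/\sqrt{\Var(\widehat{\zeta})}$ gives the stated $1/\sqrt{(t/4^t)\binom{2t}{t}+1}$.
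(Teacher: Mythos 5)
Your reduction is sound and is essentially the same as the paper's: you show $\Cov(X_u,X_0)=1$ on $L_t$, observe that the maximizer of $\mathbf{1}^{T}c/\sqrt{c^{T}K^{(t)}c}$ is $c^*\propto (K^{(t)})^{-1}\mathbf{1}$, and correctly reduce the theorem to the two identities (A) $K^{(t)}b=\lambda\mathbf{1}$ with $b_i=\binom{t}{i}$ and (B) $\lambda=(t\binom{2t}{t}+4^t)/2^t$; this is exactly the content of the paper's Lemma~\ref{l:S_t-A_t} (there phrased via the conditional expectation $\bbE[X_0\mid X_{(0,t)},\dots,X_{(t,0)}]$ and a coefficient-extraction identity for the covariance generating polynomial $Q_t(z,w)$). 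The arithmetic converting (A) and (B) into the stated formula is also correct.

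The gap is that (A) and (B) are never actually proved: they are the entire substance of the theorem, and your proposal replaces them with a plan whose decisive steps you yourself flag as unresolved. For (A), the natural one-layer push-down of $\widehat{\zeta}_t$ does \emph{not} reproduce $\widehat{\zeta}_{t-1}$ up to scaling --- the interior coefficients become $\tfrac12\binom{t+1}{i+1}$ while the axis vertices pick up an extra $\tfrac12$ --- so the "potential $h$ constant on $L_t$" claim does not follow from any induction you have set up; you would need to either find the correct non-binomial family of layer-$s$ coefficients that is closed under the recurrence, or compute $\gamma_t(u')=\sum_i\binom{t}{i}p(u'\to(i,t-i))$ in closed form on the orthant (where the boundary breaks the clean multinomial formula available in $\HS_2$). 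For (B), the identity $\Var(\widehat{\xi})=\sum_{u'\neq\zv}\gamma_t(u')^2/|\fp(u')|=t\binom{2t}{t}$ is asserted to "collapse via a Vandermonde-type identity modulo careful bookkeeping," which is precisely the computation that needs to be exhibited. The paper discharges both items at once by writing down the explicit closed form \eqref{eq:Q_t-explicit} for $Q_t(z,w)$ (solving the functional equation induced by the model recurrence) and verifying $[w^t]\bigl((1+w)^t Q_t(z,w)\bigr)=A_t\sum_{i=0}^t z^i$ with $A_t=\frac{t}{2^t}\binom{2t}{t}+2^t$; to complete your argument you must supply an equivalent verification, e.g.\ by deriving and solving the recurrence for $Q_t$ or by carrying out your induction with the boundary corrections made explicit.
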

This result also shows the {\em exact} rate of convergence: $S_t \sim \sqrt[4]{\pi/t}$, as $t \to \infty$. We prove Theorem~\ref{thm:exact-formula} in Subsection~\ref{subsec:exact-formula} below.

For convex reconstruction in the critical case, we have the following result.
\begin{theorem}\label{thm:conv-corr}
Let $d \ge 0$. Let $P = \mathbb{Z}_{\ge 0}^{d+1}$ and $\alpha_i = 1/i$ for each $i \in [d+1]$. For any
\[
\zeta = \sum_{u \in L_t} c_u X_u,
\]
with $c_u \ge 0$ and $\sum_{u \in L_t} c_u > 0$, the correlation satisfies
\[
\frac{\Cov(\zeta,X_0)}{\sqrt{\Var(\zeta)}} \ll_d \frac{1}{\sqrt[4]{t}}.
\]
\end{theorem}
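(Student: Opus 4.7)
The plan is to normalize, decompose $\Var(\zeta)$ layer by layer via the model recurrence, reinterpret the pushed-down coefficients as a random walk on $P$, and reduce the required lower bound on $\Var(\zeta)$ to the one-dimensional critical case already covered by Theorem~\ref{thm:exact-formula}.

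First, since both $\Cov(\zeta,X_0)$ and $\sqrt{\Var(\zeta)}$ are homogeneous of degree one in $c$, I would rescale so that $\sum_{u \in L_t} c_u = 1$. In the critical case $\alpha_i = 1/i$, a straightforward induction on rank shows that the coefficient of $X_0$ in each $X_v$ equals $1$, so one can write $X_v = X_0 + N_v$ with $N_v$ a Gaussian linear combination of the noise variables $W_{u \to v}$, independent of $X_0$. Hence $\Cov(\zeta, X_0) = 1$ and $\corr(\zeta, X_0) = 1/\sqrt{\Var(\zeta)}$, so the theorem reduces to the claim $\Var(\zeta) \gg_d \sqrt{t}$.

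Second, iterate the model recurrence. Using $X_v = \frac{1}{|\fp(v)|} \sum_{u \in \fp(v)}(X_u + W_{u \to v})$, one can rewrite $\zeta = \sum_{u \in L_{r}} c^{(r)}_u X_u + \Xi_r$, where $c^{(r-1)}_w = \sum_{u : w \in \fp(u)} c^{(r)}_u / |\fp(u)|$ and $\Xi_r$ is a fresh-noise tail independent of every $X_w$ with $\rho(w) \le r$. Starting from $c^{(t)} = c$ and descending layer by layer, one verifies by orthogonality of the newly introduced noise at each step that the incremental variance at layer $r$ is exactly $\sum_{u \in L_r} (c^{(r)}_u)^2 / |\fp(u)|$. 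Telescoping down to $L_0 = \{\zv\}$, where $c^{(0)}_{\zv} = 1$ and the remaining term is just $\Var(X_0)=1$, gives
\[
\Var(\zeta) \;=\; 1 + \sum_{r=1}^{t}\ \sum_{u \in L_r} \frac{(c^{(r)}_u)^2}{|\fp(u)|}.
\]
Since convexity and total mass are preserved under the push-down, each $c^{(r)}$ is a probability distribution on $L_r$: namely, the distribution of a ``random-parent'' walker that starts from $c$ on $L_t$ and at each step moves to a uniformly chosen parent. The task becomes showing $\sum_{r} \|c^{(r)}\|_{2,\mathrm{wt}}^2 \gg_d \sqrt{t}$ for the weighted norm with weights $1/|\fp(u)|$.

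Third, reduce to $d=1$. The key observation is that the restriction of the process to any two-coordinate subface $\{u : u_k = 0 \text{ for } k \ne i,j\}$ is exactly the critical $d=1$ process, because $\alpha_2 = 1/2$ is used at every interior vertex of such a subface. Using the $S_{d+1}$-symmetry of the ambient model (permuting coordinates preserves the joint law of $(X_0, \{X_v\})$ via the relabeling $v \mapsto \sigma(v)$), I would symmetrize $c$ into $c^{\mathrm{sym}}$; this preserves $\Cov(\zeta, X_0) = 1$, and by quadratic convexity of $\Var$ does not increase $\Var(\zeta)$, so it suffices to prove the bound for symmetric $c$. For symmetric $c$, averaging over an $S_{d+1}$-orbit of two-coordinate subfaces produces a convex combination on a $2$-dimensional slice, to which I would apply Theorem~\ref{thm:exact-formula} to extract the $\sqrt{t}$ lower bound on $\Var$, absorbing the combinatorial price of the averaging into the constant $C_d$.

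The hardest step is the comparison between the full-dimensional quantity $\sum_r \sum_{u \in L_r}(c^{(r)}_u)^2/|\fp(u)|$ and its two-dimensional counterpart after projection: one must check that the projected distribution on the $2$D slice retains enough mass and enough concentration that the $d=1$ bound is not washed out, and that the $1/|\fp(u)|$ weights in the higher-dimensional sum dominate (up to constants depending only on $d$) the $1/|\fp|=1/2$ weights appearing in the $d=1$ version. Controlling these $|\fp|$-weights along the orbit of the walker, where boundary vertices of $\mathbb{Z}_{\ge 0}^{d+1}$ have $|\fp|<d+1$, is where the $d$-dependent constant enters.
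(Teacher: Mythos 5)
Your first two steps are exactly the paper's setup: normalizing $\sum_u c_u = 1$ so that $\Cov(\zeta,X_0)=1$, and the identity $\Var(\zeta) = 1 + \sum_{r=1}^{t}\sum_{u\in L_r}(c^{(r)}_u)^2/|\fp(u)|$ with $c^{(r)}_u = \bbP(u\in\Gamma)$ the law of the uniform-random-parent walker started from $c$ (this is Proposition~\ref{prop:1-C-2-Var}(a) combined with Proposition~\ref{prop:w-in-Gamma-squared}). Your symmetrization of $c$ is also a valid (and slightly cleaner) substitute for the paper's pigeonhole reduction to $L_t^{\nearrow}$. The problem is your third step, and the gap you flag as ``the hardest step'' is not a technicality --- the reduction to the $d=1$ sub-model via two-coordinate subfaces cannot deliver the claimed $\sqrt{t}$ lower bound on $\Var(\zeta)$.

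Here is why. Take $d=2$ and $c=\delta_{(t/3,t/3,t/3)}$. The walker stays in the interior (all three coordinates positive) until roughly $\Theta(\sqrt{t})$ of its budget remains; the event that it reaches a two-coordinate subface while the remaining two coordinates still sum to $\varepsilon t$ is a large-deviation event of probability $e^{-c t}$. So any application of Theorem~\ref{thm:exact-formula} on a subface is over a horizon $s = O(\sqrt{t})$, and that theorem gives $\Var \gtrsim \sqrt{s} = t^{1/4}$, i.e.\ correlation $\lesssim t^{-1/8}$ --- strictly weaker than the claimed $t^{-1/4}$. The interior does not save you either: for $d=2$ the bulk layers contribute only $\sum_k 9^{-k}F(k,3)\asymp \log t$, and for $d\ge 3$ only $O(1)$. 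The mechanism that actually produces $\sqrt{t}$ is the \emph{one}-dimensional coordinate axis (the $d=0$ sub-model), where the walk is deterministic and variance accumulates \emph{linearly} in the remaining horizon: the paper uses a Poisson-clock coupling to show that with probability $\ge 2^{-d}e^{-9}$ the walker lands on $(0,\ldots,0,x)$ with $x\ge\sqrt{t/(d+1)}$, after which it passes through every point $(0,\ldots,0,w_{d+1})$, $w_{d+1}\le\sqrt{t/(d+1)}$, giving $\sqrt{t}$ layers each contributing a constant to $\sum_w\bbP(w\in\Gamma)^2$. So your reduction targets the wrong sub-model: you need the $d=0$ face with an $\Omega(\sqrt{t})$ residual budget held with probability bounded below, not the $d=1$ face with Theorem~\ref{thm:exact-formula}.
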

We prove Theorem~\ref{thm:conv-corr} in Section~\ref{s:Poisson}.

\subsubsection{The Infinite Model}
Next, we consider the \defn{infinite model} in which the poset is the ``half-space''
\[
P = \HS_{d+1} := \left\{ (x_1, \ldots, x_{d+1}) \in \mathbb{Z}^{d+1} \, | \, x_1 + \cdots + x_{d+1} \ge 0 \right\},
\]
with $u \ge v$ if and only if $u - v \in \mathbb{Z}_{\ge 0}^{d+1}$. For each $t \ge 0$, the $t^{\text{th}}$ layer of this poset is
\[
L_t = \left\{ (x_1, \ldots, x_{d+1}) \in \mathbb{Z}^{d+1} \, | \, x_1 + \cdots + x_{d+1} = t \right\}.
\]
Note that every non-minimal element in $P$ covers exactly $d+1$ elements, and thus the only parameter we need to consider is $\alpha_{d+1}$. For simplicity, let us write $\alpha := \alpha_{d+1}$.

When $d = 0$, the poset $P = \HS_1$ is the same as $\mathbb{Z}_{\ge 0}$, which was considered earlier in the finite model.

When $d \ge 1$, our model places the same random variable $X_0$ at every minimal element $u$ in the infinite set $L_0$. Therefore, it is not hard to see that reconstruction and convex reconstruction are always possible regardless of the value of $\alpha$, since on every layer $L_t$ there are infinitely many identically distributed Gaussian random variables with the same correlation with $X_0$ and with independent noises. A more interesting question is whether {\em local} reconstructions are possible.

\begin{theorem}\label{thm:half-space}
Let $d \ge 0$ and $P = \HS_{d+1}$.
\begin{itemize}
    \item[(a)] If $\alpha > \frac{1}{d+1}$, then single-vertex reconstruction is possible.
    \item[(b)] If $\alpha < \frac{1}{d+1}$, then local reconstruction is not possible.
    \item[(c)] If $\alpha = \frac{1}{d+1}$, then single-vertex reconstruction is possible for $d \ge 3$ and local reconstruction is not possible for $d = 0, 1, 2$.
\end{itemize}
\end{theorem}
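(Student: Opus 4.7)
Set $\beta := \alpha(d+1)$. My plan is to exploit an explicit signal-plus-noise decomposition: for every $v \in L_t$,
\[
X_v \;=\; \beta^t X_0 \;+\; \sum_{\substack{v' \leq v \\ \rho(v') \geq 1}} f(v, v')\, \eta_{v'}, \qquad f(v, v') \;=\; \alpha^{t-\rho(v')} \binom{t-\rho(v')}{v-v'},
\]
where $\binom{k}{m}$ is the multinomial coefficient and $\eta_{v'} = \alpha \sum_{u \lessdot v'} W_{u \to v'}$ are i.i.d.\ centered Gaussians of variance $(d+1)\alpha^2$. For any $\zeta_t = \sum_u c_u X_u$ supported in a window on $L_t$, this gives $\Cov(\zeta_t, X_0) = \beta^t \sum_u c_u$ and
\[
\Var(\zeta_t) \;=\; \beta^{2t}\!\left(\sum_u c_u\right)^{\!2} + (d+1)\alpha^2 \sum_{v'} A_{v'}^2, \qquad A_{v'} := \sum_u c_u f(u, v').
\]

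For part \textbf{(a)} I take $\zeta_t = X_v$ and use the identity $\sum_{|m|=k}\binom{k}{m}^2 = (d+1)^{2k}\Pr[M = M']$, where $M, M'$ are i.i.d.\ $\operatorname{Multinomial}(k;\tfrac{1}{d+1}\mathbf{1})$, together with the local CLT bound $\Pr[M=M']=\Theta_d(k^{-d/2})$. This gives $\Var(X_v) = \beta^{2t} + \Theta_d\!\big(\sum_{k=0}^{t-1}\beta^{2k}k^{-d/2}\big)$; for $\beta > 1$ the geometric sum is $\Theta(\beta^{2t})$, so $\corr(X_v, X_0)$ is bounded below. For part \textbf{(b)} the window of width $N$ intersects $L_t$ in at most $K_N$ points, so Cauchy--Schwarz yields $(\sum_u c_u)^2 \leq K_N \sum_u c_u^2$; combining with the trivial one-step lower bound $\Var(\zeta_t) \geq (d+1)\alpha^2\sum_u c_u^2$ (using only the independent noises $\eta_u$ added at level $t$) gives $\corr(\zeta_t, X_0)^2 \leq K_N \beta^{2t}/((d+1)\alpha^2) \to 0$ since $\beta < 1$.

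For the critical case \textbf{(c)} the key observation is that with $\alpha = 1/(d+1)$, $f(u, v') = \Pr[S_k = u - v']$ is the $k$-step transition kernel of the simple random walk $S$ on $\Z^{d+1}$ with uniform steps in $\{e_1, \ldots, e_{d+1}\}$, where $k = t - \rho(v')$. Summing $A_{v'}^2$ over $v'$ and using the convolution identity $\sum_{v' \in L_{t-k}} p_k(u-v')\,p_k(u'-v') = \Pr[D_k = u-u']$ for $D_k := S_k - S'_k$ (two independent copies) gives
\[
\Var(\zeta_t) = \left(\sum_u c_u\right)^{\!2} + \frac{1}{d+1}\sum_{u, u'} c_u c_{u'}\, G_t(u-u'), \qquad G_t(w) := \sum_{k=0}^{t-1}\Pr[D_k = w].
\]
The walk $D_k$ lives on the rank-$d$ root lattice $\{w \in \Z^{d+1} : \sum w_i = 0\} \cong \Z^d$. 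When $d \geq 3$ the walk is transient and $G_\infty(0) < \infty$, so taking $\zeta_t = X_v$ immediately gives $\Var(X_v) \leq 1 + G_\infty(0)/(d+1) = O(1)$, establishing single-vertex reconstruction.

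The hardest part will be ruling out \emph{local} reconstruction when $d \in \{0, 1, 2\}$ at criticality. The case $d = 0$ is immediate because $L_t$ is a singleton and $G_t(0) = t$. For $d \in \{1, 2\}$ the walk $D_k$ is recurrent on $\Z^d$, so $G_t(0) \to \infty$ ($\sim \sqrt{t}$ when $d = 1$, $\sim \log t$ when $d = 2$). By the local CLT, $G_t(w) = G_t(0) - a(w) + o_t(1)$ uniformly for $w$ in any bounded set, where $a(\cdot) \geq 0$ is the potential kernel. I then write the finite matrix $M_t := [G_t(u-u')]_{u, u' \in \calW_t \cap L_t}$ as $G_t(0)\,\mathbf{1}\mathbf{1}^{\top} - [a(u-u')] + o_t(1)$ and invoke the classical fact that $[a(u-u')]$ is conditionally negative definite on $\mathbf{1}^{\perp}$ (strict conditional negative definiteness of $a$ for recurrent random walks on $\Z^d$, $d \leq 2$); consequently $M_t$ has a single large eigenvalue $\sim G_t(0)\cdot|\calW_t \cap L_t|$ in the $\mathbf{1}$-direction, with all remaining eigenvalues bounded and strictly positive. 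The PSD Cauchy--Schwarz inequality
\[
\left(\sum_u c_u\right)^{\!2} = (c^{\top}\mathbf{1})^2 \;\leq\; (c^{\top} M_t c)\,(\mathbf{1}^{\top} M_t^{-1} \mathbf{1}),
\]
combined with $\mathbf{1}^{\top} M_t^{-1}\mathbf{1} \sim 1/G_t(0)$ extracted from the spectral decomposition, then gives $c^{\top} M_t c \geq (\sum_u c_u)^2 G_t(0)(1-o(1))$. Therefore $\Var(\zeta_t)/(\sum_u c_u)^2 \to \infty$ uniformly over all coefficient sequences supported in a fixed window, so $\corr(\zeta_t, X_0) \to 0$. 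The two technical inputs---a uniform local CLT for the $A_d$-valued difference walk, and the strict conditional negative definiteness of its potential kernel---are standard, but verifying them carefully (especially the spectral gap of $[a(u-u')]$ on $\mathbf{1}^{\perp}$) is where most of the actual work lies.
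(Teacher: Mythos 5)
Your proposal is correct, and for parts (a), (b), and the critical cases $d\ge 3$ and $d=0$ it follows essentially the same route as the paper (the paper's Lemma~\ref{l:C-1-C-2}, cited from Richmond--Shallit, is exactly your local-CLT bound on $\sum_{|v|=k}\binom{k}{v}^2$, and the subcritical argument is the identical Cauchy--Schwarz-plus-top-layer-noise computation). The genuine divergence is in the critical case $d\in\{1,2\}$, which is indeed the hardest part. The paper proceeds via an \emph{averaging trick}: it replaces $\zeta$ by the average $\widehat{\zeta}$ of $M$ translates of the window (which preserves $\Cov(\cdot,X_0)$ and does not increase the variance), reduces to a nearly flat coefficient profile, and then bounds $\Var(\widehat{\zeta})$ from below using explicit pairwise covariance lower bounds $\Cov(X_u,X_{u'})\gtrsim \sqrt{t}$ (resp.\ $\gtrsim\log t$) for nearby $u,u'$ on layer $t$ (Lemmas~\ref{l:sqrt-t-1-50-sqrt-t} and~\ref{l:log-t-400}, proved by Stirling-type estimates). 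You instead analyze the full Gram matrix $M_t=[G_t(u-u')]$ of the truncated Green's function of the difference walk on the $A_d$ root lattice and apply the PSD Cauchy--Schwarz inequality $(c^{\top}\mathbf{1})^2\le (c^{\top}M_t c)(\mathbf{1}^{\top}M_t^{-1}\mathbf{1})$. Both work; your route is conceptually cleaner (it makes the recurrence/transience dichotomy the visible mechanism), while the paper's is more elementary and self-contained. One substantive comment: the step you flag as ``where most of the actual work lies''---strict conditional negative definiteness of the potential kernel and a spectral gap for $[a(u-u')]$ on $\mathbf{1}^{\perp}$---is not actually needed. Since the level-$t$ noises alone give $M_t\succeq I$, and since $G_t(0)-G_t(w)$ is bounded uniformly in $t$ for $w$ ranging over the finite difference set of a width-$N$ window (the partial sums defining the potential kernel converge, hence are bounded), one gets for any $c$ with $c^{\top}\mathbf{1}=1$ that
\[
c^{\top}M_t c \;\ge\; \max\bigl(G_t(0)-C_N\|c\|^2,\ \|c\|^2\bigr)\;\ge\;\frac{G_t(0)}{1+C_N},
\]
which already yields $\mathbf{1}^{\top}M_t^{-1}\mathbf{1}\le (1+C_N)/G_t(0)$ and hence $\Var(\zeta_t)/\Cov(\zeta_t,X_0)^2\gtrsim_N G_t(0)\to\infty$; no eigenvector perturbation or strictness of the negative definiteness is required. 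With that simplification your argument closes cleanly, recovering the paper's quantitative rates $G_t(0)\asymp\sqrt{t}$ for $d=1$ and $\asymp\log t$ for $d=2$.
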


In particular, we have the following corollary.

\begin{corollary}
For $P = \HS_{d+1}$, local reconstruction is possible if and only if single-vertex reconstruction is possible.
\end{corollary}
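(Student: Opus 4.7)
The plan is to observe that this corollary follows immediately by case analysis from Theorem~\ref{thm:half-space}, once we combine it with the trivial implication (from Figure~\ref{fig:poset-of-relations}) that single-vertex reconstruction implies local reconstruction.

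First, I would dispose of the easy direction. For any value of $\alpha$, if single-vertex reconstruction is possible, then local reconstruction is possible, since a single vertex can be viewed as a window of width $1$ (and hence a window of any width $N\geq 1$). This uses only the definitions and does not invoke anything about $\HS_{d+1}$ specifically.

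For the converse, I would split on the three regimes described in Theorem~\ref{thm:half-space}. In the regime $\alpha > \tfrac{1}{d+1}$, part (a) gives single-vertex reconstruction directly, so there is nothing to prove. In the regime $\alpha < \tfrac{1}{d+1}$, part (b) says local reconstruction is not possible, so the equivalence is vacuous. In the critical regime $\alpha = \tfrac{1}{d+1}$, part (c) again handles both subcases: for $d \geq 3$ single-vertex reconstruction holds outright, while for $d \in \{0,1,2\}$ local reconstruction fails, so the equivalence holds vacuously.

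Combining these, in every case the statements ``local reconstruction is possible'' and ``single-vertex reconstruction is possible'' have the same truth value, which is exactly the claim. The main (indeed, only) work is Theorem~\ref{thm:half-space} itself; the corollary is purely a bookkeeping consequence, so there is no genuine obstacle to overcome here once that theorem is in hand.
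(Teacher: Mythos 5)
Your proposal is correct and matches the paper's (implicit) reasoning exactly: the corollary is stated as an immediate consequence of Theorem~\ref{thm:half-space}, with the easy direction coming from the hierarchy of reconstruction notions and the converse from the observation that in every parameter regime the theorem either grants single-vertex reconstruction or rules out local reconstruction. Nothing further is needed.
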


We prove Theorem~\ref{thm:half-space} in Section~\ref{s:half-space}.

\subsubsection{Summary}

Table~\ref{table:reconstruction} provides a summary of our results about the parameters for which each type of reconstruction is possible.

Note that by definition of reconstruction, to show that any type of reconstruction is possible, it suffices to prove that the limit superior $\limsup_{t \to \infty} \sup_{\zeta} \frac{\Cov(\zeta,X_0)}{\sqrt{\Var(\zeta)}}$ is positive (where the supremum $\sup_{\zeta}$ is over a family of estimators $\zeta$ depending on which type of reconstruction we are considering---see Remark~\ref{rem:different-S-t}). 
We remark that in all of our positive reconstruction results in this paper, our proofs in fact show the stronger results that the {\em limit inferior} is positive.

\begin{table}
\begin{center}
\def\arraystretch{1.5}
{\tabulinesep=0.2em
\begin{tabu}{|c|c|c|}
\hline
& the infinite model & the finite model \\
\hline
reconstruction 
& \makecell[l]{\multicolumn{1}{l}{($d = 0$) $\alpha > 1$} \\[0.5em] \multicolumn{1}{l}{($d \ge 1$) any $\alpha > 0$} \\[0.5em] \hspace{0.2em} (easy)}
& \makecell[l]{$\exists i \in [d+1], \alpha_i > \frac{1}{i}$ \\[0.5em] (Thm.~\ref{thm:rec-orthant})} \\
\hline
convex reconstruction 
& \makecell[l]{\multicolumn{1}{l}{($d = 0$) $\alpha > 1$} \\[0.5em] \multicolumn{1}{l}{($d \ge 1$) any $\alpha > 0$} \\[0.5em] \hspace{0.2em} (easy)} 
& \makecell[l]{$\exists i \in [d+1], \alpha_i > \frac{1}{i}$ \\[0.5em] (Thm.~\ref{thm:rec-orthant})} \\ 
\hline
local reconstruction 
& \hspace{0.2em} \makecell[l]{($d \le 2$) $\alpha > \frac{1}{d+1}$ \\[0.5em] ($d \ge 3$) $\alpha \ge \frac{1}{d+1}$ \\[0.5em] (Thm.~\ref{thm:half-space})}
& \makecell[c]{open \\[0.5em] (see \S\S~\ref{subsec:local-finite})} \\
\hline
local convex reconstruction 
& \hspace{0.2em} \makecell[l]{($d \le 2$) $\alpha > \frac{1}{d+1}$ \\[0.5em] ($d \ge 3$) $\alpha \ge \frac{1}{d+1}$ \\[0.5em] (Thm.~\ref{thm:half-space})}
& \makecell[c]{open \\[0.5em] (see \S\S~\ref{subsec:local-finite})} \\
\hline
single-vertex reconstruction 
& \hspace{0.2em} \makecell[l]{($d \le 2$) $\alpha > \frac{1}{d+1}$ \\[0.5em] ($d \ge 3$) $\alpha \ge \frac{1}{d+1}$ \\[0.5em] (Thm.~\ref{thm:half-space})}
& \makecell[c]{open \\[0.5em] (see \S\S~\ref{subsec:local-finite})} \\
\hline
\end{tabu}
}
\end{center}

\bigskip

\caption{When is each type of reconstruction possible?}\label{table:reconstruction}
\end{table}

\section{Parameter Reduction}\label{s:para-reduct}
In this section, we show that whether each type of reconstruction is possible does not depend on the parameters $\varepsilon, \mu_0, \sigma_0^2$.

\begin{proposition}\label{prop:para-reduct}
Reconstruction is possible for $P$ and $(\alpha_1, \ldots, \alpha_{d+1}, \varepsilon, \mu_0, \sigma_0^2)$ if and only if it is for the same poset and the tuple $(\alpha_1, \ldots, \alpha_{d+1}, 1, 0, 1)$.
\end{proposition}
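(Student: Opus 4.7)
The plan is to decouple the parameters $(\varepsilon, \mu_0, \sigma_0^2)$ from the poset structure and the $\alpha$'s by unfolding the model recurrence explicitly. First, define a deterministic function $\lambda : P \to \mathbb{R}$ by $\lambda_v = 1$ when $\rho(v) = 0$ and $\lambda_v = \alpha_{|\fp(v)|} \sum_{u \in \fp(v)} \lambda_u$ otherwise; this is well-defined because every element of $P$ has finite rank and every $\fp(v)$ is finite. Iterating the model recurrence \eqref{eq:model-rec} on the rank of $v$ then produces the decomposition
\[
X_v = \lambda_v X_0 + \varepsilon \, M_v,
\]
where $M_v$ is a centered Gaussian that is a fixed linear combination of the $W_{u \to v'}$'s, whose law depends only on $P$ and the $\alpha$'s (not on $\varepsilon, \mu_0, \sigma_0$), and which is independent of $X_0$.

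For any function $c: P \to \mathbb{R}$ with finite support on each layer, set $\zeta_t = \sum_{u \in L_t} c_u X_u$, $a_t = \sum_{u \in L_t} c_u \lambda_u$, and $b_t^2 = \Var\bigl( \sum_{u \in L_t} c_u M_u \bigr)$; note that $a_t$ and $b_t^2$ depend on $c$, $P$, and the $\alpha$'s alone. From the above decomposition and the independence of the $M_u$ from $X_0$, one reads off
\[
\Cov(\zeta_t, X_0) = a_t \sigma_0^2, \qquad \Var(\zeta_t) = a_t^2 \sigma_0^2 + \varepsilon^2 b_t^2,
\]
and hence
\[
\corr(\zeta_t, X_0) = \frac{a_t}{\sqrt{a_t^2 + (\varepsilon/\sigma_0)^2 \, b_t^2}}.
\]
In particular, the formula is independent of $\mu_0$ and depends on $(\varepsilon, \sigma_0)$ only through the ratio $\varepsilon/\sigma_0$, and the condition $\Var(\zeta_t) > 0$ is, in either system, equivalent to $(a_t, b_t) \ne (0, 0)$ since $\varepsilon, \sigma_0 > 0$.

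I would then prove the equivalence by using the \emph{same} $c$ in both systems. If in the $(1, 0, 1)$-system the correlation $a_t / \sqrt{a_t^2 + b_t^2}$ does not tend to $0$, then along some subsequence one has $b_t^2 / a_t^2 \le K$ for a finite constant $K$; on that same subsequence the $(\varepsilon, \mu_0, \sigma_0^2)$-correlation is at least $1 / \sqrt{1 + (\varepsilon/\sigma_0)^2 K} > 0$, so it too fails to converge to $0$. The reverse direction is obtained by swapping the roles of the two systems.

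No serious obstacle is anticipated; this is purely a rescaling/unfolding argument. The only points requiring care are that $\lambda_v$ is genuinely a well-defined real number at every $v \in P$ (which follows from gradedness and the finiteness of each $\fp(v)$) and that each $M_v$ is independent of $X_0$ (which follows because all the $W_{u \to v'}$ are independent of $X_0$). The same reasoning transfers verbatim to the other notions (convex, local, local convex, single-vertex), since it uses the same coefficient function in both systems and therefore preserves all sign, support, and window constraints.
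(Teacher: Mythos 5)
Your proposal is correct and follows essentially the same route as the paper: you decompose $X_v = \lambda_v X_0 + \varepsilon M_v$ with the law of the $M_v$'s depending only on $P$ and the $\alpha$'s, which is exactly the paper's decomposition $X_u = F_u X_0 + Y_u$ with $\Cov(Y_u,Y_v) = \varepsilon^2 G_{uv}$, and then both arguments observe that the correlation depends on $(\varepsilon,\mu_0,\sigma_0^2)$ only through the ratio $\varepsilon/\sigma_0$. The only cosmetic difference is that the paper packages the conclusion via the supremum $S_t$ and the identity relating $S_t^{-2}$ to a parameter-free infimum, whereas you transfer non-vanishing correlation directly for a fixed coefficient function along a subsequence; both are valid and equivalent.
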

\begin{proof}
Let us begin by taking $\Cov(X_0, -)$ throughout the model recurrence \eqref{eq:model-rec}. We find that every $v \in P$ with rank at least $1$ satisfies
\[
\Cov(X_v, X_0) = \alpha_{|\fp(v)|} \cdot \sum_{u \in \fp(v)} \Cov(X_u, X_0).
\]
It then follows by induction on the rank that for every $v \in P$, the covariance $\Cov(X_v, X_0)$ equals the sum over all saturated chains (for poset terminologies, see Subsubsection~\ref{subsubsec:poset-terms}) from any minimal element $u \in L_0$ to $v$ of the weight of the chain, where the weight of the chain $\{ u = z_0 \lessdot z_1 \lessdot \cdots \lessdot z_r = v\}$ is given by the monomial
\[
\alpha_{|\fp(z_1)|} \cdot \alpha_{|\fp(z_2)|} \cdots \alpha_{|\fp(z_r)|}.
\]

Hence, for every $u \in P$, we can express $X_u$ uniquely as
\[
X_u = F_u \cdot X_0 + Y_u,
\]
where $F_u \in \mathbb{Z}[\alpha_1, \ldots, \alpha_{d+1}]$ is a polynomial depending only on $\alpha_1, \ldots, \alpha_{d+1}$, and $Y_u$ is a combination of the noises $W_{w \to w'}$ ($w \lessdot w' \le u$) independent from $X_0$. By a similar argument, for each pair $u, v \in P$, there exists a polynomial $G_{uv} \in \mathbb{Z}[\alpha_1, \ldots, \alpha_{d+1}]$ depending only on $\alpha_1, \ldots, \alpha_{d+1}$ such that
\[
\Cov(Y_u, Y_v) = \sum_{w \lessdot w'} \Cov(Y_u, W_{w \to w'}) \cdot \Cov(Y_v, W_{w \to w'}) = \varepsilon^2 \cdot G_{uv}.
\]
To see how we obtain the factor of $\varepsilon^2$ on the right, note that each of $\Cov(Y_u, W_{w \to w'})$ and $\Cov(Y_v, W_{w \to w'})$ equals $\varepsilon$ times a polynomial in $\mathbb{Z}[\alpha_1, \ldots, \alpha_{d+1}]$---another consequence of the model recurrence \eqref{eq:model-rec}.

For each $t \ge 0$, let us consider
\begin{equation}\label{eq:S-t-sup}
S_t := \sup \frac{\Cov(\zeta, X_0)}{\sqrt{\Var(\zeta)}},
\end{equation}
where the supremum is over all choices of
\[
\zeta = \sum_{u \in L_t} c_u X_u
\]
such that (i) the coefficients $c_u$ are not simultaneously zero, and (ii) all but finitely many coefficients $c_u$ are zero.

By writing $\Cov(\zeta, X_0)$ and $\Var(\zeta)$ explicitly in terms of $F_u$ and $G_{uv}$, we find that
\begin{equation}\label{eq:s4e-2}
\sigma_0^4 \varepsilon^{-2} S_t^{-2} - \sigma_0^2 \varepsilon^{-2} = \inf \frac{\sum_{u,v \in L_t} c_u c_v G_{uv}}{\left( \sum_{u \in L_t} c_u F_u \right)^2},
\end{equation}
where the infimum is over all choices of $c_u \in \mathbb{R}$ satisfying the conditions (i) and (ii) in the previous paragraph.

Reconstruction is not possible for the tuple $(\alpha_1, \ldots, \alpha_{d+1}, \varepsilon, \mu_0, \sigma_0^2)$ if and only if $S_t \to 0$ (as $t \to \infty$), which is if and only if the expression in \eqref{eq:s4e-2} diverges to $+\infty$. Note that the right-hand side of \eqref{eq:s4e-2} does not depend on $\varepsilon$, $\mu_0$, or $\sigma_0^2$. This concludes the proof.
\end{proof}

\begin{remark}\label{rem:different-S-t}
Similarly, we can show that whether convex/local/local convex/single-vertex reconstruction is possible does not depend on $\varepsilon, \mu_0, \sigma_0^2$. To see this, we simply use a slight modification of the above proof. Instead of taking the supremum in \eqref{eq:S-t-sup} and the infimum in \eqref{eq:s4e-2} over all choices of $c_u \in \mathbb{R}$ satisfying (i) and (ii) in the proof, we take the supremum and the infimum over a different family of $\{c_u\}_{u \in L_t}$ according to the definition of reconstruction we are considering. The rest of the proof reads the same.
\end{remark}

In light of Proposition~\ref{prop:para-reduct} and Remark~\ref{rem:different-S-t}, let us assume $\varepsilon = 1$, $\mu_0 = 0$, and $\sigma_0^2 = 1$ for the rest of the paper. It now makes sense to say whether reconstruction is possible for the poset $P$ and the parameter tuple $(\alpha_1, \ldots, \alpha_{d+1})$, where we drop the three parameters from the tuple.

\section{The Finite Model}\label{s:orthant}
In this section, we consider the finite model, where $P = \mathbb{Z}_{\ge 0}^{d+1}$. Our goal is to establish Theorem~\ref{thm:rec-orthant}, which gives the characterization of the parameter tuples for which (convex) reconstruction is possible.

In the simplest case when $d = 0$, the theorem says that reconstruction is possible if and only if $\alpha_1 > 1$. This result is easy to show as follows. The poset in this case is simply $P = \mathbb{Z}_{\ge 0}$. For each $t$, we have
\[
X_t = \alpha_1^t X_0 + \sum_{i=0}^{t-1} \alpha_1^{t-i} W_{i, i+1},
\]
and therefore
\begin{equation}\label{eq:var-cov-2-d-0}
\frac{\Var(X_t)}{\Cov(X_t,X_0)^2} = 1 + \sum_{i=0}^{t-1} \alpha_1^{-2i}.
\end{equation}
When $\alpha_1 > 1$, the quantity in \eqref{eq:var-cov-2-d-0} is bounded from above by the convergent sum $1+ \sum_{i=0}^{\infty} \alpha_1^{-2i} = (2\alpha_1^2 - 1)/(\alpha_1^2 - 1)$. On the other hand, when $\alpha_1 \le 1$, the quantity diverges to $+\infty$ as $t \to \infty$.

For the rest of this section, let us consider $d \ge 1$. We prove parts~(a) and (b) of Theorem~\ref{thm:rec-orthant} in Subsections~\ref{subsec:in-box} and \ref{subsec:out-box}, respectively.

\subsection{The Box of Impossible Reconstruction}\label{subsec:in-box}
In this subsection, we fix a positive integer $d \ge 1$ and consider real numbers $\alpha_1, \ldots, \alpha_{d+1}$ such that $\alpha_i \in (0,1/i]$ for every $i \in [d+1]$. Consider arbitrary real numbers $a_u$ indexed by the poset elements $u \in L_t$ on the $t^{\text{th}}$ layer. We assume that these numbers $a_u$ are not simultaneously zero. Consider
\begin{equation}\label{eq:zeta-sum-u-L-t}
\zeta := \sum_{u \in L_t} a_u X_u.
\end{equation}
The goal of this subsection is to show Theorem~\ref{thm:corr-log-log-log-log-t}, which implies that
\[
\frac{|\Cov(\zeta, X_0)|}{\sqrt{\Var(\zeta)}} \to 0,
\]
as $t \to \infty$.

It is convenient to consider the directed graph obtained from the Hasse diagram of the poset $P$ by drawing a directed edge $u \to v$ whenever $u \in \fp(v)$ (i.e. $v$ covers $u$). To each directed edge $u \to v$, we assign the \defn{weight} given by
\[
\wt(u \to v) := \alpha_{|\fp(v)|}.
\]
More generally, for any two poset elements $u, v \in P$, if $\gamma$ is a directed path from $u$ to $v$ in the graph, the \defn{weight} $\wt(\gamma)$ is defined to be the product of the constituent edges of $\gamma$.

For each function $f$ and a nonnegative integer $k$, we use the notation $f^{(k)}$ for the composition $f \circ f \circ \cdots \circ f$ with $k$ copies of $f$. If $k = 0$, then $f^{(0)}$ denotes the identity function.

For the rest of this subsection:
\begin{itemize}
    \item We fix the level $t$ and assume that $t$ is large with respect to $d$. For convenience, we take $t > \exp^{(5d+4)}(1)$.
    \item We fix the coefficients $a_u$, for $u \in L_t$, of the random variable $\zeta$. See~\eqref{eq:zeta-sum-u-L-t}.
\end{itemize}

Now we extend the real numbers $a_u$ to other layers. For each $0 \le m \le t$ and $v \in L_m$, define
\[
a_v := \sum_{u \in L_t} \sum_{\gamma: v \to u} \wt(\gamma) \cdot a_u,
\]
where the inner sum is over all directed paths $\gamma$ from $v$ to $u$. Note that
\[
\Cov(\zeta, X_0) = \sum_{u \in L_t} a_u \cdot \Cov(X_u,X_0) = \sum_{u \in L_t} a_u \sum_{\gamma: \zv \to u} \wt(\gamma) = a_{\zv}.
\]

Define the constant $C = C(\alpha_1, \ldots, \alpha_{d+1}) := \max_{i \in [d+1]} i^{-1/2} \alpha_i^{-1} \ge \sqrt{d+1} > 0$.

\begin{proposition}\label{prop:1-C-2-Var}
We have
\begin{itemize}
    \item[(a)] $\Var(\zeta) = a_{\zv}^2 + \sum_{m=1}^t \sum_{w \in L_m} |\fp(w)| \alpha_{|\fp(w)|}^2 a_w^2$.
    \item[(b)] $\frac{1}{C^2} \sum_{m = 0}^t \sum_{v \in L_m} a_v^2 \le \Var(\zeta) \le \sum_{m = 0}^t \sum_{v \in L_m} a_v^2$.
\end{itemize}
\end{proposition}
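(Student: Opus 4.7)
The plan is to expand $\zeta$ explicitly as a linear combination of the mutually independent Gaussian sources $X_0$ and $\{W_{u\to v}: u\lessdot v\}$, and then read off $\Var(\zeta)$ as a sum of squared coefficients. First I would verify that the extended coefficients $\{a_v\}$ satisfy the backward recurrence
\[
a_v \;=\; \sum_{w:\, v\,\lessdot\, w} \alpha_{|\fp(w)|}\, a_w \qquad (\rho(v)<t),
\]
obtained from the definition $a_v = \sum_{u\in L_t}\sum_{\gamma: v\to u}\wt(\gamma)a_u$ by splitting each path on its first edge. Unfolding the model recurrence \eqref{eq:model-rec} from layer $t$ downward, a straightforward induction on $t-\rho(v)$ then shows the explicit decomposition
\[
\zeta \;=\; a_{\zv}\, X_0 \;+\; \sum_{m=1}^{t}\sum_{v\in L_m}\sum_{u\in\fp(v)} \alpha_{|\fp(v)|}\, a_v\, W_{u\to v}.
\]
The intuition is that the noise $W_{u\to v}$ is injected into $X_v$ with weight $\alpha_{|\fp(v)|}$ and then propagates up to layer $t$ along exactly the same weighted paths that appear in the definition of $a_v$. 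Since $\varepsilon=1$ and the sources are independent with unit variance, taking $\Var$ of the displayed expression and collapsing the inner sum over $u\in\fp(v)$ (which contributes a factor of $|\fp(v)|$) proves part~(a).

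For part~(b), I would compare the per-vertex coefficient $|\fp(v)|\alpha_{|\fp(v)|}^2$ against $1$ and against $1/C^2$. The hypothesis $\alpha_i\le 1/i$ yields $i\alpha_i^2 \le 1/i \le 1$, so $|\fp(v)|\alpha_{|\fp(v)|}^2 \le 1$, which delivers the upper bound $\Var(\zeta)\le\sum_{m=0}^t\sum_{v\in L_m}a_v^2$ term-by-term in the formula from (a). Conversely, by the definition $C=\max_i i^{-1/2}\alpha_i^{-1}$ one has $C^2\ge 1/(i\alpha_i^2)$ for every $i\in[d+1]$, so $|\fp(v)|\alpha_{|\fp(v)|}^2 \ge 1/C^2$; combined with $C\ge\sqrt{d+1}\ge 1$ (which handles the $a_{\zv}^2$ summand), this gives the lower bound $\Var(\zeta)\ge(1/C^2)\sum_{m=0}^t\sum_{v\in L_m}a_v^2$.

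The only nontrivial bookkeeping step is identifying the coefficient of each $W_{u\to v}$ in $\zeta$, but this is immediate once the backward recurrence for $a_v$ is in hand. After that, both parts reduce to the pointwise inequalities $1/C^2 \le i\alpha_i^2 \le 1$ on $i\in[d+1]$, so there is no serious analytic or combinatorial obstacle.
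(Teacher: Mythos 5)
Your proposal is correct and follows essentially the same route as the paper: both derive the decomposition $\zeta = a_{\zv}X_0 + \sum_{m,w,v}\alpha_{|\fp(w)|}a_w W_{v\to w}$ (the paper by expanding each $X_u$ via weighted paths and swapping sums, you by a downward induction using the backward recurrence for $a_v$, which is the same computation) and then read off the variance from independence. Part (b) is likewise identical, via the pointwise bounds $1/C^2 \le i\alpha_i^2 \le 1$, and your explicit remark that $C\ge 1$ handles the $a_{\zv}^2$ term is a small but welcome bit of care.
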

\begin{proof}
{\bf (a)} From the model recurrence \eqref{eq:model-rec}, we can write for each $u \in L_t$,
\[
X_u = \sum_{\gamma: \zv \to u} \wt(\gamma) X_0 + \sum_{\substack{v, w \\ v \lessdot w \le u}} \alpha_{|\fp(w)|} \sum_{\gamma: w \to u} \wt(\gamma) W_{v \to w}.
\]
Therefore,
\begin{align*}
\zeta &= \sum_{u \in L_t} a_u \sum_{\gamma: \zv \to u} \wt(\gamma) X_0 + \sum_{u \in L_t} a_u \sum_{\substack{v, w \\ v \lessdot w \le u}} \alpha_{|\fp(w)|} \sum_{\gamma: w \to u} \wt(\gamma) W_{v \to w} \\
&= a_{\zv} X_0 + \sum_{\substack{v, w \\ v \lessdot w}} \alpha_{|\fp(w)|} W_{v\to w} \sum_{u \in L_t} \sum_{\gamma: w \to u} \wt(\gamma) a_u \\
&= a_{\zv} X_0 + \sum_{m=1}^t \sum_{w \in L_m} \sum_{v \in \fp(w)} \alpha_{|\fp(w)|} a_w W_{v \to w}.
\end{align*}
Hence,
\[
\Var(\zeta) = a_{\zv}^2 + \sum_{m=1}^t \sum_{w \in L_m} |\fp(w)| \alpha_{|\fp(w)|}^2 a_w^2.
\]

\medskip

\noindent {\bf (b)} This part follows from using part~{\bf (a)} with the observation that for each summand above,
\[
\frac{1}{C^2} \le |\fp(w)| \alpha_{|\fp(w)|}^2 \le 1,
\]
by the definition of the constant $C$.
\end{proof}

Let us now introduce some useful notations. For each $j \in [d+1]$, we define
\[
\lambda_j := \left\lfloor \log^{(2d+2-2j)}(t) \right\rfloor.
\]
Let us decompose
\[
L_{\lambda_j} = S_j^1 \cup S_j^2 \cup \cdots \cup S_j^{d+1},
\]
where $S_j^i$ is the set of all $(d+1)$-tuples $u \in L_{\lambda_j}$ with exactly $i$ positive entries. We also use the convenient notation
\[
S^{\ge i}_j := \bigcup_{i' \ge i} S^{i'}_j.
\]

For each pair $u, v \in P$, define
\begin{equation}\label{eq:p-u-to-v-def}
p(u \to v) := \sum_{\gamma:u \to v} \wt(\gamma).
\end{equation}
For each $m \ge 0$ and $v = (v_1, \ldots, v_{d+1}) \in L_m$, we use the notation $\binom{m}{v}$ for the multinomial coefficient
\[
\binom{m}{v} = \binom{m}{v_1, \ldots, v_{d+1}} := \frac{m!}{v_1! \cdots v_{d+1}!}.
\]

\begin{proposition}\label{prop:i-j-1-p-u-v}
For any $i, j_1, j_2 \in [d+1]$ with $j_1 < j_2$, and for any $v \in S^i_{j_2}$, we have
\[
\sum_{u \in S^i_{j_1}} p(u \to v) \le \binom{\lambda_{j_2}}{v} i^{\lambda_{j_1} - \lambda_{j_2}}.
\]
\end{proposition}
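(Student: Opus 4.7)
The plan is to reduce $p(u \to v)$ to an explicit closed form by exploiting the rigid structure of directed paths on the grid, and then to sum over $u$ via a standard multinomial path-decomposition identity.

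First I would observe that $p(u \to v) = 0$ unless $u \le v$ coordinatewise. When $u \in S^i_{j_1}$ satisfies $u \le v$, both $u$ and $v$ have exactly $i$ positive coordinates, and the support of $u$ must be contained in the support of $v$; by equality of sizes these supports coincide in some $I \subseteq [d+1]$ with $|I| = i$. Moreover, every intermediate vertex $z$ on a directed path $\gamma : u \to v$ must also have support exactly $I$: if at some step the path incremented a coordinate $k \notin I$, this coordinate would become positive and could never be decremented, contradicting $v_k = 0$. Consequently every intermediate $z$ satisfies $|\fp(z)| = i$, so each edge of $\gamma$ carries weight $\alpha_i$, and setting $r := \lambda_{j_2} - \lambda_{j_1}$ gives $\wt(\gamma) = \alpha_i^r$. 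The number of such paths is the multinomial coefficient $\binom{r}{v - u}$ (the number of orderings of the $r$ required coordinate increments), so
\[
p(u \to v) = \alpha_i^r \binom{r}{v - u} \quad \text{for } u \in S^i_{j_1} \text{ with } u \le v.
\]

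Next I would sum over $u \in S^i_{j_1}$ by invoking the lattice-path decomposition
\[
\binom{\lambda_{j_2}}{v} = \sum_{\substack{u \in L_{\lambda_{j_1}} \\ u \le v}} \binom{\lambda_{j_1}}{u} \binom{r}{v - u},
\]
which splits each length-$\lambda_{j_2}$ lattice path from $\zv$ to $v$ at its step $\lambda_{j_1}$. Since $\binom{\lambda_{j_1}}{u} \ge 1$ and $S^i_{j_1} \subseteq L_{\lambda_{j_1}}$, dropping both the factor $\binom{\lambda_{j_1}}{u}$ and the terms with $u \notin S^i_{j_1}$ yields
\[
\sum_{u \in S^i_{j_1}} \binom{r}{v - u} \le \binom{\lambda_{j_2}}{v}.
\]
Combining with the previous paragraph gives $\sum_{u \in S^i_{j_1}} p(u \to v) \le \alpha_i^r \binom{\lambda_{j_2}}{v}$. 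Finally, the standing hypothesis $\alpha_i \le 1/i$ forces $\alpha_i^r \le i^{-r} = i^{\lambda_{j_1} - \lambda_{j_2}}$, which yields the claimed bound.

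The only step that requires genuine care is the support-preservation argument along paths, which forces every edge weight to equal $\alpha_i$ rather than some larger $\alpha_{i'}$ with $i' > i$; once this is in hand, the remainder is a short combinatorial calculation combined with the elementary estimate $\alpha_i \le 1/i$ available in this subsection.
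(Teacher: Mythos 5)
Your proof is correct and follows essentially the same route as the paper: both arguments rest on the observation that paths between elements of the same support type carry only $\alpha_i$-weights, a multinomial path-splitting (Vandermonde) identity at level $\lambda_{j_1}$, and the discard of a multinomial factor that is at least $1$. The only cosmetic difference is that you anchor the decomposition at $\zv$ and compute $p(u \to v)$ in closed form, whereas the paper anchors it at the indicator vector $\tau$ of the support of $v$ and compares two expressions for $p(\tau \to v)$.
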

\begin{proof}
Let $\tau \in L_i$ be the $\{0,1\}$-vector where each entry is $1$ if and only if the corresponding entry in $v$ is positive. We have
\begin{equation}\label{eq:p-tau-v-ge}
p(\tau \to v) = \sum_{\substack{u \in S^i_{j_1} \\ u \ge \tau}} p(\tau \to u) p(u \to v) \ge \sum_{\substack{u \in S^i_{j_1} \\ u \ge \tau}} \alpha_i^{\lambda_{j_1} - i} p(u \to v).
\end{equation}
On the other hand,
\begin{equation}\label{eq:p-tau-v-le}
p(\tau \to v) = \alpha_i^{\lambda_{j_2} - i} \binom{\lambda_{j_2} - i}{v - \tau} \le \alpha_i^{\lambda_{j_2} - i} \binom{\lambda_{j_2}}{v}.
\end{equation}
Combining \eqref{eq:p-tau-v-ge} and \eqref{eq:p-tau-v-le}, we find
\[
\sum_{u \in S^i_{j_1}} p(u \to v) = \sum_{\substack{u \in S^i_{j_1} \\ u \ge \tau}} p(u \to v) \le \alpha_i^{\lambda_{j_2} - \lambda_{j_1}} \binom{\lambda_{j_2}}{v} \le \binom{\lambda_{j_2}}{v} i^{\lambda_{j_1} - \lambda_{j_2}},
\]
as desired.
\end{proof}

\begin{proposition}\label{prop:sum-p-le-1}
For any $v \in P$ and for any $t \ge 0$, we have
\[
\sum_{u \in L_t} p(u \to v) \le 1.
\]
\end{proposition}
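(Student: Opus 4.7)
The plan is to prove this by induction on the difference $\rho(v) - t$, exploiting the standing assumption $\alpha_i \le 1/i$ from the beginning of Subsection~\ref{subsec:in-box}.

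First, I would dispose of the degenerate cases. If $\rho(v) < t$, then there is no directed path from any $u \in L_t$ to $v$ (paths go from lower to higher rank), so the sum is $0 \le 1$. If $\rho(v) = t$, then the only element $u \in L_t$ with $u \le v$ is $u = v$ itself, and $p(v \to v) = 1$ coming from the empty path, so the sum equals $1$.

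For the inductive step, suppose the bound holds for every poset element of rank strictly less than $s$, and let $v \in P$ have rank $s > t$. Every directed path from $u \in L_t$ to $v$ passes through a unique penultimate vertex $w \in \fp(v)$, so decomposing by this penultimate vertex gives
\[
p(u \to v) = \sum_{w \in \fp(v)} p(u \to w) \cdot \wt(w \to v) = \alpha_{|\fp(v)|} \sum_{w \in \fp(v)} p(u \to w).
\]
Summing over $u \in L_t$ and swapping the order of summation yields
\[
\sum_{u \in L_t} p(u \to v) = \alpha_{|\fp(v)|} \sum_{w \in \fp(v)} \sum_{u \in L_t} p(u \to w) \le \alpha_{|\fp(v)|} \cdot |\fp(v)| \le 1,
\]
where the first inequality uses the inductive hypothesis applied to each $w \in \fp(v)$ (which has rank $s-1 \ge t$), and the second uses $\alpha_i \le 1/i$ with $i = |\fp(v)|$.

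This establishes the claim. The argument is essentially a one-line application of the sub-stochasticity encoded in $\alpha_i \le 1/i$, propagated backwards through the Hasse diagram; there is no real obstacle, and the only thing worth flagging is that the empty-path convention $p(v \to v) = 1$ is needed to initialize the induction cleanly.
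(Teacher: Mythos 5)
Your proof is correct and is essentially identical to the paper's: the same induction on the rank gap, the same decomposition of each path through its penultimate vertex in $\fp(v)$, and the same final step using $|\fp(v)|\,\alpha_{|\fp(v)|} \le 1$. No issues.
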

\begin{proof}
Suppose that $v \in L_{t'}$. If $t > t'$, then the sum is zero. If $t = t'$, then the sum is one. For $t < t'$, we argue by induction on $t' - t$ as follows.
\begin{align*}
\sum_{u \in L_t} p(u \to v) &= \sum_{u \in L_t} \sum_{w \in \fp(v)} p(u \to w) p(w \to v) \\
&= \sum_{w \in \fp(v)} \alpha_{|\fp(v)|} \sum_{u \in L_t} p(u \to w) \\
&\le |\fp(v)| \alpha_{|\fp(v)|} \le 1,
\end{align*}
where the first inequality follows from the inductive hypothesis, and the second from the parameter assumption.
\end{proof}

\begin{proposition}\label{prop:L1-monotone}
For any $i, j_1, j_2 \in [d+1]$ with $j_1 < j_2$, we have
\[
\sum_{u \in S^{\ge i}_{j_1}} |a_u| \le \sum_{u \in S^{\ge i}_{j_2}} |a_u|.
\]
\end{proposition}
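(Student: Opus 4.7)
The plan is to reduce the inequality to a one-step statement between consecutive layers and then iterate. The key observation is that the coefficients $a_v$ satisfy a simple recurrence in $v$: by grouping the path sum defining $a_v$ according to its first step, we have
\[
a_v \;=\; \sum_{w:\, v \lessdot w} \alpha_{|\fp(w)|}\, a_w \;=\; \sum_{w:\, v \lessdot w} \alpha_{|\mathrm{supp}(w)|}\, a_w,
\]
using that in $P = \mathbb{Z}_{\ge 0}^{d+1}$ one has $|\fp(w)| = |\mathrm{supp}(w)|$. Since $j_1 < j_2$ corresponds to $\lambda_{j_1} < \lambda_{j_2}$, it suffices to prove the one-step inequality
\[
\sum_{v \in L_{\lambda-1},\, |\mathrm{supp}(v)| \ge i} |a_v| \;\le\; \sum_{w \in L_{\lambda},\, |\mathrm{supp}(w)| \ge i} |a_w|
\]
for each $1 \le \lambda \le t$, and then chain this inequality from $\lambda = \lambda_{j_1}+1$ up to $\lambda = \lambda_{j_2}$.

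For the one-step inequality, I would start with the triangle inequality $|a_v| \le \sum_{w: v \lessdot w} \alpha_{|\mathrm{supp}(w)|} |a_w|$ and exchange the order of summation. Every $w$ covering a $v$ with $|\mathrm{supp}(v)| \ge i$ automatically satisfies $|\mathrm{supp}(w)| \ge |\mathrm{supp}(v)| \ge i$, so after swapping the sum becomes
\[
\sum_{v:\,|\mathrm{supp}(v)|\ge i} |a_v| \;\le\; \sum_{w:\,|\mathrm{supp}(w)|\ge i} \alpha_{|\mathrm{supp}(w)|}\, |a_w|\cdot N(w),
\]
where $N(w) := \#\{v \in \fp(w) : |\mathrm{supp}(v)| \ge i\}$. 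It then suffices to show $\alpha_{|\mathrm{supp}(w)|}\, N(w) \le 1$ for every such $w$.

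The bound on $N(w)$ is the one point requiring a little care and is where I expect the only real subtlety. Write $i' := |\mathrm{supp}(w)|$; the covers of $w$ are exactly $w - e_k$ for $k \in \mathrm{supp}(w)$, with $|\mathrm{supp}(w-e_k)| = i'-1$ if $w_k = 1$ and $= i'$ if $w_k > 1$. When $i' > i$, every $v \in \fp(w)$ satisfies $|\mathrm{supp}(v)| \ge i'-1 \ge i$, giving $N(w) = i'$. When $i' = i$, only the $v$ with $|\mathrm{supp}(v)| = i'$ remain, so $N(w) = \#\{k \in \mathrm{supp}(w): w_k > 1\} \le i'$. In both cases $N(w) \le i' = |\mathrm{supp}(w)|$, and the standing assumption $\alpha_{i'} \le 1/i'$ yields $\alpha_{|\mathrm{supp}(w)|}\, N(w) \le 1$, as required.

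Combining the last two displays proves the one-step inequality; iterating $\lambda_{j_2} - \lambda_{j_1}$ times establishes the proposition.
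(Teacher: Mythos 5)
Your proof is correct and is essentially the paper's argument: the paper also combines the triangle inequality with the facts that support can only grow along covering relations and that the weights are sub-stochastic ($|\fp(w)|\alpha_{|\fp(w)|}\le 1$), except that it jumps directly from layer $\lambda_{j_1}$ to layer $\lambda_{j_2}$ using the aggregated path weights $p(u\to v)$ and the bound $\sum_{u\in L_s}p(u\to v)\le 1$ (Proposition~\ref{prop:sum-p-le-1}), which is itself proved by exactly the one-step induction you carry out. Your layer-by-layer iteration is just an unrolled version of the same argument.
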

\begin{proof}
We have
\begin{align*}
\sum_{u \in S^{\ge i}_{j_1}} |a_u| &\le \sum_{u \in S^{\ge i}_{j_1}} \sum_{v \in S^{\ge i}_{j_2}} p(u \to v) |a_v| \\
&= \sum_{v \in S^{\ge i}_{j_2}} |a_v| \sum_{u \in S^{\ge i}_{j_1}} p(u \to v) \\
&\le \sum_{v \in S^{\ge i}_{j_2}} |a_v|,
\end{align*}
where we have used the triangle inequality and Proposition~\ref{prop:sum-p-le-1}.
\end{proof}

\begin{lemma}\label{lem:i-2-l-j-i}
For any $i,j \in [d+1]$, we have
\[
\sum_{v \in S^i_j} \binom{\lambda_j}{v}^2 \le 2^{d+1} \frac{i^{2\lambda_j + i}}{\lambda_j^{(i-1)/2}}.
\]
\end{lemma}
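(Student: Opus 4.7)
The plan is to bound the sum in three moves: decompose by the location of the positive coordinates, replace the sum of squares of multinomials by (max)$\times$(sum), and then estimate the maximum multinomial coefficient via Stirling.

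First, note that every $v \in S^i_j$ is determined by a choice of an $i$-element subset $T \subseteq [d+1]$ (telling which coordinates are positive) together with a positive composition $(w_1,\ldots,w_i)$ of $\lambda_j$ placed on those coordinates; and the multinomial coefficient $\binom{\lambda_j}{v}$ depends only on the positive entries. So I would write
\[
\sum_{v \in S^i_j} \binom{\lambda_j}{v}^2 \;=\; \binom{d+1}{i} \sum_{\substack{w_1+\cdots+w_i=\lambda_j \\ w_k \ge 1}} \binom{\lambda_j}{w_1,\ldots,w_i}^2.
\]
Since $\binom{d+1}{i} \le 2^{d+1}$, it now suffices to show that the inner sum is at most $i^{2\lambda_j+i}/\lambda_j^{(i-1)/2}$.

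Second, I would enlarge the inner sum to all nonnegative compositions (which can only make it larger) and separate a maximum from the sum:
\[
\sum_{\substack{w_1+\cdots+w_i=\lambda_j \\ w_k \ge 0}} \binom{\lambda_j}{w}^2 \;\le\; M \cdot \sum_{\substack{w_1+\cdots+w_i=\lambda_j \\ w_k \ge 0}} \binom{\lambda_j}{w} \;=\; M \cdot i^{\lambda_j},
\]
where $M := \max_w \binom{\lambda_j}{w_1,\ldots,w_i}$, and the last equality is the multinomial theorem at $x_1 = \cdots = x_i = 1$. Thus the target reduces to $M \le i^{\lambda_j+i}/\lambda_j^{(i-1)/2}$.

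Finally, I would estimate $M$ using Stirling. The max is attained when $w$ is as balanced as possible, $w_k \approx \lambda_j/i$, and a direct application of $n! = \sqrt{2\pi n}(n/e)^n e^{O(1/n)}$ gives
\[
M \;\le\; \frac{e \cdot i^{\lambda_j + i/2}}{(2\pi)^{(i-1)/2}\,\lambda_j^{(i-1)/2}}.
\]
The desired inequality $M \le i^{\lambda_j+i}/\lambda_j^{(i-1)/2}$ then follows provided $e \le i^{i/2}(2\pi)^{(i-1)/2}$, which holds comfortably for $i \ge 2$; for $i=1$ there is nothing to do, since $M = \binom{\lambda_j}{\lambda_j} = 1$ and the target is also $1$.

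The main obstacle I anticipate is purely technical: making the Stirling estimate robust to non-integrality of $\lambda_j/i$ (the maximizer is a nearby balanced vector, so one needs a clean version of Stirling that handles a small perturbation away from $\lambda_j/i$) and double-checking the small-$i$ cases where the Stirling prefactor is no longer negligible. The slack factor $i^i$ (versus the Stirling-predicted $i^{i/2}$) in the statement of the lemma is exactly what absorbs these constants and the non-integrality, so the bound should fall out once one tracks constants carefully.
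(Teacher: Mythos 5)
Your argument is correct, and while your opening move is identical to the paper's (splitting $S^i_j$ into $\binom{d+1}{i}\le 2^{d+1}$ classes according to which coordinates are positive, so that everything reduces to bounding $\sum_{w_1+\cdots+w_i=\lambda_j}\binom{\lambda_j}{w}^2$), the key estimate is handled by a genuinely different and much lighter route. The paper invokes Lemma~\ref{l:general-i}, whose proof is a Richmond--Shallit-style integral approximation requiring exact Stirling bounds for the Gamma function, convexity of $\log\Gamma$, a separate treatment of the tail region, and an explicit Gaussian integral; that machinery delivers essentially the sharp asymptotic constant $i^{2n+i/2}/(4\pi n)^{(i-1)/2}$ from \eqref{eq:F-m-i-asymp}. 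Your max-times-sum trick,
\[
\sum_{w}\binom{\lambda_j}{w}^2 \le \Bigl(\max_w \binom{\lambda_j}{w}\Bigr)\cdot i^{\lambda_j},
\]
combined with a one-sided Stirling bound on the single (near-)balanced multinomial coefficient, loses roughly a factor $2^{(i-1)/2}$ against the sharp constant but comfortably fits inside the $i^{i}$-versus-$i^{i/2}$ slack in the lemma statement, exactly as you observe; the hypothesis $\lambda_j\ge \exp^{(3d+3)}(1)$ gives far more room than you need for the non-integrality of $\lambda_j/i$ and the Stirling error terms (e.g.\ via Jensen applied to $\sum w_k\log w_k\ge \lambda_j\log(\lambda_j/i)$). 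Since Lemma~\ref{l:general-i} is used only to prove this lemma, your argument would in fact let one delete that general-$i$ estimate entirely; what it gives up is the near-optimal constant, which the paper records but never uses.
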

\begin{proof}
Each element $v \in S^i_j$ has exactly $i$ positive entries and $d+1-i$ zero entries. This decomposes $S^i_j$ into $\binom{d+1}{i}$ subsets of equal size. Using this decomposition, we find
\[
\sum_{v \in S^i_j} \binom{\lambda_j}{v}^2 = \binom{d+1}{i} \sum_{\substack{v_1, \ldots, v_i > 0 \\ v_1 + \cdots + v_i = \lambda_j}} \binom{\lambda_j}{v_1, \ldots, v_i}^2 \le 2^{d+1} \frac{i^{2\lambda_j + i}}{\lambda_j^{(i-1)/2}},
\]
where the final step uses Lemma~\ref{l:general-i} with $\lambda_j$ and $i$ playing the roles of $n$ and $i$ in the lemma statement. Note that the assumption in the lemma is satisfied, since $\lambda_j \ge \exp^{(3d+3)}(1)$ and $d+1 \ge i$.
\end{proof}

For each $j \in [d+1]$, let us define $x_j$ by
\[
\sum_{u \in S_j^{\ge j+1}} |a_u| = x_j \cdot C \sqrt{\Var(\zeta)}.
\]
Note that $x_{d+1} = 0$.

Using Proposition~\ref{prop:L1-monotone}, we find, for each $j \in [d]$,
\begin{align}
\sum_{u \in S_j^{\ge j+1}} |a_u| &= \sum_{u \in S_j^{j+1}} |a_u| + \sum_{u \in S_j^{\ge j+2}} |a_u| \label{eq:u-in-S-j-j-1} \\
&\le \sum_{u \in S_j^{j+1}} |a_u| + \sum_{u \in S_{j+1}^{\ge j+2}} |a_u| \notag \\
&= \sum_{u \in S_j^{j+1}} |a_u| + x_{j+1} \cdot C \sqrt{\Var(\zeta)}. \label{eq:ge-j1-j-1}
\end{align}

Let us analyze the sum in \eqref{eq:ge-j1-j-1}. By the triangle inequality, we have
\begin{align}
\sum_{u \in S_j^{j+1}} |a_u| &\le \sum_{u \in S_j^{j+1}} \sum_{v \in S_{j+1}^{\ge j+1}} p(u \to v) |a_v| \notag \\
&= \sum_{v \in S^{\ge j+2}_{j+1}} |a_v| \sum_{u \in S^{j+1}_j} p(u \to v) + \sum_{v \in S^{j+1}_{j+1}} |a_v| \sum_{u \in S^{j+1}_j} p(u \to v). \label{eq:j-2-j-1}
\end{align}
The first sum in \eqref{eq:j-2-j-1} can be bounded using Proposition~\ref{prop:sum-p-le-1}:
\begin{equation}
\sum_{v \in S^{\ge j+2}_{j+1}} |a_v| \sum_{u \in S^{j+1}_j} p(u \to v) \le \sum_{v \in S^{\ge j+2}_{j+1}} |a_v| = x_{j+1} \cdot C \sqrt{\Var(\zeta)}. \label{eq:v-in-S-j2-j1}
\end{equation}
For the second sum in \eqref{eq:j-2-j-1}, we have
\begin{align}
\sum_{v \in S^{j+1}_{j+1}} |a_v| \sum_{u \in S^{j+1}_j} p(u \to v)
&\le \sum_{v \in S_{j+1}^{j+1}} |a_v| \binom{\lambda_{j+1}}{v} (j+1)^{\lambda_j - \lambda_{j+1}} \label{eq:v-j1-j1} \\
&\le (j+1)^{\lambda_j - \lambda_{j+1}} \left( \sum_{v \in S^{j+1}_{j+1}} a_v^2 \right)^{1/2} \left( \sum_{v \in S^{j+1}_{j+1}} \binom{\lambda_{j+1}}{v}^2 \right)^{1/2} \notag \\
&\le (j+1)^{\lambda_j - \lambda_{j+1}} \cdot C \sqrt{\Var(\zeta)} \cdot \sqrt{2^{d+1} \frac{(j+1)^{2\lambda_{j+1} + j + 1}}{\lambda_{j+1}^{j/2}}}, \notag
\end{align}
where we have used Proposition~\ref{prop:i-j-1-p-u-v}, Cauchy--Schwarz, and Lemma~\ref{lem:i-2-l-j-i}, respectively.

Combining \eqref{eq:u-in-S-j-j-1}, \eqref{eq:j-2-j-1}, \eqref{eq:v-in-S-j2-j1}, and \eqref{eq:v-j1-j1}, we obtain
\[
x_j \le 2x_{j+1} + \frac{(j+1)^{\lambda_j} 2^{(d+1)/2} (j+1)^{(j+1)/2}}{\lambda_{j+1}^{j/4}}.
\]
Hence,
\[
2^j x_j \le 2^{j+1} x_{j+1} + \frac{(j+1)^{\lambda_j}}{\lambda_{j+1}^{j/4}} \cdot 2^{3d} d^d.
\]
Now we use the telescoping trick to obtain
\[
2x_1 \le 2^{3d} d^d \sum_{j=1}^d \frac{(j+1)^{\lambda_j}}{\lambda_{j+1}^{j/4}}.
\]
From the parameter assumptions, for every $j \in [d]$, we have
\[
\frac{(j+1)^{\lambda_j}}{\lambda_{j+1}^{j/4}} \le \frac{1}{\sqrt[5]{\log^{(2d-1)}(t)}}.
\]
Therefore,
\begin{equation}\label{eq:log-2d-t}
\sum_{u \in S_1^{\ge 2}} |a_u| = x_1 \cdot C \sqrt{\Var(\zeta)} \le \frac{C \sqrt{\Var(\zeta)}}{\log^{(2d)}(t)}.
\end{equation}

Now for each positive integer $m$ such that $1 \le m \le \log^{(2d+1)}(t)$, let us write
\[
L_m = L'_m \uplus L''_m,
\]
where $L'_m$ is the $(d+1)$-element subset of $L_m$ containing the tuples with only one positive entry. Using a triangle inequality argument similar to the proof of Proposition~\ref{prop:L1-monotone}, we obtain
\begin{equation}\label{ineq:u-in-L''-m}
\sum_{u \in L''_m} |a_u| \le \sum_{u \in S^{\ge 2}_1} |a_u| \overset{\eqref{eq:log-2d-t}}{\le} \frac{C \sqrt{\Var(\zeta)}}{\log^{(2d)}(t)}.
\end{equation}

We can now prove the main theorem of this subsection.
\begin{theorem}\label{thm:corr-log-log-log-log-t}
We have
\[
\frac{|\Cov(\zeta, X_0)|}{\sqrt{\Var(\zeta)}} \le \frac{C}{\log^{(2d+2)}(t)}.
\]
\end{theorem}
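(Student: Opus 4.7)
The plan is to bound $|\Cov(\zeta,X_0)|=|a_{\zv}|$ via the extension identity
\[
a_{\zv} = \sum_{u \in L_m} p(\zv \to u)\, a_u, \qquad m \ge 1,
\]
(obtained by iterating the downward recursion defining $a_v$), then apply the triangle inequality. As a warm-up, a one-line induction on the rank (mirroring the proof of Proposition~\ref{prop:sum-p-le-1}) gives $p(\zv \to u) \le 1$ for every $u \in P$, because
\[
p(\zv \to u) \;=\; \alpha_{|\fp(u)|} \sum_{v \in \fp(u)} p(\zv \to v) \;\le\; |\fp(u)|\,\alpha_{|\fp(u)|} \;\le\; 1.
\]
I will split $L_m = L'_m \uplus L''_m$ as in the paragraph preceding the theorem; on $L'_m$, each $u$ has the form $m e_i$, and the unique chain from $\zv$ to $m e_i$ gives $p(\zv \to m e_i) = \alpha_1^m \le 1$.

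The key step is to \emph{average} the triangle inequality over $m = 1, 2, \ldots, M$ with $M := \lfloor \log^{(2d+1)}(t) \rfloor$. The $L''_m$ contribution is immediate from $p(\zv \to u) \le 1$ together with \eqref{ineq:u-in-L''-m}:
\[
\sum_{m=1}^{M} \sum_{u \in L''_m} p(\zv \to u)\,|a_u| \;\le\; \sum_{m=1}^{M} \sum_{u \in L''_m} |a_u| \;\le\; \frac{M \cdot C\,\sqrt{\Var(\zeta)}}{\log^{(2d)}(t)}.
\]
For the $L'_m$ contribution, Proposition~\ref{prop:1-C-2-Var}(a) specializes (since $|\fp(me_i)|=1$) to $\sum_{m=1}^{t} \alpha_1^{2} a_{m e_i}^{2} \le \Var(\zeta)$ for each $i \in [d+1]$. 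Combining $p(\zv \to m e_i) \le 1$ with the Cauchy--Schwarz inequality,
\[
\sum_{m=1}^{M} \sum_{u \in L'_m} p(\zv \to u)\,|a_u| \;\le\; \sum_{i=1}^{d+1} \sum_{m=1}^{M} |a_{m e_i}| \;\le\; \sum_{i=1}^{d+1} \sqrt{\,M \sum_{m=1}^{M} a_{m e_i}^{2}\,} \;\le\; \frac{(d+1)\,\sqrt{M\,\Var(\zeta)}}{\alpha_1}.
\]

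Dividing the sum of these two bounds by $M$ and using $C \ge 1/\alpha_1$ to unify the constants, I obtain
\[
\frac{|a_{\zv}|}{\sqrt{\Var(\zeta)}} \;\le\; \frac{C}{\log^{(2d)}(t)} \;+\; \frac{(d+1)\,C}{\sqrt{M}}.
\]
Writing $y := \log^{(2d+2)}(t)$, the first error term is on the scale $1/\exp(\exp(y))$ and the second on the scale $1/\exp(y/2)$, both hyper-exponentially smaller than the target $1/y$ under the standing assumption $t > \exp^{(5d+4)}(1)$; the pre-factors $1$ and $d+1$ are easily absorbed by this slack, yielding the claimed bound $C/\log^{(2d+2)}(t)$.

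The principal obstacle lies in the $L'_m$ piece: when $\alpha_1$ is allowed to be as large as $1$, the pointwise estimate $|a_{m e_i}| \le C\sqrt{\Var(\zeta)}$ has no decay in $m$, so no single layer $L_m$ suffices. The Ces\`aro-style averaging replaces this by an $\ell^{2}$-type control from Proposition~\ref{prop:1-C-2-Var}(a), and Cauchy--Schwarz converts the saving to a factor $1/\sqrt{M}$. This is just enough: $\sqrt{\log^{(2d+1)}(t)} = \sqrt{\exp(\log^{(2d+2)}(t))}$ beats $\log^{(2d+2)}(t)$ hyper-exponentially, which is exactly what the statement demands.
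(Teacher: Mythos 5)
Your proof is correct, and it follows the paper's strategy in all essentials: the same splitting $L_m = L'_m \uplus L''_m$, the same key input \eqref{ineq:u-in-L''-m} for the off-axis mass, the identity $a_{\zv} = \sum_{u \in L_m} p(\zv \to u)\,a_u$ with $p(\zv \to u) \le 1$, and the variance identity of Proposition~\ref{prop:1-C-2-Var} to show that the axis mass cannot be large on all of the first $\sim \log^{(2d+1)}(t)$ layers at once. The one genuine (if minor) structural difference is in the endgame. The paper runs a dichotomy: either $|a_{\zv}|$ is already small, or else the off-axis error can be absorbed as $\tfrac{1}{2}|a_{\zv}|$, forcing each layer $L'_m$ to carry $\ell^1$ mass $\ge \tfrac{1}{2}|a_{\zv}|$; Cauchy--Schwarz \emph{within} each layer (over its $d+1$ points) converts this to an $\ell^2$ lower bound, and summing over $m$ bounds $\Var(\zeta)$ from below. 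You instead average the triangle inequality over $m = 1, \dots, M$ and apply Cauchy--Schwarz \emph{along} each coordinate axis (over $m$), so the two error terms simply add and no case split is needed. Both arrangements extract the same $1/\sqrt{\log^{(2d+1)}(t)}$ gain, which is then weakened to the stated $1/\log^{(2d+2)}(t)$; your version is arguably slightly cleaner, the paper's makes the ``constant fraction of $a_{\zv}$ on every layer'' mechanism more visible.
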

\begin{proof}
We consider two cases.

\medskip

\underline{Case 1.} Suppose that
\[
|a_{\zv}| \le \frac{C\sqrt{\Var(\zeta)}}{\log^{(2d+1)}(t)}.
\]
In this case,
\[
\frac{|\Cov(\zeta, X_0)|}{\sqrt{\Var(\zeta)}} = \frac{|a_{\zv}|}{\sqrt{\Var(\zeta)}} \le \frac{C}{\log^{(2d+1)}(t)} \le \frac{C}{\log^{(2d+2)}(t)}.
\]

\medskip

\underline{Case 2.} Suppose that
\begin{equation}\label{ineq:a-zv-log-2d-1}
|a_{\zv}| > \frac{C\sqrt{\Var(\zeta)}}{\log^{(2d+1)}(t)}.
\end{equation}
For each positive integer $m \le \log^{(2d+1)}(t)$, we have by the triangle inequality,
\begin{align*}
|a_{\zv}| &\le \sum_{u \in L'_m} p(\zv \to u) |a_u| + \sum_{u \in L''_m} p(\zv \to u) |a_u| \\
&\overset{\eqref{ineq:u-in-L''-m}}{\le} \sum_{u \in L'_m} |a_u| + \frac{C\sqrt{\Var(\zeta)}}{\log^{(2d)}(t)} \\
&\overset{\eqref{ineq:a-zv-log-2d-1}}{\le} \sum_{u \in L'_m} |a_u| + \frac{1}{2} | a_{\zv} |.
\end{align*}
Hence, by Cauchy--Schwarz,
\[
\frac{1}{2}|a_{\zv}| \le \sum_{u \in L'_m} |a_u| \le \sqrt{d+1} \sqrt{\sum_{u \in L'_m} a_u^2 },
\]
which gives
\[
\sum_{u \in L'_m} a_u^2  \ge \frac{a_{\zv}^2}{4(d+1)}.
\]
Using Proposition~\ref{prop:1-C-2-Var}, we find
\[
C^2 \Var(\zeta) \ge \sum_{m \le \log^{(2d+1)}(t)} \, \sum_{u \in L'_m} a_u^2 \ge \frac{a_{\zv}^2}{4(d+1)} \left\lfloor \log^{(2d+1)}(t) \right\rfloor,
\]
which gives
\[
\frac{|\Cov(\zeta, X_0)|}{\sqrt{\Var(\zeta)}} \le \frac{2C\sqrt{d+1}}{\sqrt{\left\lfloor \log^{(2d+1)}(t) \right\rfloor}} \le \frac{C}{\log^{(2d+2)}(t)},
\]
as desired.
\end{proof}

\subsection{Outside the Box}\label{subsec:out-box}
In this subsection, we prove part~(b) of Theorem~\ref{thm:rec-orthant}.
\begin{lemma}\label{l:d-d'}
For any nonnegative integers $d < d'$ and for any positive real numbers $\alpha_1, \ldots, \alpha_{d'+1}$, if reconstruction is possible for the poset $\mathbb{Z}_{\ge 0}^{d+1}$ and the tuple $(\alpha_1, \ldots, \alpha_{d+1})$, then reconstruction is possible for the poset $\mathbb{Z}_{\ge 0}^{d'+1}$ and the tuple $(\alpha_1, \ldots, \alpha_{d'+1})$.
\end{lemma}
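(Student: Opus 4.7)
The plan is to embed the small orthant $\mathbb{Z}_{\ge 0}^{d+1}$ into $\mathbb{Z}_{\ge 0}^{d'+1}$ as a ``face'' and show that the model restricted to this face reproduces the original model exactly, so that any reconstruction estimator on the face lifts to one on the full poset with identical correlation.

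Concretely, I would define the embedding
\[
\iota: \mathbb{Z}_{\ge 0}^{d+1} \hookrightarrow \mathbb{Z}_{\ge 0}^{d'+1}, \qquad \iota(u_1,\ldots,u_{d+1}) := (u_1,\ldots,u_{d+1},0,\ldots,0),
\]
and observe two structural facts. First, $\iota$ preserves rank (the coordinate sum is unchanged), so $\iota(L_t^{\text{small}}) \subseteq L_t^{\text{big}}$. Second, for any $u \in \mathbb{Z}_{\ge 0}^{d+1}$, the set of elements of $\mathbb{Z}_{\ge 0}^{d'+1}$ covered by $\iota(u)$ is exactly $\iota(\fp(u))$, because the positive coordinates of $\iota(u)$ are precisely the positive coordinates of $u$. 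In particular, $|\fp(\iota(u))| = |\fp(u)|$, so the model recurrence at $\iota(u)$ uses exactly the same weight $\alpha_{|\fp(u)|}$ as at $u$.

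Next I would show the natural coupling. Starting from the noise variables $W^{\text{big}}_{\iota(u) \to \iota(v)}$ (indexed by covering relations inside the image of $\iota$), one can define a valid noise family $W^{\text{small}}_{u \to v} := W^{\text{big}}_{\iota(u) \to \iota(v)}$ and the same initial $X_0$. A straightforward induction on rank, using the two structural facts above and the model recurrence \eqref{eq:model-rec}, yields
\[
X^{\text{big}}_{\iota(u)} = X^{\text{small}}_{u} \qquad \text{almost surely, for every } u \in \mathbb{Z}_{\ge 0}^{d+1}.
\]
Thus the random variables indexed by the image $\iota(\mathbb{Z}_{\ge 0}^{d+1})$ in the big model are distributed jointly with $X_0$ exactly like those in the small model.

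Finally, given a witness $c^{\text{small}}: \mathbb{Z}_{\ge 0}^{d+1} \to \mathbb{R}$ for reconstruction in the small model, I would lift it to $c^{\text{big}}: \mathbb{Z}_{\ge 0}^{d'+1} \to \mathbb{R}$ by setting $c^{\text{big}}_{\iota(u)} := c^{\text{small}}_u$ and $c^{\text{big}}_v := 0$ for $v \notin \iota(\mathbb{Z}_{\ge 0}^{d+1})$. Since each $L_t^{\text{small}}$ has finite support and $\iota$ is injective, the lift $c^{\text{big}}$ has finite support on every layer. By the coupling, the estimators $\zeta_t^{\text{big}} := \sum_v c^{\text{big}}_v X^{\text{big}}_v$ and $\zeta_t^{\text{small}} := \sum_u c^{\text{small}}_u X^{\text{small}}_u$ are equal almost surely, so they share variance and covariance with $X_0$, giving $\corr(\zeta_t^{\text{big}}, X_0) = \corr(\zeta_t^{\text{small}}, X_0)$. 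Since the latter does not converge to $0$, neither does the former, proving reconstruction in the big model.

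There is no real obstacle here; the only point requiring (light) care is the verification $|\fp(\iota(u))| = |\fp(u)|$, which is what ensures the $\alpha$-weights match under the embedding. Once that is in hand, the entire argument is a one-step coupling.
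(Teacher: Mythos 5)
Your proposal is correct and is essentially the paper's own proof: the same embedding by zero-padding the extra coordinates, the same lifting of coefficients, and the same observation that the correlations coincide. You simply spell out in more detail (the coupling of noises and the verification that $|\fp(\iota(u))| = |\fp(u)|$) what the paper leaves as "it follows from the construction and the model."
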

\begin{proof}
If reconstruction is possible for the poset $\mathbb{Z}_{\ge 0}^{d+1}$ and the tuple $(\alpha_1, \ldots, \alpha_{d+1})$, then there exists a sequence $\{\zeta_t\}_{t=0}^{\infty}$ of random variables
\[
\zeta_t = \sum_{\substack{u = (u_1, \ldots, u_{d+1}) \in \mathbb{Z}_{\ge 0}^{d+1} \\ u_1 + \cdots + u_{d+1} = t}} c_u X_u
\]
such that
\[
\frac{\Cov\!\left(\zeta_t, X_0 \right)}{\sqrt{\Var\!\left(\zeta_t\right)}} \nrightarrow 0,
\]
as $t \to \infty$. In the poset $\mathbb{Z}_{\ge 0}^{d'+1}$, we simply construct
\[
\zeta'_t := \sum_{\substack{u = (u_1, \ldots, u_{d'+1}) \in \mathbb{Z}_{\ge 0}^{d'+1} \\ u_1 + \cdots + u_{d'+1} = t}} c'_u X_u,
\]
whose coefficients are given by
\[
c'_{(u_1, \ldots, u_{d'+1})} := \begin{cases}
c_{(u_1, \ldots, u_{d+1})} & \text{if } u_{d+2} = \cdots = u_{d'+1} = 0, \\
0 & \text{otherwise}.
\end{cases}
\]
It follows from the construction and the model that
\[
\frac{\Cov\!\left(\zeta'_t, X\right)}{\sqrt{\Var\!\left(\zeta'_t\right)}} = \frac{\Cov\!\left(\zeta_t, X\right)}{\sqrt{\Var\!\left(\zeta_t\right)}},
\]
and therefore reconstruction is possible for the poset $\mathbb{Z}_{\ge 0}^{d'+1}$ and the tuple $(\alpha_1, \ldots, \alpha_{d'+1})$
\end{proof}

Now we show that if there exists an index $i \in [d+1]$ such that $\alpha_i > 1/i$, then reconstruction is possible. By Lemma~\ref{l:d-d'}, we may assume that $\alpha_{d+1} > 1/(d+1)$.

For each $t \ge 1$, define
\[
\calZ_t := \sum_{u \in L_t} \kappa_{|\fp(u)|} X_u,
\]
where for each $i \in [d+1]$,
\begin{equation}\label{eq:kappa-i}
\kappa_i := (d+1-i)! \left( \prod_{1 \le j \le i-1} ((d+1)\alpha_{d+1} - j \alpha_j) \right) \left( \prod_{i+1 \le j \le d+1} \alpha_j \right).
\end{equation}
Using the model recurrence \eqref{eq:model-rec}, we see that for each $t \ge 1$,
\begin{align}
\calZ_{t+1} &= \sum_{u \in L_{t+1}} \kappa_{|\fp(u)|} X_u \notag \\
&= \sum_{u \in L_{t+1}} \kappa_{|\fp(u)|}\alpha_{|\fp(u)|} \sum_{v \lessdot u} (X_v + W_{v \to u}) \notag \\
&= \sum_{v \in L_t} \sum_{u \gtrdot v} \kappa_{|\fp(u)|} \alpha_{|\fp(u)|} X_v + \xi_t, \label{eq:k-p-u-a-p-u}
\end{align}
where $\xi_t$ is a linear combination of the ``noises'' $W_{v \to u}$ such that
\[
\Var(\xi_t) \ll_{\alpha_1, \ldots, \alpha_{d+1}} t^d.
\]
The coefficient of $X_v$ in the expression on the right-hand side of~\eqref{eq:k-p-u-a-p-u} is
\begin{align*}
\sum_{u \gtrdot v} \kappa_{|\fp(u)|} \alpha_{|\fp(u)|} &= |\fp(v)| \kappa_{|\fp(v)|} \alpha_{|\fp(v)|} + (d+1-|\fp(v)|) \cdot \kappa_{|\fp(v)|+1} \alpha_{|\fp(v)|+1} \\
&= (d+1) \alpha_{d+1} \cdot \kappa_{|\fp(v)|},
\end{align*}
where in the final step we used the definition of $\kappa_i$ from \eqref{eq:kappa-i}. Hence,
\[
\calZ_{t+1} = (d+1)\alpha_{d+1} \cdot \calZ_t + \xi_t,
\]

Now we take
\[
\zeta_t := \frac{\calZ_t}{((d+1)\alpha_{d+1})^t}.
\]
We have $\Cov(\zeta_t, X_0) = \Cov(\zeta_1, X_0)$, and
\[
\Var(\zeta_{t+1}) = \Var(\zeta_t) + \frac{\Var(\xi_t)}{((d+1)\alpha_{d+1})^{2t+2}}.
\]
Since $(d+1)\alpha_{d+1} > 1$, we find that there exists a finite number $D > 0$ such that for every $t$, we have $\Var(\zeta_t) < D$. This shows that
\[
\frac{\Cov(\zeta_t, X_0)}{\sqrt{\Var(\zeta_t)}} \ge \frac{\Cov(\zeta_1, X_0)}{\sqrt{D}} > 0,
\]
and thus reconstruction is possible.

\subsection{Rate of Convergence}\label{subsec:exact-formula}
In the proof of Proposition~\ref{prop:para-reduct}, we have defined the useful quantity
\[
S_t := \sup \frac{\Cov(\zeta, X_0)}{\sqrt{\Var(\zeta)}}.
\]
Reconstruction is not possible if and only if $S_t \to 0$ as $t \to \infty$.

We have seen that reconstruction is not possible when $\alpha_i \le 1/i$ for every $i \in [d+1]$. Equivalently, $S_t \to 0$ as $t \to \infty$. It is interesting to consider the rate at which $S_t$ decays to $0$. In particular, let us consider the \defn{critical case} where $\alpha_i = 1/i$ for every $i \in [d+1]$. Our argument in Subsection~\ref{subsec:in-box} shows that
\[
S_t \ll \frac{1}{\log^{(2d+2)}(t)}.
\]
We believe that the upper bound for the rate of convergence above is far from sharp.

In this subsection, we consider the critical case when $d = 1$. We show that in this case the quantity $S_t$ can be computed {\em exactly} and we give a formula for it in closed form. For each $t \in \bbZ_{\ge 0}$, we define the following bivariate polynomial
\[
Q_t(z,w) := \sum_{i=0}^t \sum_{j=0}^t \Cov\!\left( X_{(i,t-i)}, X_{(j,t-j)} \right) \, z^i w^j \in \bbR[z,w].
\]

Below we use the following standard notation for extracting a ``coefficient'' out of a polynomial. If
\[
R(z,w) = \sum_{i,j} r_{i,j} z^i w^j,
\]
then we write
\[
[w^t](R(z,w)) = \sum_i r_{i,t} z^i.
\]

\begin{lemma}\label{l:S_t-A_t}
Suppose that there exist $c_0, c_1, \ldots, c_t \in \bbR$ and $A_t \in \bbR \setminus \{0\}$ such that we have the following equality of polynomials
\[
[w^t]\left(\left( c_t + c_{t-1} w + \cdots + c_0 w^t \right) \cdot Q_t(z,w)\right) = A_t \cdot \sum_{i=0}^t z^i \in \bbR[z].
\]
Then
\[
S_t = \sqrt{\frac{\sum_{i=0}^t c_i}{A_t}}.
\]
\end{lemma}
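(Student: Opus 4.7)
The plan is to translate the polynomial hypothesis and the supremum defining $S_t$ into parallel linear-algebra statements, and then to invoke Cauchy--Schwarz with respect to the quadratic form given by the covariance matrix.

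First, I would rewrite the polynomial identity in matrix form. Let $M_t \in \mathbb{R}^{(t+1)\times(t+1)}$ be the covariance matrix with entries $M_t[a,b] := \Cov(X_{(a,t-a)}, X_{(b,t-b)})$, and set $\mathbf{c} := (c_0, \ldots, c_t)^\top$. Since $c_t + c_{t-1}w + \cdots + c_0 w^t = \sum_{i=0}^t c_i w^{t-i}$, a direct coefficient extraction gives
\[
[w^t]\!\left( \Bigl(\sum_{i=0}^t c_i w^{t-i}\Bigr) \cdot Q_t(z,w) \right) = \sum_{a=0}^t (M_t \mathbf{c})_a \, z^a,
\]
so the hypothesis is equivalent to the linear system $M_t \mathbf{c} = A_t \mathbf{1}$, where $\mathbf{1}$ is the all-ones column vector of length $t+1$.

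Second, I would observe that in the critical case $(\alpha_1, \alpha_2) = (1, 1/2)$ the ``signal'' is exactly preserved: $\Cov(X_v, X_0) = 1$ for every $v \in P$. This follows by induction on rank from the covariance form of the model recurrence \eqref{eq:model-rec}: for $v = (i,0)$ with $i \ge 1$ (and symmetrically for $(0,j)$), the recurrence gives $\Cov(X_v, X_0) = \Cov(X_{(i-1,0)}, X_0)$; for interior $v = (i,j)$ with $i, j \ge 1$, it gives $\Cov(X_v, X_0) = \tfrac{1}{2}(\Cov(X_{(i-1,j)}, X_0) + \Cov(X_{(i,j-1)}, X_0))$. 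Both rules preserve the initial value $\Cov(X_{(0,0)}, X_0) = 1$. Consequently, for an arbitrary test combination $\zeta = \sum_i d_i X_{(i,t-i)}$ with $\mathbf{d} = (d_0, \ldots, d_t)^\top$, we have $\Cov(\zeta, X_0) = \mathbf{d}^\top \mathbf{1}$ and $\Var(\zeta) = \mathbf{d}^\top M_t \mathbf{d}$, so
\[
S_t^2 = \sup_{\mathbf{d} \neq \mathbf{0}} \frac{(\mathbf{d}^\top \mathbf{1})^2}{\mathbf{d}^\top M_t \mathbf{d}}.
\]

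Finally, I would apply the Cauchy--Schwarz inequality in the positive-semidefinite form $(\mathbf{x},\mathbf{y}) \mapsto \mathbf{x}^\top M_t \mathbf{y}$. Using $M_t \mathbf{c} = A_t \mathbf{1}$, for any $\mathbf{d}$,
\[
A_t^2 (\mathbf{d}^\top \mathbf{1})^2 = (\mathbf{d}^\top M_t \mathbf{c})^2 \le (\mathbf{d}^\top M_t \mathbf{d})(\mathbf{c}^\top M_t \mathbf{c}) = A_t (\mathbf{d}^\top M_t \mathbf{d})\Bigl(\sum_i c_i\Bigr),
\]
so $(\mathbf{d}^\top \mathbf{1})^2/(\mathbf{d}^\top M_t \mathbf{d}) \le \sum_i c_i / A_t$, with equality at $\mathbf{d} = \mathbf{c}$. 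Taking the supremum and then the square root yields $S_t = \sqrt{\sum_i c_i / A_t}$. The main points requiring care are the coefficient-extraction identity and the signal-preservation induction---both routine once set up. Notably, positive-definiteness of $M_t$ is \emph{not} needed, since the semidefinite Cauchy--Schwarz inequality suffices, and the existence of $\mathbf{c}$ with $M_t\mathbf{c} = A_t\mathbf{1}$ is supplied directly by the hypothesis.
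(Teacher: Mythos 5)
Your proof is correct and follows essentially the same route as the paper: both read the polynomial identity as the normal equations $\Cov(X_{(i,t-i)},\zeta)=A_t$ for all $i$ and then conclude that $\zeta=\sum_j c_j X_{(j,t-j)}$ is the optimal linear estimator. The only cosmetic difference is that you certify optimality via Cauchy--Schwarz in the covariance form, whereas the paper cites the standard conditional-expectation (orthogonal projection) characterization; the final computation of $\Cov(\zeta,X_0)=\sum_i c_i$ and $\Var(\zeta)=A_t\sum_i c_i$ is identical.
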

\begin{proof}
Let us write
\[
\zeta := \sum_{j=0}^t c_j X_{(j,t-j)}.
\]
The equality of polynomials implies that for every $i$,
\[
\Cov\!\left( X_{(i,t-i)}, \zeta \right) = A_t,
\]
and thus
\[
\Cov\!\left( X_{(i,t-i)}, X_0 - \frac{\zeta}{A_t} \right) = 0.
\]
This means that
\[
\frac{\zeta}{A_t} = \bbE\!\left[ X_0 \, | \, X_{(0,t)}, X_{(1,t-1)}, \ldots, X_{(t,0)} \right].
\]
Recall the standard argument that the conditional expectation above is a maximizer for
\[
\frac{\Cov\!\left( - , X_0 \right)}{\sqrt{\Var( - )}}.
\]
Hence, $S_t = \frac{\Cov(\zeta/A_t, X_0)}{\sqrt{\Var(\zeta/A_t)}} = \frac{\Cov(\zeta,X_0)}{\sqrt{\Var(\zeta)}}$. Now, since
\[
\Cov\!\left( \zeta, X_0 \right) = \Cov\!\left( \sum_{j=0}^t c_j X_{(j,t-j)}, X_0 \right) = \sum_{j=0}^t c_j,
\]
and
\[
\Var\!\left( \zeta \right) = \sum_{j=0}^t c_j \Cov\!\left( X_{(j,t-j)}, \zeta \right) = A_t \cdot \sum_{j=0}^t c_j,
\]
we conclude that
\[
S_t = \sqrt{\frac{\sum_{i=0}^t c_i}{A_t}},
\]
as desired.
\end{proof}

The polynomial $Q_t(z,w)$ can be computed explicitly. Using standard generating function techniques, the model recurrence \eqref{eq:model-rec} also gives a recurrence for $Q_t$. The functional equation relating $Q_{t+1}$ and $Q_t$ can be solved explicitly. We omit the standard tedious details and give the result here.
For each $t \in \bbZ_{\ge 0}$, we have
\begin{align}
Q_t(z,w) &= - \frac{2zw}{(1-z)(1-w)^2}\left(\frac{z(1+w)}{2}\right)^t - \frac{2zw}{(1-z)^2(1-w)}\left(\frac{(1+z)w}{2}\right)^t \label{eq:Q_t-explicit} \\
&\hphantom{=} +\frac{(t+1)(zw)^{t+1}}{(1-z)(1-w)} +\frac{2-6zw+2z^2w+2zw^2}{(1-z)^2(1-w)^2(1-zw)}(zw)^{t+1} \notag \\
&\hphantom{=} +\frac{2z}{(1-z)^2(1-w)}\left(\frac{1+z}{2}\right)^t+\frac{2w}{(1-z)(1-w)^2}\left(\frac{1+w}{2}\right)^t \notag \\
&\hphantom{=} - \frac{z^{t+1}}{(1-z)(1-w)} - \frac{w^{t+1}}{(1-z)(1-w)} + \frac{t+1}{(1-z)(1-w)} \notag \\
&\hphantom{=} + \frac{-2z-2w+6zw-2z^2w^2}{(1-z)^2(1-w)^2(1-zw)}. \notag
\end{align}

Using the formula in Equation~(\ref{eq:Q_t-explicit}), we can check the following equality directly:
\[
[w^t]\left((1+w)^t\cdot Q_t(z,w)\right) = A_t \sum_{i=0}^t z^i,
\]
where
\[
A_t = \frac{t}{2^t} \binom{2t}{t} + 2^t.
\]
Using Lemma~\ref{l:S_t-A_t} (with $c_i = \binom{t}{i}$), we conclude that
\[
S_t = \frac{1}{\sqrt{\frac{t}{4^t}\binom{2t}{t} + 1}},
\]
as desired. Note that the asymptotic formula
\[
S_t \sim \pi^{1/4} t^{-1/4},
\]
as $t \to \infty$, follows easily.

\begin{remark}\label{rem:exact-zeta}
Our argument also shows that the linear combination
\[
\widehat{\zeta} := \sum_{i=0}^t \binom{t}{i} X_{(i,t-i)}
\]
is a maximizer of
\[
\zeta \mapsto \frac{\Cov(\zeta, X_0)}{\sqrt{\Var(\zeta)}},
\]
among all nonzero linear combinations of random variables on $L_t$, for every $t \in \bbZ_{\ge 0}$. One might guess that for the general dimension $d$ the optimal linear combination on the $t^{\text{th}}$ layer of $\mathbb{Z}_{\ge 0}^{d+1}$ would be
\[
\zeta = \sum_{u \in L_t} \binom{t}{u} X_u,
\]
where the coefficients are multinomial coefficients. Unfortunately, we know this guess is not correct even for when $d = 2$. Explicit computations on low layers $L_t$ show that some coefficients in an optimal linear combination can be {\em negative} (while others in the same linear combination are positive). It remains an interesting question whether there is a ``nice'' formula for an optimal linear combination in the critical case in any dimension $d \ge 2$.
\end{remark}

\section{Convex Reconstruction and Poisson Clocks}\label{s:Poisson}
In this section, we fix the poset $P$ to be $\mathbb{Z}_{\ge 0}^{d+1}$, the same one we considered in the previous section, and consider the critical case, when $\alpha_i = 1/i$ for every $i \in [d+1]$. We take an arbitrary convex combination
\[
\zeta = \sum_{u \in L_t} c_u X_u,
\]
where $c_u \ge 0$ with $\sum_{u \in L_t} c_u > 0$. Our goal of this subsection is to show that
\begin{equation}\label{eq:ll-d-sqrt-4-t}
\frac{\Cov(\zeta,X_0)}{\sqrt{\Var(\zeta)}} \ll_d \frac{1}{\sqrt[4]{t}}.
\end{equation}
To that end, we may make the following assumptions.
\begin{itemize}
    \item By rescaling, we may assume
    \begin{equation}\label{eq:sum-c-u-1}
    \sum_{u \in L_t} c_u = 1.
    \end{equation}
    \item By symmetry of the coordinates, we may assume
    \begin{equation}\label{eq:1-d-1-factorial}
        \sum_{u \in L^{\nearrow}_t} c_u \ge \frac{1}{(d+1)!},
    \end{equation}
    where $L^{\nearrow}_t$ denotes the set of all $(d+1)$-tuples $(u_1, \ldots, u_{d+1}) \in L_t$ such that $u_1 \le \cdots \le u_{d+1}$.
\end{itemize}

Note that since we are in the critical case, \eqref{eq:sum-c-u-1} implies that
\[
\Cov(\zeta,X_0) = 1,
\]
and hence we would like to bound $\Var(\zeta)$ from below.

Now we consider $d+1$ independent Poisson clocks ${\tt C}_1, \ldots, {\tt C}_{d+1}$, each with rate $1$. We construct a random walk $\Gamma$ on $P$ as follows. At time $0$, let $\Gamma(0)$ be $u = (u_1, \ldots, u_{d+1}) \in L_t$ with probability $c_u$. At each time any of the clocks rings---say ${\tt C}_i$ rings---we take a step
\[
(x_1, \ldots, x_i, \ldots, x_{d+1}) \mapsto \begin{cases}
    (x_1, \ldots, x_i - 1, \ldots, x_{d+1}), & \text{ if } x_i > 0, \\
    (x_1, \ldots, 0, \ldots, x_{d+1}), & \text{ otherwise.}
\end{cases}
\]

With probability $1$, the random walk $\Gamma$ forms a chain of elements of $P$ from some vertex on the $t^{\text{th}}$ layer to $\zv \in L_0$ in finite time. Since we are in the critical case with $\alpha_i = 1/i$ for every $i \in [d+1]$, this random chain has the same probability law as the chain obtained by starting on the $t^{\text{th}}$ layer according to $\{c_u \, : \, u \in L_t\}$ and in each step moving from a vertex $u \in P$ to a vertex in $\fp(u)$ uniformly at random, independently from all previous steps, until we hit $\zv$. Thus it is not hard to see that for any $w \in P$ and $u \in L_t$, conditioning on $\Gamma(0) = u$, the probability that $w$ is in this chain is exactly
\begin{equation}\label{eq:cond-prob-w-to-u}
p(w \to u) = \sum_{\gamma: w \to u} \wt(\gamma).
\end{equation}
\begin{proposition}\label{prop:w-in-Gamma-squared}
We have
\[
\Var(\zeta) \ge \frac{1}{d+1} \sum_{w \in P} \bbP\!\left( w \in \Gamma \right)^2.
\]
\end{proposition}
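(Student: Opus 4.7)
The plan is to reduce everything to Proposition~\ref{prop:1-C-2-Var}(a) after recognizing that the ``extended coefficients'' defined in Subsection~\ref{subsec:in-box} coincide with the hitting probabilities of the walk $\Gamma$. Specifically, following the recipe there, I would extend the (originally layer-$t$) coefficients $c_u$ to all of $P$ by setting
\[
a_w := \sum_{u \in L_t} p(w \to u)\, c_u \qquad \text{for } w \in L_m,\ 0 \le m \le t,
\]
and $a_w := 0$ for $w$ in layers above $t$. Note that only finitely many $a_w$ are nonzero, so all sums below are finite.

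Next, I would identify $a_w$ with $\bbP(w \in \Gamma)$. By the construction of $\Gamma$, the starting distribution satisfies $\bbP(\Gamma(0) = u) = c_u$. By the observation made just before the proposition (that conditioning on $\Gamma(0)=u$, the chain visits $w$ with probability $p(w \to u)$, as stated in \eqref{eq:cond-prob-w-to-u}), conditioning on $\Gamma(0)$ gives
\[
\bbP(w \in \Gamma) \;=\; \sum_{u \in L_t} c_u \cdot p(w \to u) \;=\; a_w.
\]

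Then I would invoke Proposition~\ref{prop:1-C-2-Var}(a), which at the critical parameters $\alpha_i = 1/i$ reads
\[
\Var(\zeta) \;=\; a_{\zv}^2 \;+\; \sum_{m=1}^{t} \sum_{w \in L_m} \frac{1}{|\fp(w)|}\, a_w^2 ,
\]
since $|\fp(w)|\,\alpha_{|\fp(w)|}^2 = 1/|\fp(w)|$. Because $|\fp(w)| \le d+1$ for every $w \in P = \mathbb{Z}_{\ge 0}^{d+1}$, each coefficient $1/|\fp(w)|$ is at least $1/(d+1)$; also trivially $a_{\zv}^2 \ge \tfrac{1}{d+1} a_{\zv}^2$. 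Combining these bounds with the identification $a_w = \bbP(w \in \Gamma)$ yields
\[
\Var(\zeta) \;\ge\; \frac{1}{d+1} \sum_{w \in P} a_w^2 \;=\; \frac{1}{d+1} \sum_{w \in P} \bbP(w \in \Gamma)^2 .
\]

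There is no real obstacle here; the entire content of the proposition is the probabilistic interpretation $a_w = \bbP(w \in \Gamma)$ of the quantities appearing in Proposition~\ref{prop:1-C-2-Var}(a). The only thing to verify carefully is that the rates used in the construction of $\Gamma$ (uniform rate-$1$ Poisson clocks on the $d+1$ coordinates) match the critical weights $\wt(u \to v) = \alpha_{|\fp(v)|} = 1/|\fp(v)|$ used in defining $p(w\to u)$, which is exactly the point emphasized in the paragraph preceding the proposition.
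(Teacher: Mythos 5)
Your proposal is correct and follows essentially the same route as the paper: identify the extended coefficients $a_w = \sum_{u \in L_t} c_u\, p(w \to u)$ with $\bbP(w \in \Gamma)$ via \eqref{eq:cond-prob-w-to-u}, plug into the variance formula of Proposition~\ref{prop:1-C-2-Var}(a) at the critical parameters, and bound $1/|\fp(w)| \ge 1/(d+1)$. The only cosmetic difference is that the paper substitutes $a_{\zv} = 1$ (using the normalization $\sum_u c_u = 1$) where you keep $a_{\zv}^2$; both yield the stated inequality.
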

\begin{proof}
From \eqref{eq:cond-prob-w-to-u}, we find that
\begin{equation}\label{eq:P-w-in-Gamma}
\bbP( w \in \Gamma) = \sum_{u \in L_t} \bbP( w \in \Gamma \, | \, \Gamma(0) = u) \cdot \bbP(\Gamma(0) = u) = \sum_{u \in L_t} c_u p(w \to u).
\end{equation}

Using the formula for the variance of $\zeta$ from Proposition~\ref{prop:1-C-2-Var}(a), we find
\begin{align*}
\Var(\zeta) &= 1 + \sum_{v \lessdot w} \left( \sum_{u \in L_t} c_u p(w \to u) \frac{1}{|\fp(w)|} \right)^2 \\
&= 1 + \sum_{w \in P - \{\zv\}} \frac{1}{|\fp(w)|} \cdot \left( \sum_{u \in L_t} c_u p(w \to u) \right)^2 \\
&\overset{\eqref{eq:P-w-in-Gamma}}{\ge} \frac{1}{d+1} \sum_{w \in P} \bbP( w \in \Gamma )^2,
\end{align*}
as desired.
\end{proof}

For each $u \in L^\nearrow_t$, we let $\calE_u$ denote the event that at time $u_{d+1}$,
\begin{itemize}
    \item for each $i \in [d]$, the clock ${\tt C}_i$ has rung at least $u_{d+1}$ times, and
    \item the clock ${\tt C}_{d+1}$ has rung at most $u_{d+1} - \sqrt{u_{d+1}}$ times.
\end{itemize}

To analyze this event $\calE_u$, let us prove the following lemma about Poisson random variables.

\begin{lemma}
Let $T$ be a positive integer. Let $Z \sim \Pois(T)$. Then
\begin{itemize}
    \item[(a)] $\bbP(Z \ge T) \ge 1/2$.
    \item[(b)] $\bbP(Z \le T - \sqrt{T}) \ge e^{-9}$.
\end{itemize}
\end{lemma}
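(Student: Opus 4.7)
The plan is to analyze the Poisson pmf $p_k := e^{-T} T^k/k!$ directly in both parts. For part (a), my idea is to use the symmetry of the Poisson distribution around its mean via an AM--GM argument. Specifically, the key inequality to establish is $p_k \le p_{2T-1-k}$ for every $k \in \{0, 1, \ldots, T-1\}$; after cancelling the common factor $e^{-T}$, this reduces to
\[
\prod_{j=k+1}^{2T-1-k} j \;\le\; T^{\,2T-1-2k}.
\]
The left-hand side is a product of $2T-1-2k$ consecutive integers whose arithmetic mean is exactly $T$, so AM--GM yields the bound. Summing this pointwise inequality over $k = 0, \ldots, T-1$ gives $\bbP(Z \le T-1) \le \bbP(T \le Z \le 2T-1) \le \bbP(Z \ge T)$, and combining with $\bbP(Z \le T-1) + \bbP(Z \ge T) = 1$ finishes part (a).

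For part (b), I would lower-bound $p_k$ uniformly across a window of $\Theta(\sqrt{T})$ integers just below $T - \sqrt{T}$ and sum. Writing $k = T - m$ and using Stirling in the form $n! \le e\sqrt{n}\,(n/e)^n$ gives
\[
p_{T-m} \;\ge\; \frac{1}{e\sqrt{T-m}}\,\exp\!\bigl(-T\,f(m/T)\bigr), \qquad f(x) := x + (1-x)\log(1-x).
\]
The power series $f(x) = \sum_{n\ge 2} x^n/(n(n-1))$ shows $f(x) \le x^2$ on $[0, 1/2]$. So for $T \ge 16$ and $m \in [\lceil\sqrt{T}\rceil, \lfloor 2\sqrt{T}\rfloor]$ we have $T\,f(m/T) \le m^2/T \le 4$, hence $p_{T-m} \ge e^{-5}/\sqrt{T}$. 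Summing over the $\ge \sqrt{T}/2$ integers $m$ in this window yields $\bbP(Z \le T - \sqrt{T}) \ge e^{-5}/2 > e^{-9}$. The finite list of small cases $T \le 15$ is dispatched by hand: for $T \le 9$ already $\bbP(Z \le T - \sqrt{T}) \ge \bbP(Z=0) = e^{-T} \ge e^{-9}$, and for $10 \le T \le 15$ a direct calculation of the pmf (or a CLT-style estimate) confirms the bound with enormous slack.

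The only non-routine ingredient is the AM--GM comparison in part (a): noticing that the consecutive integers from $k+1$ to $2T-1-k$ have mean exactly $T$ is the whole trick, and the rest is bookkeeping. Part (b) is then a standard Stirling-plus-summation calculation; the generous constant $e^{-9}$ (much weaker than the sharp $\Omega(1)$ behavior predicted by the central limit theorem) leaves ample room to absorb both the $1/e$ overhead from Stirling and the finitely many small-$T$ corner cases.
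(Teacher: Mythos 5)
Your proposal is correct and follows essentially the same route as the paper: part (a) hinges on the identical pointwise comparison $p_k \le p_{2T-1-k}$ (the paper proves it by pairing the factors $(T-j)(T+j) \le T^2$, which is just a special case of your AM--GM step), and part (b) is the same ``window of $\Theta(\sqrt{T})$ terms each of size $\Omega(1/\sqrt{T})$'' argument, with the paper using monotonicity of $p_0, \ldots, p_{T-\lambda}$ to reduce to a single Stirling estimate where you bound each term in the window individually. The remaining differences (your cutoff $T \le 15$ versus the paper's $T \le 9$, and the explicit series bound $f(x) \le x^2$) are cosmetic.
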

\begin{proof}
\noindent \textbf{(a)} Write $p_k := \bbP(Z = k) = e^{-T} \frac{T^k}{k!}$. Note that for each $k = 0, 1, \ldots, T-1$, we have
\[
\frac{p_k}{p_{2T-1-k}} = \prod_{i=1}^{T-1-k} \left( 1 - \frac{k^2}{T^2} \right) \le 1,
\]
and thus $p_k \le p_{2T-1-k}$. This shows
\[
p_0 + \cdots + p_{T-1} \le p_T + \cdots + p_{2T-1}.
\]
Hence,
\[
\bbP(Z \ge T) \ge p_T + \cdots + p_{2T-1} \ge \frac{1}{2},
\]
as the sum $p_0 + p_1 + \cdots$ equals $1$.

\medskip

\noindent \textbf{(b)} This is clear when $T \le 9$. For the rest of this proof, assume $T \ge 10$. Write $\lambda := \left\lceil \sqrt{T} \right\rceil$. Note that the finite sequence $p_0, p_1, \ldots, p_{T-\lambda}$ is increasing. Thus
\[
\bbP(Z \le T- \sqrt{T}) \ge \lambda \cdot p_{T-2\lambda} \ge \sqrt{T} e^{-T} \frac{T^{T-2\lambda}}{(T-2\lambda)!} \ge e^{-9}.
\]
The last inequality holds for $T \ge 10$, and can be shown using (exact inequality versions of) Stirling's approximation.
\end{proof}

Since the Poisson clocks ${\tt C}_1, \ldots, {\tt C}_{d+1}$ are independent, we obtain the following corollary.

\begin{corollary}\label{cor:2-d-e-9}
For each $u \in L^\nearrow_t$, we have $\bbP(\calE_u) \ge 2^{-d} e^{-9}$.
\end{corollary}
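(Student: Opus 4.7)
The plan is to note that the event $\calE_u$ depends only on the Poisson clocks, which are independent, so the probability factors as a product over the $d+1$ clocks. Each clock ${\tt C}_i$, having rate $1$, rings a Poisson-distributed number of times in the interval $[0, u_{d+1}]$, with parameter $T := u_{d+1}$.

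First I would dispose of the edge case $u_{d+1} = 0$: since $u \in L^\nearrow_t$ satisfies $u_1 \le \cdots \le u_{d+1}$, the case $u_{d+1} = 0$ forces $t = 0$, and then both conditions in $\calE_u$ are vacuously satisfied, so $\bbP(\calE_u) = 1 \ge 2^{-d} e^{-9}$. For the main case $u_{d+1} \ge 1$, let $Z_i$ denote the number of times ${\tt C}_i$ has rung by time $u_{d+1}$, so $Z_1, \ldots, Z_{d+1}$ are i.i.d.~$\Pois(u_{d+1})$ random variables by independence of the clocks. Then
\[
\bbP(\calE_u) = \bbP(Z_{d+1} \le u_{d+1} - \sqrt{u_{d+1}}) \cdot \prod_{i=1}^d \bbP(Z_i \ge u_{d+1}).
\]
Applying part~(b) of the preceding lemma with $T = u_{d+1}$ to the first factor and part~(a) to each of the remaining $d$ factors gives
\[
\bbP(\calE_u) \ge e^{-9} \cdot \left( \tfrac{1}{2} \right)^d = 2^{-d} e^{-9},
\]
as claimed.

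The argument is essentially immediate once the preceding Poisson tail lemma is in hand, so there is no real obstacle here; the only small point to be careful about is verifying that $T = u_{d+1}$ is indeed a positive integer so that the lemma applies, which is why the $u_{d+1} = 0$ case is handled separately at the outset.
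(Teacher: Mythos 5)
Your proof is correct and matches the paper's argument, which simply invokes the independence of the $d+1$ Poisson clocks together with parts (a) and (b) of the preceding lemma. Your explicit treatment of the degenerate case $u_{d+1}=0$ is a harmless extra precaution that the paper leaves implicit.
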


\begin{proposition}\label{prop:w-in-Gamma-2-d-e-9}
For each $w = (0,\ldots, 0, w_{d+1}) \in P$ with $w_{d+1} \le \sqrt{\frac{t}{d+1}}$, and for each $u \in L^\nearrow_t$, we have
\[
\bbP(w \in \Gamma \, | \, \Gamma(0) = u) \ge 2^{-d} e^{-9}.
\]
\end{proposition}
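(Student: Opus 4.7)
The plan is to show that on the event $\calE_u$ (whose probability is already lower bounded by Corollary~\ref{cor:2-d-e-9}), the random walk $\Gamma$ started at $u$ is forced to pass through $w$. Since $\calE_u$ depends only on the Poisson clocks $\tt C_1, \ldots, \tt C_{d+1}$, which are independent of the initial position $\Gamma(0)$, this containment of events immediately yields the desired conditional probability bound.

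First I would extract the size information from the hypothesis $u \in L^\nearrow_t$. Since $u_1 \le u_2 \le \cdots \le u_{d+1}$ and $u_1 + \cdots + u_{d+1} = t$, we have $u_{d+1} \ge t/(d+1)$, hence $\sqrt{u_{d+1}} \ge \sqrt{t/(d+1)} \ge w_{d+1}$. Next I would analyze the position of $\Gamma$ at time $u_{d+1}$ under the event $\calE_u$. Because coordinate $i$ of $\Gamma$ at any time equals $\max(u_i - m_i, 0)$ where $m_i$ is the number of rings of $\tt C_i$ up to that time, the condition that each $\tt C_i$ for $i \le d$ has rung at least $u_{d+1} \ge u_i$ times forces the first $d$ coordinates to be zero at time $u_{d+1}$. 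Meanwhile, because $\tt C_{d+1}$ has rung at most $u_{d+1} - \sqrt{u_{d+1}}$ times, the last coordinate at time $u_{d+1}$ is at least $u_{d+1} - (u_{d+1} - \sqrt{u_{d+1}}) = \sqrt{u_{d+1}} \ge w_{d+1}$.

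Thus on $\calE_u$, the walk at time $u_{d+1}$ occupies a state of the form $(0, \ldots, 0, j)$ with $j \ge w_{d+1}$. From any such state, only rings of $\tt C_{d+1}$ change the position (rings of $\tt C_1, \ldots, \tt C_d$ leave the zero coordinates at zero), so the subsequent trajectory of $\Gamma$ is exactly $(0, \ldots, 0, j), (0, \ldots, 0, j-1), \ldots, (0, \ldots, 0, 0)$. In particular it passes through $(0, \ldots, 0, w_{d+1}) = w$, so $w \in \Gamma$.

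Therefore $\calE_u \subseteq \{w \in \Gamma\}$ on the conditional probability space given $\Gamma(0) = u$, and
\[
\bbP(w \in \Gamma \mid \Gamma(0) = u) \ge \bbP(\calE_u \mid \Gamma(0) = u) = \bbP(\calE_u) \ge 2^{-d} e^{-9},
\]
using the independence of the clocks from $\Gamma(0)$ together with Corollary~\ref{cor:2-d-e-9}. There is no real obstacle here; the only subtle point is the observation that the walk cannot skip intermediate values of the last coordinate once the first $d$ coordinates have been exhausted, which is what guarantees $w$ actually appears along the chain.
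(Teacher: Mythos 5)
Your proof is correct and follows essentially the same route as the paper's: condition on the event $\calE_u$, observe that on $\calE_u$ the walk at time $u_{d+1}$ sits at $(0,\ldots,0,x)$ with $x \ge \sqrt{u_{d+1}} \ge \sqrt{t/(d+1)} \ge w_{d+1}$, note that it then descends in a straight line through $w$ to $\zv$, and invoke Corollary~\ref{cor:2-d-e-9} together with the independence of the clocks from $\Gamma(0)$. You spell out a couple of steps the paper leaves implicit (why the first $d$ coordinates are exhausted, using $u_i \le u_{d+1}$ from $u \in L^\nearrow_t$, and the explicit formula $\max(u_i - m_i,0)$ for the coordinates), but the argument is the same.
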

\begin{proof}
Observe that
\[
\bbP(w \in \Gamma \, | \, \Gamma(0) = u) \ge \bbP(w \in \Gamma \, | \, \Gamma(0) = u, \calE_u) \cdot \bbP(\calE_u).
\]
Under the events $\calE_u$ and $\Gamma(0) = u$, at time $u_{d+1}$, the random walk $\Gamma$ is at
\[
\Gamma(u_{d+1}) = (0,\ldots, 0, x),
\]
for a certain integer $x \ge \sqrt{u_{d+1}} \ge \sqrt{\frac{t}{d+1}}$. Afterwards, the random walk can only go in straight line to $\zv$. This means that $\Gamma$ has to pass $w$ with probability $1$. Combining this with Corollary~\ref{cor:2-d-e-9}, we obtain the desired bound.
\end{proof}

Using Proposition~\ref{prop:w-in-Gamma-2-d-e-9} with \eqref{eq:1-d-1-factorial}, we find that for any $w$ in the assumption of the proposition,
\[
\bbP(w \in \Gamma) \ge \sum_{u \in L^\nearrow_t} \bbP(w \in \Gamma \, | \, \Gamma(0) = u) \cdot \bbP(\Gamma(0) = u) \ge \frac{2^{-d}e^{-9}}{(d+1)!}.
\]
With Proposition~\ref{prop:w-in-Gamma-squared}, we conclude
\[
\Var(\zeta) \ge \frac{1}{d+1} \sum_{\substack{w = (0,\ldots, 0,w_{d+1}) \\ w_{d+1} \le \sqrt{t/(d+1)}}} \bbP(w \in \Gamma)^2 \ge \frac{4^{-d} e^{-18}}{(d+1)^{3/2} (d+1)!^2} \cdot \sqrt{t},
\]
from which the estimate \eqref{eq:ll-d-sqrt-4-t} follows.

\section{The Infinite Model}\label{s:half-space}
In this section, we consider the infinite model, where $P = \HS_{d+1}$. The goal of this section is to prove Theorem~\ref{thm:half-space}. We prove part~(a) in Subsection~\ref{subsec:supercritical}, part~(b) in Subsection~\ref{subsec:subcritical}, and part~(c) in Subsection~\ref{subsec:critical} below.

\subsection{Covariance and Variance Computations}
For each vector $v \in \mathbb{Z}^{d+1}$, let $|v|$ denote the sum of its entries. Our key technical tool is the following estimate on the sum of squares of multinomial coefficients.
\begin{lemma}\label{l:C-1-C-2}
For each $d \ge 0$, there exist constants $C_1(d), C_2(d) > 0$ depending only on $d$ such that for every $k \in \mathbb{Z}_{\ge 1}$, we have
\[
C_1(d) \cdot \frac{(d+1)^{2k}}{k^{d/2}} \le \sum_{\substack{v \in \mathbb{Z}_{\ge 0}^{d+1} \\ |v| = k}} \binom{k}{v}^2 \le C_2(d) \cdot \frac{(d+1)^{2k}}{k^{d/2}}.
\]
\end{lemma}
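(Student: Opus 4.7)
The plan is to read $\binom{k}{v}/(d+1)^k$ as the mass function of a $\mathrm{Multinomial}\!\bigl(k,(1/(d+1),\ldots,1/(d+1))\bigr)$ random variable and to establish both bounds via Stirling's formula applied near the centroid $v^* \approx (k/(d+1),\ldots,k/(d+1))$ of the simplex $\{v \in \mathbb{R}_{\ge 0}^{d+1} : |v| = k\}$. Both directions reduce to quantifying how sharply the multinomial mass concentrates around $v^*$.

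For the upper bound, I would use the one-line estimate
\[
\sum_{v} \binom{k}{v}^2 \le \Bigl(\max_{v} \binom{k}{v}\Bigr) \cdot \sum_{v} \binom{k}{v} = (d+1)^k \cdot \max_{v} \binom{k}{v},
\]
and then invoke the standard mode bound $\max_{v} \binom{k}{v} \ll_d (d+1)^k / k^{d/2}$, which is a single application of Stirling's formula to the near-centroid tuple (after rounding $v^*$ to integers so that $|v^*| = k$). This immediately yields the claimed constant $C_2(d)$.

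For the lower bound, I would restrict the sum to the bulk region
\[
\calB_k := \bigl\{ v \in \mathbb{Z}_{\ge 0}^{d+1} : |v| = k \text{ and } |v_i - k/(d+1)| \le \sqrt{k} \text{ for every } i \bigr\}.
\]
Two uniform estimates on $\calB_k$ then complete the argument: (i) by Stirling, $\binom{k}{v} \gg_d (d+1)^k / k^{d/2}$ uniformly for $v \in \calB_k$, since each coordinate is of order $k$ and the quadratic Taylor correction about $v^*$ remains bounded; and (ii) $|\calB_k| \asymp_d k^{d/2}$, because $\calB_k$ is the set of lattice points in the intersection of an $\ell^\infty$-cube of side $2\sqrt{k}$ with the $d$-dimensional hyperplane $|v| = k$. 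Combining these gives
\[
\sum_{v} \binom{k}{v}^2 \;\ge\; |\calB_k| \cdot \bigl( c(d) \cdot (d+1)^k / k^{d/2} \bigr)^2 \;\gg_d\; (d+1)^{2k} / k^{d/2},
\]
as required.

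The only technical step worth flagging is the uniform Stirling estimate on $\calB_k$: expanding $\log v_i! = v_i \log v_i - v_i + \tfrac{1}{2}\log(2\pi v_i) + O(1/v_i)$ and Taylor-expanding around $v_i = k/(d+1)$, one must check that the total quadratic error $\sum_i (v_i - k/(d+1))^2 / v_i$ stays $O_d(1)$ throughout $\calB_k$. This is precisely where the radius $\sqrt{k}$ is the correct scale, and it is the main (but routine) calculation in the proof. Finitely many small values of $k$ can be absorbed by adjusting the constants $C_1(d), C_2(d)$ at the end.
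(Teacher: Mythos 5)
Your proof is correct, but it is worth noting that the paper does not actually prove this lemma: it simply cites Richmond--Shallit \cite{RS09}, whose argument (sketched in the paper's Section 7 for the related Lemma \ref{l:general-i}) establishes the sharp asymptotic $F(m,i) \sim i^{2m+i/2}(4\pi m)^{-(i-1)/2}$ by splitting off an exponentially small tail and approximating the bulk sum by a Gaussian integral. Your route is more elementary and gives only the order of magnitude, which is all the lemma asserts: the upper bound via $\sum_v \binom{k}{v}^2 \le \bigl(\max_v \binom{k}{v}\bigr)\sum_v \binom{k}{v}$ together with the mode estimate avoids any integral approximation entirely, and the lower bound via a bulk count at scale $\sqrt{k}$ is exactly the local-CLT heuristic made rigorous at the level of constants. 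Two small points to tidy up in a full write-up: for the upper bound you need to justify that the maximizing $v$ is (essentially) balanced, e.g.\ by noting that $\binom{k}{v}$ increases when two coordinates differing by at least $2$ are moved closer together; and for the lower bound on $|\calB_k|$ you should shrink the window on the first $d$ coordinates to, say, $\sqrt{k}/(d+1)$ so that the determined last coordinate automatically lands within $\sqrt{k}$ of $k/(d+1)$ --- this still gives $|\calB_k| \gg_d k^{d/2}$. With those routine fixes, and absorbing the finitely many small $k$ into the constants as you indicate, the argument is complete.
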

\begin{proof}
See e.g. \cite{RS09}.
\end{proof}

Let $t \ge 1$. For each $v \in L_t$, we can express
\[
X_v = \left((d+1)\alpha_{d+1}\right)^t X_0 + \sum_{w' \lessdot w} \alpha_{d+1} p(w \to v) W_{w' \to w}.
\]
Therefore,
\begin{equation}\label{eq:Cov-X-u-X-0}
\Cov(X_v,X_0) = ((d+1)\alpha_{d+1})^t,
\end{equation}
and
\begin{equation}\label{eq:var-x-u}
\Var(X_v) = ((d+1)\alpha_{d+1})^{2t} + \sum_{w' \lessdot w} \alpha_{d+1}^2 p(w \to v)^2.
\end{equation}
More generally, for any $v, v' \in L_t$, we have
\[
\Cov(X_v, X_{v'}) = ((d+1)\alpha_{d+1})^{2t} + \sum_{w' \lessdot w} \alpha_{d+1}^2 p(w \to v) p(w \to v').
\]
By splitting the summation according to which layer $w$ is in, we obtain
\[
\Cov(X_v, X_{v'}) = ((d+1)\alpha_{d+1})^{2t} + \sum_{k=0}^{t-1} (d+1)\alpha_{d+1}^2 \sum_{\substack{w \in L_{t-k} \\ w \le v \\ w \le v'}} \alpha_{d+1}^k \binom{k}{v-w} \cdot \alpha_{d+1}^k \binom{k}{v'-w}.
\]
By making a change of variables $w \mapsto v' - w$ and using the convention that $\binom{k}{w_1, \ldots, w_{d+1}} = 0$ whenever some entry $w_i$ is negative, we can write the formula above as
\begin{equation}\label{eq:cov-formula}
\Cov(X_v, X_{v'}) = ((d+1)\alpha_{d+1})^{2t} + (d+1)\alpha_{d+1}^2 \sum_{k=0}^{t-1} \alpha_{d+1}^{2k} \sum_{w \in \mathbb{Z}^{d+1}} \binom{k}{w} \binom{k}{w+v-v'}.
\end{equation}

\subsection{Supercritical Case}\label{subsec:supercritical}
In this subsection, assume $(d+1)\alpha_{d+1} > 1$. To show that single-vertex reconstruction is possible, we simply take $\zeta_t = X_u$ for any $u \in L_t$.

Using Lemma~\ref{l:C-1-C-2}, \eqref{eq:Cov-X-u-X-0}, and \eqref{eq:var-x-u}, we find that for each $t \ge 1$,
\begin{align*}
\frac{\Var(X_u)}{\Cov(X_u,X_0)^2} &\le 1 + (d+1)^{1-2t}\alpha_{d+1}^{2-2t} + (d+1)\alpha_{d+1}^2 C_2(d) \sum_{k=1}^{t-1} \frac{((d+1)\alpha_{d+1})^{2k-2t}}{k^{d/2}} \\
&\le 1 + (d+1)\alpha_{d+1}^2 + C_2(d) \frac{(d+1)^3 \alpha_{d+1}^4}{(d+1)^2\alpha_{d+1}^2 - 1}.
\end{align*}

\subsection{Subcritical Case}\label{subsec:subcritical}
In this subsection, we assume $(d+1)\alpha_{d+1} < 1$. Consider an arbitrary nonzero
\[
\zeta = \sum_{u \in L_t \cap \calW_t} a_u X_u,
\]
where $\calW_t$ is a window of width $N$. The covariance formula \eqref{eq:Cov-X-u-X-0} shows that
\[
\Cov(\zeta,X_0) = ((d+1)\alpha_{d+1})^t \cdot \sum_{u \in L_t \cap \calW_t} a_u.
\]
Next, from the variance formula \eqref{eq:var-x-u}, using the noises between the $(t-1)^{\text{st}}$ and the $t^{\text{th}}$ layers, we have the lower bound
\[
\Var(\zeta) \ge (d+1)\alpha_{d+1}^2 \cdot \sum_{u \in L_t \cap \calW_t} a_u^2.
\]
Since there are at most $N^d$ vertices in $L_t \cap \calW_t$, we obtain by Cauchy--Schwarz that
\[
\frac{\Cov(\zeta,X_0)}{\sqrt{\Var(\zeta)}} 
\le \frac{((d+1)\alpha_{d+1})^t \cdot \sum_{u \in L_t \cap \calW_t} a_u}{\alpha_{d+1} \sqrt{d+1} \cdot \sqrt{\sum_{u \in L_t \cap \calW_t} a_u^2}} 
\le \frac{N^{d/2}}{\alpha_{d+1} \sqrt{d+1}} \cdot ((d+1)\alpha_{d+1})^t.
\]
With $(d+1)\alpha_{d+1} < 1$, this shows that the correlation decays exponentially with the layer, and thus local reconstruction is not possible.

\subsection{Critical Case}\label{subsec:critical}
In this subsection, assume $(d+1)\alpha_{d+1} = 1$. In this critical case, for every $u \in L_t$, we have $\Cov(X_u, X_0) = 1$, and
\begin{equation}\label{eq:var-formula-crit}
\Var(X_u) = 1 + \frac{1}{d+1} + \frac{1}{d+1} \sum_{k=1}^{t-1} \frac{1}{(d+1)^{2k}} \sum_{\substack{v \in \mathbb{Z}_{\ge 0}^{d+1} \\ |v| = k}} \binom{k}{v}^2.
\end{equation}
Hence, when $d \ge 3$, we obtain the bound
\[
\Var(X_u) \le 2 + \frac{1}{4} C_2(d) {\boldsymbol{\zeta}}(d/2) < \infty,
\]
where the notation ${\boldsymbol{\zeta}}$ in the bold font denotes the Riemann zeta function. This shows that single-vertex reconstruction is possible for $d \ge 3$.

Now we turn to the critical case when $d \le 2$. We claim that local reconstruction is not possible. First, when $d = 0$, the poset $P$ is the same as $\mathbb{Z}_{\ge 0}$, where it is easy to see that reconstruction is not possible.

When $d = 1$, we prove the following result.

\begin{proposition}\label{prop:window-d-1}
For any positive integer $N$, and for any real numbers $a_1, \ldots, a_N$ (which are not simultaneously zero), the random variable
\[
\zeta := a_1 X_{(1,t-1)} + \cdots + a_N X_{(N,t-N)}
\]
satisfies
\[
\frac{\Cov(\zeta,X_0)}{\sqrt{\Var(\zeta)}} \le \frac{2^{14} N^{3/2}}{\sqrt[4]{t}}.
\]
\end{proposition}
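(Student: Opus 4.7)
The plan is to pass to the Fourier dual. Write $A(\phi) := \sum_{k=1}^{N} a_k e^{ik\phi}$, so that $\Cov(\zeta, X_0) = A(0) = \sum_k a_k$; if $A(0) = 0$ the claim is trivial, so I assume $A(0) \neq 0$. Starting from formula~\eqref{eq:cov-formula} specialized to $d = 1$ with $\alpha_{d+1} = 1/2$, the ``multinomial'' $\binom{k}{w}$ is an ordinary binomial, and Vandermonde's identity $\sum_{w_1}\binom{k}{w_1}\binom{k}{w_1 + (i-j)} = \binom{2k}{k+i-j}$ collapses the inner sum. Inserting the standard Fourier representation $\binom{2k}{k+m} = \frac{1}{2\pi}\int_{-\pi}^{\pi} e^{-im\phi}(2\cos(\phi/2))^{2k}\,d\phi$ and summing the geometric series in $k$ yield
\[
\Var(\zeta) = A(0)^2 + V, \quad V := \frac{1}{4\pi}\int_{-\pi}^{\pi}|A(\phi)|^2\, w_t(\phi)\,d\phi, \quad w_t(\phi) := \frac{1 - (\cos(\phi/2))^{2t}}{\sin^2(\phi/2)}.
\]
Since $w_t(\phi) = \sum_{k=0}^{t-1}(\cos(\phi/2))^{2k} \geq 1$ on $(-\pi, \pi)$, Parseval gives the baseline estimate $V \geq \tfrac{1}{2}\sum_k a_k^2$.

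Next I derive a Bernstein-type statement that $|A(\phi)|$ stays close to $|A(0)|$ near $\phi = 0$. Cauchy--Schwarz gives $|A'(\phi)| \leq \sum_k k|a_k| \leq \bigl(\sum_{k=1}^N k^2\bigr)^{1/2}\bigl(\sum_k a_k^2\bigr)^{1/2} \leq N^{3/2}\sqrt{2V}$, so by the mean value theorem $|A(\phi) - A(0)| \leq N^{3/2}\sqrt{2V}\,|\phi|$. Consequently $|A(\phi)|^2 \geq A(0)^2/4$ on the symmetric interval $|\phi| \leq \phi^\ast := |A(0)|/(2N^{3/2}\sqrt{2V})$.

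The conclusion follows from a case split on the size of $\phi^\ast$ relative to $1/\sqrt{t}$. Using $(\cos(\phi/2))^{2t} \leq e^{-t\phi^2/4}$ and $\sin^2(\phi/2) \leq \phi^2/4$ one sees $w_t(\phi) \geq 2/\phi^2$ for $2/\sqrt{t} \leq |\phi| \leq \pi$, from which $\int_{-\phi^\ast}^{\phi^\ast} w_t(\phi)\,d\phi \geq \sqrt{t}$ whenever $\phi^\ast \geq 4/\sqrt{t}$. In this regime, the pointwise lower bound on $|A|^2$ yields $V \geq A(0)^2 \sqrt{t}/(16\pi)$, so $\corr(\zeta, X_0)^2 \leq A(0)^2/V \leq 16\pi/\sqrt{t}$. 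In the complementary regime $\phi^\ast < 4/\sqrt{t}$, the definition of $\phi^\ast$ immediately forces $V > A(0)^2 t/(128 N^3)$, hence $\corr(\zeta, X_0)^2 < 128 N^3/t$. Since $1/\sqrt{t} \leq 1/t^{1/4}$ for $t \geq 1$, both estimates are comfortably absorbed into the target $(2^{14} N^{3/2}/t^{1/4})^2$.

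The main subtlety is that the Bernstein bound is self-referential: $\sum_k a_k^2$ is controlled by $V$, which is itself an integral of $|A|^2$, and Parseval is what keeps the argument from being circular. The Cauchy--Schwarz step $\sum_k k|a_k| \leq N^{3/2}\sqrt{\sum a_k^2}$ is what ultimately produces the $N^{3/2}$ in the final bound; a sharp Bernstein inequality for trigonometric polynomials of degree $N$ would improve this to $N$, but the weaker estimate is more than enough for the stated conclusion.
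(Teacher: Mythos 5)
Your proof is correct, and it takes a genuinely different route from the paper's. The paper handles this proposition with an averaging trick: it forms $\widehat{\zeta}$ by averaging $M$ shifted copies of the window (with $M$ chosen adaptively, roughly $N^{3/2}/\varepsilon$), notes that averaging cannot decrease the correlation, and then lower-bounds $\Var(\widehat{\zeta})$ via a separate combinatorial estimate (Lemma~\ref{l:sqrt-t-1-50-sqrt-t}) showing $\Cov(X_{(i,t-i)},X_{(j,t-j)}) \ge \sqrt{t}/50$ for $|i-j| \le 2M$. You instead diagonalize the Toeplitz covariance form on the layer via Vandermonde and the Fourier representation of $\binom{2k}{k+m}$, obtaining the exact identity $\Var(\zeta) = A(0)^2 + \frac{1}{4\pi}\int |A|^2 w_t$, and then combine Parseval with the Bernstein-type bound $|A'| \le N^{3/2}\sqrt{2V}$ to show $|A|^2 \ge A(0)^2/4$ on an interval of half-length $\phi^*$ around $0$, where $w_t$ carries mass of order $\sqrt{t}$. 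The two arguments are in fact dual: your $\phi^*$ plays the role of $1/M$, and your self-referential control of $\sum_k a_k^2$ by $V$ mirrors the paper's adaptive choice of $M$ in terms of $\varepsilon$. What your route buys is an exact spectral formula for the variance and the elimination of the separate covariance lemma with its Stirling estimates; the paper's route is more directly probabilistic and is reused essentially verbatim for the $d=2$ trinomial case. Two minor points, neither affecting correctness: when $\phi^* > \pi$ the integral must be truncated at $\pi$ (the bound $\int w_t \ge \sqrt{t}$ still holds for $t \ge 2$, and for the remaining small $t$ the stated inequality is trivial because its right-hand side exceeds $1$), and since $A$ is complex-valued the step labelled ``mean value theorem'' should be the integral form $|A(\phi)-A(0)| \le \int_0^{|\phi|} |A'|$.
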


Before giving the proof of the proposition, we make a quick remark about the strategy. We use an ``averaging trick'' where we consider a new random variable $\widehat{\zeta}$ whose correlation with $X_0$ is at least as good as that of $\zeta$ (see \eqref{ineq:better-corr} below). After averaging, the proof reduces to understanding the behavior of the covariance of nearby random variables on the same layer, which we analyze in Lemma~\ref{l:sqrt-t-1-50-sqrt-t} in the next section.

\begin{proof}[Proof of Proposition~\ref{prop:window-d-1}]
Write
\[
\varepsilon := \frac{\Cov(\zeta,X_0)}{\sqrt{\Var(\zeta)}}.
\]
We consider two cases.

\underline{Case 1.} Suppose $\varepsilon \le 2^{14} N^{3/2}t^{-1/3}$. Then
\[
\varepsilon \le \frac{2^{14} N^{3/2}}{\sqrt[3]{t}} \le \frac{2^{14} N^{3/2}}{\sqrt[4]{t}}
\]
follows immediately.

\underline{Case 2.} Now suppose $\varepsilon > 2^{14} N^{3/2}t^{-1/3}$. In this case, we take
\[
M := \left\lceil \frac{2^9 N^{3/2}}{\varepsilon} \right\rceil.
\]
Note that we now have positive integers $M$ and $t$ such that
\begin{equation}\label{ineq:M-N-t}
M \ge \frac{2^9 N^{3/2}}{\varepsilon} \qquad \text{and} \qquad t \ge 2^{12} M^3.
\end{equation}

Without loss of generality, let us assume that $\sum_{i=1}^N a_i^2 = 1$. Write
\[
A := \Cov(\zeta, X_0) = a_1 + \cdots + a_N.
\]
By considering the noises between the last two layers ($W_{w' \to w}$ with $w' \in L_{t-1}$ and $w \in L_t$), we obtain a lower bound on the variance:
\[
\Var(\zeta) \ge \frac{1}{2}\left( a_1^2 + \cdots + a_N^2 \right) = \frac{1}{2},
\]
which implies
\begin{equation}\label{ineq:A-eps-sqrt-2}
A \ge \frac{\varepsilon}{\sqrt{2}}.
\end{equation}
We define a new random variable by averaging:
\begin{align*}
&\widehat{\zeta} := \frac{1}{M} \sum_{i = 1}^M \left( a_1 X_{(i,t-i)} + a_2 X_{(i+1,t-i-1)} + \cdots + a_N X_{(i+N-1,t-i-N+1)} \right) \\
&= \frac{(a_1 X_{(1,t-1)} + \cdots + a_N X_{(N,t-N)}) + \cdots + (a_1 X_{(M,t-M)} + \cdots + a_N X_{(M+N-1, t-M-N+1)})}{M}.
\end{align*}
Observe that $\Cov(\widehat{\zeta},X_0) = \Cov(\zeta,X_0)$ while $\Var(\widehat{\zeta}) \le \Var(\zeta)$. Consequently,
\begin{equation}\label{ineq:better-corr}
\frac{\Cov(\widehat{\zeta},X_0)}{\sqrt{\Var(\widehat{\zeta})}} \ge \frac{\Cov(\zeta,X_0)}{\sqrt{\Var(\zeta)}} = \varepsilon.
\end{equation}
Let us write $\widehat{\zeta}$ as
\[
\widehat{\zeta} = b_1 X_{(1,t-1)} + b_2 X_{(2,t-2)} + \cdots + b_{M+N-1} X_{(M+N-1,t-M-N+1)},
\]
where
\[
b_i = \begin{cases}
\frac{a_1 + \cdots + a_i}{M} & \text{ if } i \le N - 1, \\
\frac{A}{M} & \text{ if } N \le i \le M, \text{ and } \\
\frac{a_{i-M+1} + \cdots + a_N}{M} & \text{ if } i \ge M + 1.
\end{cases}
\]
From the explicit formula above, we obtain from Cauchy--Schwarz that
\begin{equation}\label{ineq:upper-bound-on-b-i}
|b_i| \le \frac{|a_1| + \cdots + |a_N|}{M} \le \frac{1}{M} \sqrt{a_1^2 + \cdots + a_N^2} \sqrt{1^2 + \cdots + 1^2} = \frac{\sqrt{N}}{M},
\end{equation}
for every $i \in [M+N-1]$. Now let
\[
E := \sum_{i \le N-1} b_i X_{(i,t-i)} + \sum_{i \ge M+1} b_i X_{(i,t-i)},
\]
so that $\widehat{\zeta} = \frac{A}{M}(X_N + \cdots + X_M) + E$.

Using Lemma~\ref{l:sqrt-t-1-50-sqrt-t}, which we prove below in Section~\ref{s:cov-est}, and the bound \eqref{ineq:upper-bound-on-b-i}, we find
\begin{align*}
\Var(\widehat{\zeta}) &= \Var\!\left( \frac{A}{M}(X_N + \cdots + X_M) + E\right) \\
&\ge \frac{A^2}{M^2} \Var(X_N + \cdots + X_M) + \frac{2A}{M} \Cov\!\left( X_N + \cdots + X_M, E \right) \\
&\ge \frac{A^2}{M^2} (M-N+1)^2 \cdot \frac{\sqrt{t}}{50} - \frac{2A}{M} \cdot (M-N+1) \cdot 2(N-1)\frac{\sqrt{N}}{M} \cdot \sqrt{t}.
\end{align*}
Therefore,
\begin{equation}\label{ineq:AM-2}
\frac{\Var(\widehat{\zeta})}{\Cov(\widehat{\zeta},X_0)^2} \ge \left( \frac{1}{50} \left( \frac{M-N+1}{M} \right)^2 - \frac{4(N-1)(M-N+1)\sqrt{N}}{AM^2} \right) \sqrt{t}.
\end{equation}
From \eqref{ineq:better-corr}, the left-hand side of \eqref{ineq:AM-2} is bounded from above by $1/\varepsilon^2$. From \eqref{ineq:M-N-t}, we find that
\[
\frac{1}{50} \left( \frac{M-N+1}{M} \right)^2 > 0.019,
\]
and
\[
\frac{4(N-1)(M-N+1)\sqrt{N}}{AM^2} \overset{\eqref{ineq:A-eps-sqrt-2}}{\le} \frac{4\sqrt{2}N^{3/2}}{M\varepsilon} < 0.012.
\]
Combining these estimates yields
\[
\frac{1}{\varepsilon^2} \ge 0.007 \sqrt{t} > \frac{\sqrt{t}}{2^8},
\]
from which it follows that
\[
\varepsilon < \frac{2^4}{\sqrt[4]{t}} < \frac{2^{14} N^{3/2}}{\sqrt[4]{t}},
\]
as desired.
\end{proof}

When $d = 2$, we prove the following result. The proof follows the same strategy as the one for the previous proposition.

\begin{proposition}[cf.~Prop.~\ref{prop:window-d-1}]
Let $t \ge 2$ be an integer. For any positive integer $N$, and for any nonzero real matrix $A = [a_{(i,j)}]_{i,j \in [N]} \neq \boldsymbol{0}_{N \times N}$, the random variable
\[
\zeta := \sum_{(i,j) \in [N]^2} a_{(i,j)} X_{(i,j,t-i-j)}
\]
satisfies
\[
\frac{\Cov(\zeta,X_0)}{\sqrt{\Var(\zeta)}} \le \frac{2^{24} N^2}{\sqrt{\log t}}.
\]
\end{proposition}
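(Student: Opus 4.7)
The plan is to mirror the proof of Proposition~\ref{prop:window-d-1}, with the key modification that the $\sqrt{t}$-growth coming from Lemma~\ref{l:sqrt-t-1-50-sqrt-t} is replaced by $\log t$-growth, as is suggested by combining~\eqref{eq:var-formula-crit} at $d=2$ with Lemma~\ref{l:C-1-C-2}. Set $\varepsilon := \Cov(\zeta,X_0)/\sqrt{\Var(\zeta)}$. If $\varepsilon \le 2^{24}N^2/t^{1/4}$, the claim follows at once from $t^{1/4} \ge \sqrt{\log t}$ (valid for all $t \ge 2$), so assume otherwise. Normalize $\sum a_{(i,j)}^2 = 1$, set $A := \sum_{(i,j)} a_{(i,j)}$, and note $A \ge \varepsilon/\sqrt{3}$ from the trivial bound $\Var(\zeta) \ge 1/3$ coming from last-layer noises. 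Choose an integer $M \approx 2^{10}N^2/\varepsilon$; this forces $M \le t^{1/4}$ in the present regime.

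Apply the same averaging trick as for $d = 1$, this time over the two-dimensional plane of layer $L_t \subset \HS_3$. Define
\[
\widehat{\zeta} := \frac{1}{M^2}\sum_{(i,j)\in [M]^2} \sum_{(k,\ell) \in [N]^2} a_{(k,\ell)} X_{(k+i,\ell+j,t-k-\ell-i-j)},
\]
and rewrite $\widehat{\zeta} = \sum_{(p,q)} b_{(p,q)} X_{(p,q,t-p-q)}$. On an interior block of side about $M$ one has $b_{(p,q)} = A/M^2$; elsewhere Cauchy--Schwarz over at most $N^2$ contributing shifts gives $|b_{(p,q)}| \le N/M^2$. Since $\Cov(\widehat\zeta, X_0) = A$ and $\Var(\widehat\zeta) \le \Var(\zeta)$, it suffices to show $A^2/\Var(\widehat\zeta) \ll 1/\log t$. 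Splitting $\widehat{\zeta} = (A/M^2) S_{\mathrm{int}} + E$, with $S_{\mathrm{int}}$ the sum of $X_{(p,q,t-p-q)}$ over the interior block and $E$ the boundary remainder carrying $O(MN)$ terms, and bounding the cross-term $(2A/M^2)\Cov(S_{\mathrm{int}},E)$ via the trivial pointwise estimate $|\Cov(X_u,X_{u'})| \le \Var(X_u) = O(\log t)$, one checks that the main piece $(A/M^2)^2 \Var(S_{\mathrm{int}})$ dominates precisely when $M \gg N^2/A$, which is exactly the regime arranged above.

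The crux is then a two-dimensional variance lemma:
\[
\Var\!\left(\sum_{(p,q)\in [K]^2} X_{(p,q,t-p-q)}\right) \gg K^4 \log t \qquad \text{for } K \le t^{1/4}.
\]
To prove it, expand using~\eqref{eq:cov-formula} and group pairs $(v,v')$ by their difference $\delta = v-v'$ to get
\[
\Var(S) - K^4 = \frac{1}{3}\sum_{k=0}^{t-1}\sum_{\substack{\delta_1+\delta_2+\delta_3=0 \\ |\delta_1|,|\delta_2|<K}} (K-|\delta_1|)(K-|\delta_2|)\sum_w \binom{k}{w}\binom{k}{w+\delta}.
\]
The inner $w$-sum equals $9^k\cdot\bbP\!\left(W^{(k)} - W'^{(k)} = \delta\right)$ for two independent $W^{(k)}, W'^{(k)} \sim \mathrm{Multinomial}(k;\tfrac13,\tfrac13,\tfrac13)$, and a local central limit theorem on the 2-dim sublattice $\{\delta : \delta_1+\delta_2+\delta_3=0\}$ yields this probability $\asymp 1/k$ uniformly for $|\delta| \le \sqrt{k}$. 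For $k \ge K^2$ every admissible $\delta$ lies in this Gaussian regime, so the $\delta$-sum is $\asymp K^4/k$; summing over $k \in [K^2, t-1]$ yields $\asymp K^4 \log(t/K^2) \ge \tfrac12 K^4 \log t$ when $K^2 \le \sqrt{t}$. Combined with the boundary control, this produces $\varepsilon^2 \ll 1/\log t$ with constants comfortably supporting the claimed $2^{24}N^2$ prefactor. The main obstacle is the variance lemma itself: the $d=1$ proof exploits the clean identity $\sum_w \binom{k}{w}\binom{k}{w+\delta} = \binom{2k}{k+\delta}$, but in $d=2$ no such closed form exists, so one has to establish the local CLT for the trinomial distribution with enough quantitative control to extract an explicit lower bound on the $\asymp 1/k$ constant.
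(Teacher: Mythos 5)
Your overall strategy is the same as the paper's: the two-case split on $\varepsilon$, the normalization $\sum a_{(i,j)}^2=1$ giving $A\ge\varepsilon/\sqrt3$, the two-dimensional averaging trick over shifts in $[M]^2$ with $M\asymp N^2/\varepsilon$, and the reduction to showing that the variance of the averaged estimator grows like $\log t$ rather than $\sqrt t$. The bookkeeping for the boundary term $E$ and the cross term is also consistent with what the paper does (the paper's thresholds are $\varepsilon>2^{21}N^2t^{-1/6}$ and $M=\lceil 2^{16}N^2/\varepsilon\rceil$, but your choices serve the same purpose). The one structural difference is how the crux is packaged: you state it as a block-variance lower bound $\Var\bigl(\sum_{(p,q)\in[K]^2}X_{(p,q,t-p-q)}\bigr)\gg K^4\log t$, whereas the paper proves a pairwise covariance bound (Lemma~\ref{l:log-t-400}): $\Cov(X_{(i,j,t-i-j)},X_{(i',j',t-i'-j')})\ge\frac{\log t}{400}$ for all pairs within distance $2M$ and $t\ge 2^{24}M^6$. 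Since all the covariances in \eqref{eq:cov-formula} are nonnegative, the pairwise bound immediately implies your block bound, so the two formulations are interchangeable here.

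The genuine gap is that you do not prove the crux. You reduce it to the claim that $\sum_w\binom{k}{w}\binom{k}{w+\delta}=9^k\,\bbP(W^{(k)}-W'^{(k)}=\delta)\asymp 9^k/k$ uniformly for $|\delta|\lesssim\sqrt k$, appeal to ``a local central limit theorem,'' and then explicitly flag that extracting a quantitative constant is ``the main obstacle.'' That obstacle is exactly the content of the paper's Lemma~\ref{l:log-t-400}, and it is handled without any local CLT: one restricts the $w$-sum to the central box $\mathcal{R}=[k/3-k^{2/3},k/3+k^{2/3}]^3$, compares $\binom{k}{w+\Delta}$ to $\binom{k}{w}$ there by a direct ratio estimate (losing only a factor $\ge 1/5$ when $k\ge 2^{24}M^6$ dwarfs the shift), bounds the complementary sums by $9^k/k^2$ times tiny constants via Stirling, and then invokes the explicit two-sided bound $\frac13\cdot\frac{9^m}{m}\le F(m,3)\le\frac{\sqrt{27}}{4\pi}\cdot\frac{9^m}{m}$ of Lemma~\ref{lem:sqrt-27-4-pi}, which in turn is proved by showing $m\,9^{-m}F(m,3)$ is increasing using the three-term recurrence for the sequence A002893. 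So your reduction is sound and your heuristic for the $1/k$ rate is correct, but the quantitative lower bound on the trinomial convolution sum — the only genuinely new ingredient beyond the $d=1$ case — is asserted rather than established, and without it the explicit constant $2^{24}N^2$ cannot be verified.
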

\begin{proof}
This proof proceeds in a similar manner as that of Proposition~\ref{prop:window-d-1} above. Let $\varepsilon$ denote the correlation. We consider two cases.

\underline{Case 1.} Suppose $\varepsilon \le 2^{21} N^2 t^{-1/6}$. Then
\[
\varepsilon \le \frac{2^{21} N^2}{\sqrt[6]{t}} \le \frac{2^{24} N^2}{\sqrt{\log t}}.
\]

\underline{Case 2.} Now suppose $\varepsilon > 2^{21} N^2 t^{-1/6}$. Take
\begin{equation}\label{eq:2-16-N-2}
M := \left\lceil \frac{2^{16}N^2}{\varepsilon} \right\rceil.
\end{equation}
Note that we have $t \ge 2^{24} M^6$.

Like before, we assume without loss of generality that $\sum_{(i,j) \in [N]^2} a_{(i,j)}^2 = 1$, and thus
\begin{equation}\label{eq:A-sqrt-3}
A := \Cov(\zeta, X_0) = \sum_{(i,j) \in [N]^2} a_{(i,j)} \ge \frac{\varepsilon}{\sqrt{3}}.
\end{equation}

Using the averaging trick again, we define
\[
\widehat{\zeta} := \frac{1}{M^2} \sum_{(\Delta_1, \Delta_2) \in [M]^2} \sum_{(i,j) \in [N]^2} a_{(i,j)} X_{(i+\Delta_1, j+\Delta_2, t-i-j-\Delta_1-\Delta_2)}.
\]
Performing a similar computation as before and invoking Lemma~\ref{l:log-t-400}, we discover that
\[
\frac{\Var(\widehat{\zeta})}{\Cov(\widehat{\zeta},X_0)^2} \ge \left( \frac{1}{400} \left( \frac{M-N+1}{M} \right)^4 - \frac{8N^2}{AM} \right) \log t.
\]
Hence, using \eqref{eq:2-16-N-2} and \eqref{eq:A-sqrt-3}, we obtain
\[
\frac{1}{\varepsilon^2} \ge \frac{\log t}{2^{10}},
\]
which shows
\[
\varepsilon \le \frac{2^5}{\sqrt{\log t}} \le \frac{2^{24} N^2}{\sqrt{\log t}},
\]
as desired.
\end{proof}

\section{Covariance Estimates in the Infinite Model}\label{s:cov-est}
Here we prove some technical lemmas for $\HS_2$ and $\HS_3$.

\subsection{Binomials}
In this subsection, we consider $P = \HS_2$. We prove the following lemma.

\begin{lemma}\label{l:sqrt-t-1-50-sqrt-t}
Let $M$ and $t$ be positive integers such that $M \ge 2^5$ and $t \ge 2^{11} M^3$. Then for integers $i,j$ with $|i-j| \le 2M$, we have
\[
\sqrt{t} \ge \Var\!\left( X_{(i,t-i)} \right) \ge \Cov\!\left( X_{(i,t-i)}, X_{(j,t-j)} \right) \ge \frac{1}{50} \sqrt{t}.
\]
\end{lemma}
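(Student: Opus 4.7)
\emph{Proof plan.} The first move is to specialize the general covariance formula~\eqref{eq:cov-formula} to the critical case $d = 1$, $\alpha_2 = 1/2$, for which $((d+1)\alpha_{d+1})^{2t} = 1$ and $(d+1)\alpha_{d+1}^2 = 1/2$. Writing $m := i - j$, so that $v - v' = (m,-m) \in \mathbb{Z}^2$, the inner sum $\sum_w \binom{k}{w}\binom{k}{w + v - v'}$ collapses via Vandermonde's identity to the single binomial $\binom{2k}{k-m}$, leaving
\[
\Cov\!\left(X_{(i,t-i)}, X_{(j,t-j)}\right) = 1 + \frac{1}{2}\sum_{k=|m|}^{t-1}\frac{\binom{2k}{k-m}}{4^k},
\]
and the case $m = 0$ gives $\Var(X_{(i,t-i)})$. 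The middle inequality $\Var \ge \Cov$ is then immediate from $\binom{2k}{k-m} \le \binom{2k}{k}$, and the upper bound $\Var \le \sqrt{t}$ follows from the classical estimate $\binom{2k}{k}/4^k \le 1/\sqrt{\pi k}$ (valid for $k \ge 1$) together with $\sum_{k=1}^{t-1} 1/\sqrt{k} \le 2\sqrt{t}$; the resulting bound $\Var \le \tfrac{3}{2} + \sqrt{t}/\sqrt{\pi}$ lies below $\sqrt{t}$ thanks to the very generous hypothesis $t \ge 2^{11}M^3$.

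The real work goes into the lower bound on $\Cov$. The plan is to restrict the sum to the window $4M^2 \le k \le t-1$ and to establish the termwise estimate $\binom{2k}{k-m}/4^k \ge 1/(4\sqrt{k})$ there. This factors as the product of the standard central-binomial lower bound $\binom{2k}{k}/4^k \ge 1/(2\sqrt{k})$ with the ``Gaussian'' ratio bound $\binom{2k}{k-m}/\binom{2k}{k} \ge e^{-1/2}$, which is the content that holds once $k \ge 4m^2$. For that ratio, expand
\[
\frac{\binom{2k}{k-m}}{\binom{2k}{k}} = \prod_{j=1}^{|m|}\frac{k-j+1}{k+j},
\]
take logarithms, and apply $\log(1-x) \ge -2x$ on $(0,1/2]$ to each factor. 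The hypothesis $(2j-1)/(k+j) \le 1/2$ is verified from $k \ge 4M^2$, $|m| \le 2M$, and $M \ge 2^5$, and bounding $\sum_{j=1}^{|m|}(2j-1)/(k+j) \le m^2/(k+1)$ telescopes the log of the product to at least $-2m^2/(k+1) \ge -1/2$.

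Given the termwise bound, an integral comparison yields
\[
\sum_{k=4M^2}^{t-1} \frac{1}{4\sqrt{k}} \;\ge\; \frac{1}{2}\bigl(\sqrt{t} - 2M\bigr),
\]
hence $\Cov \ge 1 + (\sqrt{t} - 2M)/4$. The hypotheses $M \ge 2^5$ and $t \ge 2^{11}M^3$ give $\sqrt{t} \ge 2^{11/2}M^{3/2} \ge 2^{8} M$, i.e. $M \le \sqrt{t}/256$, so in fact $\Cov \ge \sqrt{t}/5$, comfortably above the stated $\sqrt{t}/50$. I expect the main obstacle to be precisely the Gaussian ratio bound --- keeping the range $k \ge 4m^2$, $|m| \le 2M$ aligned with the validity of $\log(1-x) \ge -2x$ --- but since the target constant $1/50$ is very slack, the bookkeeping should be routine.
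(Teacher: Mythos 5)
Your argument is correct and reaches the stated bounds, but the key step is genuinely different from the paper's. The paper keeps the convolution sum $\sum_{w}\binom{k}{w}\binom{k}{w+u}$ intact, restricts the $k$-sum to $k\ge\lceil t/2\rceil$ and the $w$-sum to the central window $\left|w-\tfrac{k}{2}\right|\le k^{2/3}$, bounds the ratio $\binom{k}{w+u}/\binom{k}{w}$ termwise by $\left(1-5k^{-1/3}\right)^{2M}\ge\tfrac14$, and then controls the tail $\sum_{|w-k/2|>k^{2/3}}\binom{k}{w}^2$ with a Stirling estimate. You instead collapse the convolution via Vandermonde to the single binomial $\binom{2k}{k-m}$ and reduce everything to the two standard central-binomial inequalities $\tfrac{1}{2\sqrt{k}}\le 4^{-k}\binom{2k}{k}\le\tfrac{1}{\sqrt{\pi k}}$ together with a clean product bound for $\binom{2k}{k-m}/\binom{2k}{k}$. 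This is shorter, avoids the tail estimate entirely, and also makes the middle inequality $\Var\ge\Cov$ immediate from $\binom{2k}{k-m}\le\binom{2k}{k}$ rather than Cauchy--Schwarz; the price is that the Vandermonde collapse is special to $d=1$, so unlike the paper's window-plus-tail technique it does not transfer to the trinomial analogue (Lemma~\ref{l:log-t-400}). One small slip to fix: your ratio bound requires $k+1\ge 4m^2$, and since $|m|$ can be as large as $2M$ you must start the truncated sum at $k\ge 16M^2$ rather than $4M^2$; with $\sqrt{t}\ge 2^{8}M$ this only changes $\sqrt{t}-2M$ to $\sqrt{t}-4M$ and the conclusion $\Cov\ge\sqrt{t}/5$ survives untouched.
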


\begin{proof}
Let us prove the inequalities above from the left to the right.

For the first inequality, the variance formula \eqref{eq:var-formula-crit} implies
\[
\Var\!\left( X_{(i,t-i)} \right) = \frac{3}{2} + \frac{1}{2} \sum_{k=1}^{t-1} 4^{-k} \binom{2k}{k}.
\]
We use the estimate
\[
4^{-k} \binom{2k}{k} \le \frac{1}{\sqrt{\pi k}},
\]
which holds for every positive integer $k$, and which can be proved for example by an exact form of Stirling's approximation. Thus,
\[
\Var\!\left( X_{(i,t-i)} \right) \le \frac{3}{2} + \frac{1}{2} \sum_{k=1}^{t-1} \frac{1}{\sqrt{\pi k}} \le \frac{3}{2} + \sqrt{\frac{t-1}{\pi}} \le \sqrt{t}.
\]

\smallskip

The second inequality in the lemma follows simply from
\[
\Cov\!\left( X_{(i,t-i)}, X_{(j,t-j)} \right) \le \sqrt{\Var\!\left( X_{(i,t-i)} \right) \cdot \Var\!\left( X_{(j,t-j)} \right)} = \Var\!\left( X_{(i,t-i)} \right).
\]

\smallskip

Now we turn to the third inequality. For convenience, let us write $u := |i-j| \le 2M$. From the covariance formula \eqref{eq:cov-formula}, we have a lower bound
\begin{equation}\label{eq:cov-from-t-2-to-t-1}
\Cov\!\left( X_{(i,t-i)}, X_{(j,t-j)} \right) \ge \frac{1}{2} \sum_{k = \left\lceil t/2 \right\rceil}^{t-1} 4^{-k} \sum_{w \in \mathbb{Z}} \binom{k}{w} \binom{k}{w+u}.
\end{equation}
Here, we use the convention that $\binom{k}{w}$ is zero whenever $w < 0$ or $w > k$.

We are going to bound the inner sum in the right-hand side of \eqref{eq:cov-from-t-2-to-t-1} from below. Note that since the outer sum takes $k$ from $\left\lceil t/2 \right\rceil$ to $t-1$, we obtain from the parameter assumption that $k \ge t/2 \ge 2^{10} M^3$, and in particular that $k \ge 2^{25}$, which is helpful as we perform various exact bounds below. Observe that
\begin{align}
\sum_{w \in \mathbb{Z}} \binom{k}{w}\binom{k}{w+u}
&\ge
\sum_{\substack{w \in \mathbb{Z} \\ \left| w - \frac{k}{2} \right| \le k^{2/3}}} \binom{k}{w}^2 \cdot \frac{(k-w) \cdots (k-w-u+1)}{(w+1) \cdots (w+u)} \notag \\
&\ge
\sum_{\substack{w \in \mathbb{Z} \\ \left| w - \frac{k}{2} \right| \le k^{2/3}}} \binom{k}{w}^2 \left( \frac{k-w-u+1}{w+u} \right)^2 \notag \\
&\ge
\left( 1 - \frac{5}{\sqrt[3]{k}} \right)^{2M} \sum_{\substack{w \in \mathbb{Z} \\ \left| w - \frac{k}{2} \right| \le k^{2/3}}} \binom{k}{w}^2. \label{ineq:k-w-u-1}
\end{align}
The tail sum can be upper bounded by the magnitude of each term as
\[
\sum_{\left| w - \frac{k}{2} \right| > k^{2/3}} \binom{k}{w}^2 \le (k+1) \binom{k}{\left\lceil \frac{k}{2} - k^{2/3} \right\rceil}^2.
\]
Using an exact version of Stirling's approximation, we obtain the estimate
\[
\binom{k}{\left\lceil \frac{k}{2} - k^{2/3} \right\rceil} \le \frac{2^{k+6}}{\sqrt{k}} e^{-2k^{1/3}}.
\]
Combining the two inequalities above with the parameter assumption on $k$, we obtain
\begin{equation}\label{ineq:k-w-u-2}
\sum_{\left| w - \frac{k}{2} \right| > k^{2/3}} \binom{k}{w}^2 \le \frac{1}{2^{1000}} \cdot \frac{4^k}{k^2}.
\end{equation}
Note that the huge denominator $2^{1000}$ follows because $k$ is assumed to be large.

Now observe that
\begin{equation}\label{ineq:k-w-u-3}
\sum_{w \in \mathbb{Z}} \binom{k}{w}^2 = \binom{2k}{k} \ge \frac{4^k}{2\sqrt{k}},
\end{equation}
and
\begin{equation}\label{ineq:k-w-u-4}
\left( 1 - \frac{5}{\sqrt[3]{k}} \right)^{2M} \ge \frac{1}{4},
\end{equation}
from the assumption $k \ge 2^{10} M^3$.

Combining \eqref{ineq:k-w-u-1}--\eqref{ineq:k-w-u-4}, we obtain
\[
\sum_{w \in \mathbb{Z}} \binom{k}{w} \binom{k}{w+u} \ge \frac{1}{4} \left( \frac{1}{2} \cdot \frac{4^k}{\sqrt{k}} - \frac{1}{2^{1000}} \cdot \frac{4^k}{k^2} \right) > \frac{1}{10} \cdot \frac{4^k}{\sqrt{k}}.
\]

Using this estimate in \eqref{eq:cov-from-t-2-to-t-1}, we conclude that
\[
\Cov\!\left( X_{(i,t-i)}, X_{(j,t-j)} \right) \ge \frac{1}{20} \sum_{k = \left\lceil t/2 \right\rceil}^{t-1} \frac{1}{\sqrt{k}} \ge \frac{1}{50}\sqrt{t},
\]
as desired.
\end{proof}

\subsection{Trinomials}
In this subsection, we consider the poset $P = \HS_3$. We prove the following lemma.

\begin{lemma}[cf.~Lem.~\ref{l:sqrt-t-1-50-sqrt-t}]\label{l:log-t-400}
Let $M$ and $t$ be integers such that $M \ge 2^5$ and $t \ge 2^{24} M^6$. Then for integers $i, j, i', j'$ with $\max\{|i-i'|, |j-j'|\} \le 2M$, we have
\[
\log t \ge \Var\!\left( X_{(i,j,t-i-j)} \right) \ge \Cov\!\left( X_{(i,j,t-i-j)}, X_{(i',j',t-i'-j')} \right) \ge \frac{\log t}{400}.
\]
\end{lemma}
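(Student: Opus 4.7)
The plan is to mimic the three-step argument from the proof of Lemma~\ref{l:sqrt-t-1-50-sqrt-t}, substituting trinomial coefficients for binomials and the divergent series $\sum_k k^{-1}$ for $\sum_k k^{-1/2}$. The leftmost inequality $\Var(X_{(i,j,t-i-j)}) \leq \log t$ follows from the critical variance formula \eqref{eq:var-formula-crit} with $d=2$ and the $C_2(2)$ upper bound of Lemma~\ref{l:C-1-C-2}: each summand $9^{-k}\sum_{|v|=k}\binom{k}{v}^2$ is at most $C_2(2)/k$, so the total is bounded by a constant times $\log t$, and the constant is absorbed because $t\geq 2^{24}M^6$ makes $\log t$ large. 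The middle inequality $\Var \geq \Cov$ is Cauchy--Schwarz, exactly as in the binomial proof.

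The main work is the covariance lower bound. Set $\Delta := v - v' = (i-i',\, j-j',\, -(i-i')-(j-j'))$; by hypothesis $|\Delta_\ell| \leq 4M$ for each $\ell$, and crucially $\Delta_1+\Delta_2+\Delta_3 = 0$. From \eqref{eq:cov-formula} (all summands are nonnegative),
\[
\Cov(X_v, X_{v'}) \;\geq\; \frac{1}{3}\sum_{k=k_0}^{t-1} 9^{-k}\sum_{w\in\mathbb{Z}^3}\binom{k}{w}\binom{k}{w+\Delta},
\]
where $k_0$ is chosen of order $M^3$ (say $k_0 = 2^{15}M^3$). Following the binomial proof, restrict the inner sum to the ``typical'' region $\mathcal{T}_k := \{w \in \mathbb{Z}^3 : |w_\ell - k/3| \leq k^{2/3} \text{ for } \ell = 1,2,3\}$. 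On $\mathcal{T}_k$ one has
\[
\frac{\binom{k}{w+\Delta}}{\binom{k}{w}} \;=\; \prod_{\ell=1}^{3} \frac{w_\ell!}{(w_\ell + \Delta_\ell)!} \;\geq\; \prod_{\ell=1}^{3}\left(\frac{k}{3}\right)^{-\Delta_\ell}\!\bigl(1 - O(k^{-1/3})\bigr)^{|\Delta_\ell|},
\]
and because $\Delta_1+\Delta_2+\Delta_3 = 0$ the leading factor $(k/3)^{-\Delta_1-\Delta_2-\Delta_3}$ collapses to $1$, leaving only a multiplicative loss of $(1-O(M/k^{1/3}))^{O(M)} \geq 1/4$ for $k \geq k_0$. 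A Stirling/Chernoff-type estimate makes the tail $\sum_{w \notin \mathcal{T}_k}\binom{k}{w}^2$ negligible compared to the main term, and the $C_1(2)$ lower bound of Lemma~\ref{l:C-1-C-2} gives $\sum_{|w|=k}\binom{k}{w}^2 \geq C_1(2)\cdot 9^k/k$. Combining, $9^{-k}\sum_w \binom{k}{w}\binom{k}{w+\Delta} \geq c_1/k$ for an absolute constant $c_1 > 0$.

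Summing over $k_0 \leq k \leq t-1$ yields $\Cov(X_v, X_{v'}) \geq c_2(\log t - \log k_0 + O(1))$. The hypothesis $t \geq 2^{24}M^6$ forces $6\log M \leq \log t$, so $\log k_0 \leq 3\log M + O(1) \leq \tfrac{1}{2}\log t$ for $M \geq 2^5$ and $t$ sufficiently large; the sum is therefore at least $\tfrac{1}{3}\log t$, and the chain of constants $\frac{1}{3}\cdot\frac{1}{4}\cdot C_1(2)\cdot\tfrac{1}{3}$ comfortably beats the explicit factor $1/400$.

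The main obstacle is bookkeeping of the absolute constants needed to reach the explicit $1/400$. The conceptual point—namely that the sum-zero identity $\Delta_1+\Delta_2+\Delta_3=0$ kills the leading-order correction in the ratio of trinomials—is exactly what makes the typical-region argument go through in three dimensions; the rest of the argument parallels Lemma~\ref{l:sqrt-t-1-50-sqrt-t} almost verbatim.
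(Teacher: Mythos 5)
Your outline follows the paper's proof almost exactly: the same three-step structure, the same restriction of the inner sum to the central box $[k/3-k^{2/3},\,k/3+k^{2/3}]^3$, the same observation that $\Delta_1+\Delta_2+\Delta_3=0$ kills the leading factor in the ratio of trinomials, and the same summation of $\asymp 1/k$ over a range of $k$ of length comparable to $t$ (the paper starts at $k=\lceil\sqrt{t}\rceil$ rather than at $2^{15}M^3$; this is immaterial).

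There is, however, one genuine flaw in your write-up of the first inequality, and a related soft spot in the last one. You bound each summand of \eqref{eq:var-formula-crit} by $C_2(2)/k$ using Lemma~\ref{l:C-1-C-2} and then assert that ``the constant is absorbed because $t\ge 2^{24}M^6$ makes $\log t$ large.'' A \emph{multiplicative} constant in front of $\log t$ is never absorbed by taking $t$ large: if $C_2(2)>1$ you would only get $\Var \le \tfrac43+\tfrac{C_2(2)}{3}\log t$, which need not be $\le \log t$. The claim $\Var(X_{(i,j,t-i-j)})\le\log t$ with constant exactly $1$ requires an explicit numerical bound on $F(k,3)=\sum_{|v|=k}\binom{k}{v}^2$, namely $F(k,3)\le \tfrac{\sqrt{27}}{4\pi}\cdot\tfrac{9^k}{k}\le \tfrac{9^k}{k}$, which is precisely what Lemma~\ref{lem:sqrt-27-4-pi} supplies (and why that lemma is in the paper). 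The same issue recurs at the end: to ``comfortably beat'' the explicit $1/400$ you cannot leave $C_1(2)$ unspecified; you should instead use the explicit lower bound $F(k,3)\ge\tfrac13\cdot\tfrac{9^k}{k}$ from Lemma~\ref{lem:sqrt-27-4-pi}. One further bookkeeping point the paper handles and you gloss over: when controlling the tail $\sum_{w\notin\mathcal{T}_k}\binom{k}{w}^2$, the triples $w$ with a zero coordinate must be treated separately (there $\binom{k}{w}\le 2^k$, giving a contribution $\le 3k\cdot 4^k$), since the Stirling-type decay estimate for off-center $w$ is only valid when all coordinates are positive. With these repairs your argument is the paper's argument.
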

\begin{proof}
We proceed in the same manner as our proof of Lemma~\ref{l:sqrt-t-1-50-sqrt-t}.

Let us prove the inequalities in the lemma from the left to the right. Using the variance formula \eqref{eq:var-formula-crit}, we find
\begin{align*}
\Var\!\left( X_{(i,j,t-i-j)} \right) 
&= \frac{4}{3} + \frac{1}{3} \sum_{k=1}^{t-1} 3^{-2k} \sum_{\substack{v \in \mathbb{Z}_{\ge 0}^3 \\ |v| = k}} \binom{k}{v}^2 \\
&\overset{(\text{Lem.}~\ref{lem:sqrt-27-4-pi})}{\le} \frac{4}{3} + \frac{1}{3} \sum_{k=1}^{t-1} 3^{-2k} \cdot \frac{9^k}{k} \\
&\le \frac{4}{3} + \frac{1}{3} \sum_{k=1}^{t-1} \frac{1}{k} \\
&\le \log t.
\end{align*}
We have obtained the first inequality.

The second inequality follows easily from Cauchy--Schwarz. We turn now to the third inequality. In a similar fashion to \eqref{eq:cov-from-t-2-to-t-1}, we have
\begin{equation}\label{ineq:Cov-comb-1}
\Cov\!\left( X_{(i,j,t-i-j)}, X_{(i',j',t-i'-j')} \right) \ge \frac{1}{3} \sum_{k = \left\lceil \sqrt{t} \right\rceil}^{t-1} 9^{-k} \sum_{w \in \mathbb{Z}^3} \binom{k}{w} \binom{k}{w+\Delta},
\end{equation}
where $\Delta := (i-i',j-j',i'+j'-i-j)$.

Let us partition the index set $\{0,1,\ldots,k\}^3$ into three pieces: the first piece
\[
\mathcal{R} := \left[ \frac{k}{3} - k^{2/3}, \frac{k}{3} + k^{2/3} \right]^3 \cap \mathbb{Z}^3,
\]
the second piece
\[
\mathcal{R}' := [k]^3 - \mathcal{R},
\]
and the third piece $\mathcal{R}'' := \{0,1,\ldots,k\}^3 - (\mathcal{R} \cup \mathcal{R}')$.

Like in the proof of Lemma~\ref{l:sqrt-t-1-50-sqrt-t}, we have
\begin{equation}\label{ineq:Cov-comb-2}
\sum_{w \in \mathbb{Z}^3} \binom{k}{w} \binom{k}{w+\Delta} \ge \sum_{w \in \mathcal{R}} \binom{k}{w} \binom{k}{w+\Delta} \ge \frac{1}{5} \sum_{w \in \mathcal{R}} \binom{k}{w}^2.
\end{equation}
We proceed to bound the tail sum from above. For each $w \in \mathcal{R}'$, using an exact version of Stirling's approximation, we find
\[
\binom{k}{w} \le 2^{50} \cdot \sqrt{k} \cdot 3^k \cdot \exp\!\left( - \frac{9}{4} \sqrt[3]{k} \right),
\]
and thus
\[
\sum_{w \in \mathcal{R}'} \binom{k}{w}^2 \le k^3 \cdot 2^{100} \cdot k \cdot 9^k \exp\!\left( - \frac{9}{2} \sqrt[3]{k} \right) \le \frac{1}{2^{3000}} \cdot \frac{9^k}{k^2}.
\]
For each $w \in \mathcal{R}''$, at least one entry of $w$ is zero, and so $\binom{k}{w} \le 2^k$. This gives
\[
\sum_{w \in \mathcal{R}''} \binom{k}{w}^2 \le 3k \cdot 4^k \le \frac{1}{2^{10000}} \cdot \frac{9^k}{k^2}.
\]

Combining the above estimates with Lemma~\ref{lem:sqrt-27-4-pi} yields
\begin{align}
\sum_{w \in \mathcal{R}} \binom{k}{w}^2
&= \sum_{w \in \mathbb{Z}^3} \binom{k}{w}^2 - \sum_{w \in \mathcal{R}'} \binom{k}{w}^2 - \sum_{w \in \mathcal{R}''} \binom{k}{w}^2 \notag \\
&\ge \frac{1}{3} \cdot \frac{9^k}{k} - \frac{1}{2^{3000}} \cdot \frac{9^k}{k^2} - \frac{1}{2^{10000}} \cdot \frac{9^k}{k^2} \notag \\
&\ge \frac{1}{10} \cdot \frac{9^k}{k}. \label{ineq:Cov-comb-3}
\end{align}

Combining \eqref{ineq:Cov-comb-1}, \eqref{ineq:Cov-comb-2}, and \eqref{ineq:Cov-comb-3}, we obtain
\[
\Cov\!\left( X_{(i,j,t-i-j)}, X_{(i',j',t-i'-j')} \right) \ge \frac{1}{150} \sum_{k=\left\lceil \sqrt{t} \right\rceil}^{t-1} \frac{1}{k} \ge \frac{\log t}{400},
\]
as desired.
\end{proof}

\section{Some technical results on enumerating abelian squares}\label{s:abelian}
This section discusses some results about the numbers
\[
F(m,i) := \sum_{\substack{v_1, \ldots, v_i \in \mathbb{Z}_{\ge 0} \\ v_1 + \cdots + v_i = m}} \binom{m}{v_1, \ldots, v_i}^2,
\]
for positive integers $m$ and $i$. The integer $F(m,i)$ enumerates the {\em abelian squares} of length $2m$ in the alphabet of $i$ letters. We refer to the work of Richmond--Shallit \cite{RS09} for details. Richmond and Shallit \cite{RS09} show that for fixed $i$, we have
\begin{equation}\label{eq:F-m-i-asymp}
F(m,i) \sim \frac{i^{2m+i/2}}{(4\pi m)^{(i-1)/2}},
\end{equation}
as $m \to \infty$.

The numbers $F(m,i)$ are well-studied in combinatorics and number theory. By fixing a small value of $m$ or $i$ and considering the resulting sequence, we obtain many different sequences on the \cite{OEIS}. For example, \href{https://oeis.org/A000984}{A000984} ($i=2$), \href{https://oeis.org/A002893}{A002893} ($i=3$), \href{https://oeis.org/A002895}{A002895} ($i=4$), \href{https://oeis.org/A000384}{A000384} ($m=2$), \href{https://oeis.org/A169711}{A169711} ($m=3$), \href{https://oeis.org/A169712}{A169712} ($m=4$).

\medskip

For $i = 3$, we prove the following bound.

\begin{lemma}\label{lem:sqrt-27-4-pi}
For every positive integer $m$, we have
\[
\frac{1}{3} \cdot \frac{9^m}{m} \le F(m,3) \le \frac{\sqrt{27}}{4\pi} \cdot \frac{9^m}{m}.
\]
\end{lemma}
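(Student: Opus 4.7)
The plan is to work with either the single-sum reduction
$$F(m,3) = \sum_{j=0}^{m} \binom{m}{j}^2 \binom{2j}{j},$$
obtained by fixing $v_1 = m-j$ in the triple sum and applying the Vandermonde--Chu identity $\sum_{v_2+v_3=j}\binom{j}{v_2}^2 = \binom{2j}{j}$, or the equivalent Parseval form
$$F(m,3) = \frac{1}{(2\pi)^2}\int_{-\pi}^{\pi}\!\!\int_{-\pi}^{\pi} f(\theta_1,\theta_2)^m\, d\theta_1\, d\theta_2,$$
where $f(\theta_1,\theta_2) := |1+e^{i\theta_1}+e^{i\theta_2}|^2 = 3+2\cos\theta_1+2\cos\theta_2+2\cos(\theta_1-\theta_2)$. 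The integrand attains its unique global maximum $f(0)=9$ at the origin, with Taylor expansion $f(\theta) = 9 - 2Q(\theta) + O(|\theta|^4)$, where $Q(\theta) := \theta_1^2 - \theta_1\theta_2 + \theta_2^2$ is a positive-definite form of determinant $3/4$. Both bounds will come from a Laplace analysis at this critical point, pivoting on the Gaussian identity
$$\int_{\mathbb{R}^2} \exp\!\Bigl(-\tfrac{2m}{9} Q(\theta)\Bigr) d\theta = \frac{9\pi}{m\sqrt{3}},$$
which, after division by $(2\pi)^2$, supplies exactly the target constant $\sqrt{27}/(4\pi)$.

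For the upper bound, I would establish the pointwise inequality $f(\theta) \leq 9\exp\!\bigl(-\tfrac{2}{9}Q(\theta)\bigr)$ on an explicit neighborhood of the origin, by bounding the remainder in the Taylor expansion of $\log f$; outside this neighborhood, $f(\theta) \leq 9-\eta$ for some explicit $\eta>0$ since the maximum of $f$ on $[-\pi,\pi]^2$ is unique. Substituting, and extending the Gaussian to all of $\mathbb{R}^2$, yields $F(m,3) \leq \tfrac{\sqrt{27}}{4\pi}\cdot\tfrac{9^m}{m} + O\!\bigl((9-\eta)^m\bigr)$. For the lower bound, the analogous two-sided Taylor estimate $f(\theta) \geq 9\exp\!\bigl(-\tfrac{2(1+\varepsilon)}{9}Q(\theta)\bigr)$ on a smaller neighborhood (valid for any $\varepsilon>0$ once the neighborhood is small), integrated over a box of scale $K/\sqrt{m}$ chosen so that the Gaussian mass outside is negligible, gives $F(m,3) \geq \tfrac{\sqrt{27}}{4\pi(1+\varepsilon)}\cdot\tfrac{9^m}{m}(1-o(1))$. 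Because $\sqrt{27}/(4\pi) \approx 0.4135 > 1/3$, the lower bound has comfortable slack and is obtained for all sufficiently large $m$ once $\varepsilon$ is taken small; the remaining finitely many small $m$ are checked directly from the single-sum formula.

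The main obstacle is that the claimed upper constant $\sqrt{27}/(4\pi)$ equals the asymptotic limit exactly, so the additive error $O((9-\eta)^m)$ from the naive Laplace bound leaves no automatic slack. To close the gap, I would pursue one of two routes: (i) prove monotonicity of $m F(m,3)/9^m$ directly, using the three-term Franel recurrence for $F(n,3)$ (OEIS A002893), so that the limit itself yields a uniform upper bound; or (ii) make the threshold ``$m \geq M_0$'' in the Laplace estimate explicit, verify the upper bound by direct computation for $m < M_0$ from the single-sum form, and check that for $m \geq M_0$ the tail contribution $(9-\eta)^m$ is dominated by the next-order term in a refined Laplace expansion. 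Route (i) is cleaner but hinges on an algebraic monotonicity lemma; route (ii) is mechanical but requires explicit bookkeeping of constants $\delta_0$, $\eta$, and $M_0$. For the lower bound, the generous slack between $1/3$ and $\sqrt{27}/(4\pi)$ means neither refinement is needed beyond checking a handful of small cases.
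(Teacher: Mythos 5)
Your primary route (the Parseval representation $F(m,3)=\frac{1}{(2\pi)^2}\iint f^m$ plus a Laplace analysis at the maximum of $f$) is genuinely different from the paper's argument, and you have correctly diagnosed its weak point yourself: the upper constant $\sqrt{27}/(4\pi)$ is exactly the asymptotic limit, so any bound of the form $\frac{\sqrt{27}}{4\pi}\cdot\frac{9^m}{m}+O((9-\eta)^m)$, or even one with a $(1+o(1))$ multiplicative error, does not by itself give the stated inequality for any finite $m$. Your ``route (i)'' --- prove that $a_m:=mF(m,3)/9^m$ is increasing using the three-term recurrence for A002893 --- is precisely what the paper does, and it is not merely a patch: once you have monotonicity, the \emph{entire} lemma follows with no Laplace analysis at all, since $a_1=1/3$ gives the lower bound for every $m$ (note the lower bound is an \emph{equality} at $m=1$, so it too cannot come from a purely asymptotic estimate) and $a_m\le\lim a_m=\sqrt{27}/(4\pi)$ gives the upper bound, the limit being supplied by the Richmond--Shallit asymptotics you already cite.

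The one thing still missing from your plan is the proof of the monotonicity lemma itself, which you defer. It is short: writing the recurrence $(m+2)^2F(m+2,3)-(10m^2+30m+23)F(m+1,3)+9(m+1)^2F(m,3)=0$ in terms of $a_m$ yields
\[
9m(m+1)(m+2)\,(a_{m+2}-a_{m+1})=(m^3+3m^2+5m)\,(a_{m+1}-a_m)+(2m-1)\,a_m,
\]
and since $a_2-a_1=1/27>0$, $a_m>0$, and $2m-1>0$ for $m\ge 1$, induction gives $a_{m+1}>a_m$ for all $m$. With that two-line induction supplied, your route (i) closes the argument completely; your route (ii) (explicit $\eta$, $M_0$, and a signed next-order Laplace term) is workable in principle but strictly harder, since controlling the sign and size of the $1/m$ correction uniformly is at least as delicate as the monotonicity statement it would replace.
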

\begin{proof}
Let us define
\[
a_m := \frac{m}{9^m} \cdot F(m,3).
\]
The asymptotic formula \eqref{eq:F-m-i-asymp} shows that
\[
\lim_{m \to \infty} a_m = \frac{\sqrt{27}}{4\pi}.
\]
Since $a_1 = 1/3$, it suffices to show that $\{a_m\}_{m=1}^{\infty}$ is increasing.

The sequence $\{F(m,3)\}_{m=1}^{\infty}$ (which appears as \cite[A002893]{OEIS}) satisfies the following recurrence
\begin{equation}\label{eq:F-m-2-3}
(m+2)^2 F(m+2,3) - (10m^2+30m+23) F(m+1,3) + 9(m+1)^2 F(m,3) = 0,
\end{equation}
for every positive integer $m$. (See e.g. Matthijs Coster's comment on \cite[A002893]{OEIS}. One way to obtain and prove this recurrence is by using Zeilberger's {\em creative telescoping} algorithm---see \cite{PWZ96}. We refer to the OEIS page for existing formulas and pointers to the literature about this sequence.)

The recurrence \eqref{eq:F-m-2-3} implies that
\begin{equation}\label{eq:9m-m-1-m-2}
9m(m+1)(m+2)(a_{m+2} - a_{m+1}) = (m^3+3m^2+5m)(a_{m+1} - a_m) + (2m-1) a_m,
\end{equation}
for every positive integer $m$. Since $a_2 - a_1 = 1/27 > 0$ and $a_m > 0$ for every $m \ge 1$, we obtain from \eqref{eq:9m-m-1-m-2} by induction that $a_{m+1} - a_m > 0$ for every positive integer $m$.
\end{proof}

Note that our argument in the proof also shows that the constants $\frac{1}{3}$ and $\frac{\sqrt{27}}{4\pi}$ in the statement of Lemma~\ref{lem:sqrt-27-4-pi} are optimal.

For general $i$, we have the following bound, which is quite loose but is adequate for our proof of Lemma~\ref{lem:i-2-l-j-i}.

\begin{lemma}\label{l:general-i}
For positive integers $n$ and $i$ such that $n \ge (2i)^{100}$, we have
\[
F(n,i) \le \frac{i^{2n+i}}{n^{(i-1)/2}}.
\]
\end{lemma}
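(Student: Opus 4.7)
My plan is to reduce the bound on $F(n,i)$ to a bound on the \emph{largest} multinomial coefficient on layer $n$, using the multinomial identity. Since $F(n,i) = \sum_{v} \binom{n}{v}^2$ and $\sum_{v} \binom{n}{v} = i^n$ (where both sums are over $v \in \mathbb{Z}_{\ge 0}^i$ with $|v| = n$), I immediately get
\[
F(n,i) \le M(n,i) \cdot i^n, \qquad M(n,i) := \max_{|v|=n} \binom{n}{v_1,\ldots,v_i}.
\]
So it suffices to prove $M(n,i) \le i^{n+i}/n^{(i-1)/2}$. By a standard swap argument, $M(n,i)$ is attained at the most balanced tuple: writing $n = qi + r$ with $0 \le r < i$, one has $M(n,i) = \binom{n}{q,\ldots,q,q+1,\ldots,q+1}$ with $i-r$ copies of $q$ and $r$ copies of $q+1$.

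I then apply Stirling's inequality $\sqrt{2\pi m}(m/e)^m \le m! \le \sqrt{2\pi m}(m/e)^m e^{1/(12m)}$, using the upper bound on $n!$ in the numerator and the lower bound on each factorial in the denominator. The $e^{-n}$ factors cancel automatically because $(i-r)q + r(q+1) = n$, leaving
\[
M(n,i) \le \frac{\sqrt{2\pi n}\, e^{1/(12n)}}{(2\pi)^{i/2}\, q^{(i-r)/2}(q+1)^{r/2}} \cdot \frac{n^n}{q^{(i-r)q}(q+1)^{r(q+1)}}.
\]
The key algebraic step is to bound the second factor by $i^n$. Rewriting it as $(n/q)^{(i-r)q}(n/(q+1))^{r(q+1)}$ and substituting $n/q = i(1 + r/(iq))$ and $n/(q+1) = i\bigl(1 - (i-r)/(i(q+1))\bigr)$, I plan to apply $(1+x)^y \le e^{xy}$ and $(1-x)^y \le e^{-xy}$. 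The pleasant cancellation here is that the two exponential corrections work out to $e^{r(i-r)/i}$ and $e^{-r(i-r)/i}$, so they cancel exactly, leaving precisely $i^n$.

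For the leading prefactor, the hypothesis $n \ge (2i)^{100}$ gives $q \ge n/(2i)$, hence $q^{(i-r)/2}(q+1)^{r/2} \ge (n/(2i))^{i/2}$. Cleaning up constants via $\sqrt{2\pi}\cdot(2i)^{i/2}/(2\pi)^{i/2} = \sqrt{2}\cdot i^{i/2}/\pi^{(i-1)/2} \le \sqrt{2}\cdot i^{i/2}$ and $e^{1/(12n)} \le \sqrt{2}$, I arrive at
\[
M(n,i) \le 2\, i^{i/2} \cdot \frac{i^n}{n^{(i-1)/2}}.
\]
Multiplying by $i^n$ gives $F(n,i) \le 2\, i^{i/2} \cdot i^{2n}/n^{(i-1)/2}$, and the trivial comparison $2 i^{i/2} \le i^i$ (valid for all $i \ge 2$, since $i^{i/2} \ge 2$; the case $i = 1$ is handled directly since $F(n,1) = 1 = 1^{2n+1}/n^0$) completes the proof. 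The only real obstacle is the bookkeeping in the Stirling calculation; once one spots that the two $(1 \pm x)^y$ corrections cancel on the nose, everything else reduces to elementary manipulations, and the generous threshold $n \ge (2i)^{100}$ provides ample slack to absorb the remaining absolute constants.
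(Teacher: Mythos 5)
Your proof is correct, and it takes a genuinely different and more elementary route than the paper's. The paper splits the sum defining $F(n,i)$ into a central region around the balanced tuple and a tail, bounds the tail via exact Stirling estimates, and approximates the central sum by a Gaussian integral (with careful error control via exact bounds on the Gamma function); this yields the near-sharp bound $\exp(6i^3/n^{1/8}) \cdot i^{2n+i/2}/(4\pi n)^{(i-1)/2}$, considerably tighter than the stated inequality. You instead exploit the slack in the target: the elementary inequality $\sum_v \binom{n}{v}^2 \le \bigl(\max_v \binom{n}{v}\bigr)\sum_v \binom{n}{v} = M(n,i)\, i^n$ reduces everything to a single Stirling computation for the largest multinomial coefficient, and the exact cancellation of the $(1+r/(iq))^{(i-r)q}$ and $(1-(i-r)/(i(q+1)))^{r(q+1)}$ corrections is the right observation to make that computation clean. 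Your approach loses roughly a factor $2^{(i-1)/2}$ against the true asymptotic $F(n,i) \sim i^{2n+i/2}/(4\pi n)^{(i-1)/2}$, but the target constant $i^i$ leaves far more room than that, so the crude $\ell^2 \le \ell^\infty \cdot \ell^1$ step costs nothing here; as a bonus, your argument handles all $i \ge 1$ uniformly (the paper treats $i \le 3$ separately) and uses the hypothesis $n \ge (2i)^{100}$ only through the much weaker consequence $q = \lfloor n/i \rfloor \ge n/(2i) \ge 1$. The trade-off is that the paper's method, unlike yours, recovers the correct constant in the asymptotics, which is why the authors remark that they have derived a much tighter bound than the lemma requires.
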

\begin{proof}
Let us first discuss the cases when $i$ is small. When $i = 1$, we have $F(n,1) = 1$. When $i = 2$, we have $F(n,i) = \binom{2n}{n}$, and the result follows easily by an exact version of Stirling's formula. The case when $i = 3$ follows from Lemma~\ref{lem:sqrt-27-4-pi} above.

For the rest of this proof, let us assume $i \ge 4$. The desired inequality can be proved in a similar manner as the argument in \cite{RS09}. Namely, we first break the sum
\[
\sum_{\substack{n_1, \ldots, n_i \in \mathbb{Z}_{\ge 0} \\ n_1 + \cdots + n_i = n}} \binom{n}{n_1, \ldots, n_i}^2
\]
into two parts: the part where $\forall j \in [i], |n/i - n_j| \le n^{5/8}$, and the rest. The latter is exponentially small with respect to the former, and the former can be approximated by an integral after the scaling $n_j = n/i + x_j \sqrt{n}$ with $|x_j| \le n^{1/8}$. We include details as follows.

Let us consider the following set
\[
\mathcal{R} := \left\{ (n_1, \ldots, n_i) \in \mathbb{Z}_{\ge 0}^i \, : \, n_1 + \cdots + n_i = n \text{ and } \forall j \in [i], \left| \frac{n}{i} - n_j \right| \le n^{5/8} \right\},
\]
and let $\mathcal{R}'$ denote the other $i$-tuples (where there is an index $j \in [i]$ such that $|n/i - n_j| > n^{5/8}$). Let us denote by $\Gamma:(0,\infty) \to \mathbb{R}$ the Gamma function defined on the positive reals. Using (i)~the convexity of $\log \Gamma$ (the digamma function $\psi$ is strictly increasing on $(0,\infty)$), (ii)~an exact version of Stirling's approximation for the Gamma function (see e.g. the recent sharp exact result of Nemes~\cite{Nem15}), and (iii)~our assumption that $n \ge (2i)^{100}$, we can show that for every $(n_1, \ldots, n_i) \in \mathcal{R}'$, we have
\[
\binom{n}{n_1, \ldots, n_i}^2 \le \frac{i^{2n+i}}{(2\pi n)^{i-1}} \cdot \frac{2}{e^{i \sqrt[4]{n}}}.
\]
Since $|\mathcal{R}'| \le (2n)^i$, we conclude
\begin{equation}\label{ineq:n1-ni-R'}
\sum_{(n_1, \ldots, n_i) \in \mathcal{R}'} \binom{n}{n_1, \ldots, n_i}^2 \le \frac{1}{2^{1000}} \cdot \frac{i^{2n+i}}{n^{100(i-1)}}.
\end{equation}

Next, we provide an upper bound for the sum
\[
\sum_{(n_1, \ldots, n_i) \in \mathcal{R}} \binom{n}{n_1, \ldots, n_i}^2.
\]
We take the same approach as~\cite{RS09} of approximating the sum above by an integral. However, since we would like to obtain an exact bound, we have to be particularly careful about the error term. Let
\[
I := \left[ \frac{n}{i} - n^{5/8}, \frac{n}{i} + n^{5/8} \right] \cap \mathbb{Z}.
\]
We start by the trivial upper bound
\begin{equation}\label{ineq:I-i-1}
\sum_{(n_1, \ldots, n_i) \in \mathcal{R}} \binom{n}{n_1, \ldots, n_i}^2 \le \sum_{(n_1, \ldots, n_{i-1}) \in I^{i-1}} \frac{\Gamma(n+1)^2}{\Gamma(n_1+1)^2 \cdots \Gamma(n_i + 1)^2},
\end{equation}
where $n_i$ on the right-hand side denotes $n - n_1 - \cdots - n_{i-1}$.

For $n_1, \ldots, n_{i-1} \in I$ and $\varepsilon_1, \ldots, \varepsilon_{i-1} \in [-1,1]$, let us define
\[
K := K(n_1, \ldots, n_{i-1}; \varepsilon_1, \ldots, \varepsilon_{i-1})
\]
to be the expression
\[
K := \frac{\Gamma(n_1+1)^2 \cdots \Gamma(n_i+1)^2}{\Gamma(n_1 + \varepsilon_1+1)^2 \cdots \Gamma(n_i + \varepsilon_i + 1)^2},
\]
where $n_i := n - n_1 - \cdots - n_{i-1}$ and $\varepsilon_i = - \varepsilon_1 - \cdots - \varepsilon_{i-1}$. Using an exact bound on the Gamma function (see e.g. Nemes' inequality~\cite{Nem15} again), we find that
\[
K \ge \left( \frac{n_i}{n_1} \right)^{2\varepsilon_1} \left( \frac{n_i}{n_2} \right)^{2\varepsilon_2} \cdots \left( \frac{n_i}{n_{i-1}} \right)^{2\varepsilon_{i-1}} \cdot \exp\!\left( - \frac{10 i^3}{n} \right).
\]
Since for every $\lambda > 0$, we have\footnote{Of course, the integral on the left-hand side of~\eqref{ineq:int-AM-GM} can be computed explicitly rather easily, but we choose to present this nice quick argument.}
\begin{equation}\label{ineq:int-AM-GM}
\int_{-1}^1 \lambda^\varepsilon \, {\rm d} \varepsilon = \int_{-1}^1 \frac{\lambda^\varepsilon + \lambda^{-\varepsilon}}{2} \, {\rm d} \varepsilon \overset{\text{(AM-GM)}}{\ge} 2,
\end{equation}
we find that
\[
\int_{-1}^1 \cdots \int_{-1}^1 K \, {\rm d}\varepsilon_1 \cdots {\rm d}\varepsilon_{i-1} \ge 2^{i-1} \exp\!\left( - \frac{10i^3}{n} \right),
\]
and therefore
\begin{equation}\label{ineq:fint}
\fint_{[-1,1]^{i-1}} K \, {\rm d}\varepsilon_1 \cdots {\rm d}\varepsilon_{i-1} \ge  \exp\!\left( - \frac{10i^3}{n} \right).
\end{equation}

Combining~\eqref{ineq:I-i-1} and~\eqref{ineq:fint}, we find that
\begin{align*}
&\sum_{(n_1, \ldots, n_i) \in \mathcal{R}} \binom{n}{n_1, \ldots, n_i}^2 \\
&\le \sum_{(n_1, \ldots, n_i) \in I^{i-1}} \Gamma(n+1)^2 \cdot \frac{\exp(10i^3/n)}{2^{i-1}} \cdot \int_{[-1,1]^{i-1}} \frac{{\rm d}\varepsilon_1 \cdots {\rm d}\varepsilon_{i-1}}{\Gamma(n_1+\varepsilon_1+1)^2 \cdots \Gamma(n_i + \varepsilon_i + 1)^2} \\
&\le \exp\!\left( \frac{10i^3}{n} \right) \int_{{\widetilde{I}}^{i-1}} \frac{\Gamma(n+1)^2}{\Gamma(y_1+1)^2 \cdots \Gamma(y_i+1)^2} {\rm d} y_1 \cdots {\rm d} y_{i-1},
\end{align*}
where $y_i := n - y_1 - \cdots - y_{i-1}$, and
\[
\widetilde{I} := \left[ \frac{n}{i} - n^{5/8} - 1, \frac{n}{i} + n^{5/8} + 1 \right].
\]
Using an exact approximation of the Gamma function again, we find that for $y_1, \ldots, y_{i-1} \in \widetilde{I}$, the integrand can be estimated as
\[
\frac{\Gamma(n+1)^2}{\Gamma(y_1+1)^2 \cdots \Gamma(y_i+1)^2} \le \frac{i^{2n+i}}{(2\pi n)^{i-1}} \exp\!\left( \frac{5i^3}{n^{1/8}} \right) \cdot \exp\!\left(-i(x_1^2 + \cdots + x_i^2)\right),
\]
where $x_1, \ldots, x_i$ are given by $y_j = \frac{n}{i} + x_j \sqrt{n}$. Therefore,
\[
\sum_{(n_1, \ldots, n_i) \in \mathcal{R}} \binom{n}{n_1, \ldots, n_i}^2 \le \exp\!\left( \frac{10i^3}{n} \right) \cdot \frac{i^{2n+i}}{(2\pi n)^{i-1}} \cdot \exp\!\left( \frac{5i^3}{n^{1/8}} \right) \cdot \sqrt{n}^{i-1} \cdot A,
\]
where
\[
A := \int_{\mathbb{R}^{n-1}} \exp\!\left( - i\left( x_1^2 + \cdots + x_{i-1}^2 + \left(-x_1-x_2-\cdots-x_{i-1}\right)^2 \right) \right) \, {\rm d} x_1 \cdots {\rm d} x_{i-1}.
\]
This integral $A$ has been computed explicitly in \cite{RS09}:
\[
A = \pi^{(i-1)/2} i^{-i/2}.
\]
Hence,
\begin{equation}\label{ineq:n1-ni-R-main}
\sum_{(n_1, \ldots, n_i) \in \mathcal{R}} \binom{n}{n_1, \ldots, n_i}^2 \le \exp\!\left( \frac{6i^3}{n^{1/8}} \right) \cdot \frac{i^{2n+i/2}}{(4\pi n)^{(i-1)/2}}.
\end{equation}
We finish by combining~\eqref{ineq:n1-ni-R'} and \eqref{ineq:n1-ni-R-main}. Note that we have actually derived a much tighter upper bound than the one in the lemma statement.
\end{proof}

\section{Open Questions}
Here we list some ideas for further investigation.

\subsection{Local reconstruction for the finite model}\label{subsec:local-finite}
In the finite model $P = \mathbb{Z}_{\ge 0}^{d+1}$, our present work does not answer when local, local convex, or single-vertex reconstruction is possible for general dimension $d$. Indeed, our work gives a partial answer that none of these types of reconstruction is possible if $(\alpha_1, \ldots, \alpha_{d+1})$ is inside the box
\[
[0,1] \times [0,1/2] \times \cdots \times [0,1/(d+1)].
\]
We also know (see the discussion at the beginning of Section~\ref{s:orthant}) that if $\alpha_1 > 1$, then single-vertex reconstruction is possible by simply taking $\zeta = X_{(t,0,\ldots,0)}$. What can we say about other cases of $(\alpha_1, \ldots, \alpha_{d+1})$?

\subsection{Better bounds for the rate of convergence}
In Subsection~\ref{subsec:in-box}, we showed that reconstruction is not possible by establishing
\[
S_t = \sup_\zeta \frac{\Cov(\zeta, X_0)}{\sqrt{\Var(\zeta)}} \le \frac{C}{\log^{(2d+2)}(t)}.
\]
(See Theorem~\ref{thm:corr-log-log-log-log-t}.) While this shows that $S_t \to 0$ as $t \to \infty$, we believe that the right-hand side above is not the correct rate of convergence.

In the critical case when $d = 1$, we saw in Theorem~\ref{thm:exact-formula} that $S_t \sim \sqrt[4]{\pi/t}$. What are the analogous asymptotic formulas in other cases?

\subsection{Explicit formulas}
Theorem~\ref{thm:exact-formula} gives the exact formula
\[
S_t = \frac{1}{\sqrt{\frac{t}{4^t}\binom{2t}{t} + 1}}
\]
in the critical case for $d = 1$ in the finite model. Furthermore, it gives the explicit formula for an optimal estimator $\zeta$ for the $t^{\text{th}}$ layer. (See Remark~\ref{rem:exact-zeta}.) It may be of interest to see similar exact computations for other cases.

\subsection{Reconstruction for other posets}
Our definition of reconstruction in Section~\ref{s:defns-main-results} works for other posets besides the two main models we consider in this paper. It may be interesting to ask whether each type of reconstruction is possible in other posets.

\subsection{Other model recurrences}
In the present paper, our model recurrence \eqref{eq:model-rec} combines the random variables from the parent nodes and the Gaussian noises additively. One might be interested in considering a different model recurrence where a different operation is used to compute the random variable at each node and then asking similar reconstruction questions.

\bigskip

\section*{Acknowledgments}
P.J. was supported by Elchanan Mossel's Vannevar Bush Faculty Fellowship ONR-N00014-20-1-2826 and by Elchanan Mossel's Simons Investigator award (622132). 
E.M. was partially supported by Bush Faculty Fellowship ONR-N00014-20-1-2826, by Simons Investigator award (622132) and by grant NSF DMS-2031883.

We would like to thank Chenghao Guo, Han Huang, Sung Woo Jeong, Sergei Korotkikh, Yury Polyanskiy, Thana Somsirivattana, Wijit Yangjit, and Yuan Yao for discussions. We thank Sorawee Porncharoenwase for useful discussions and computations. We used \texttt{Desmos}, \texttt{Maple}, \texttt{R}, \texttt{Racket}, \texttt{WolframAlpha}, and the Maple program \texttt{EKHAD} written by Doron Zeilberger to help with computations.

\bibliographystyle{alpha}
\bibliography{ref,my,all}

\end{document}